\tikzstyle{every picture}=[
\newcommand{\subalign}[1]{%
  \vcenter{%
    \Let@ \restore@math@cr \default@tag
    \baselineskip\fontdimen10 \scriptfont\tw@
    \advance\baselineskip\fontdimen12 \scriptfont\tw@
    \lineskip\thr@@\fontdimen8 \scriptfont\thr@@
    \lineskiplimit\lineskip
    \ialign{\hfil$\m@th\scriptstyle##$&$\m@th\scriptstyle{}##$\crcr
      #1\crcr
    }%
  }
}
\begin{document} 

  \title{Constrained Expressions and their Derivatives}
  
  \author{
    Jean-Marc Champarnaud \and
    Ludovic Mignot
    \and Florent Nicart
  } 

  \institute{
    LITIS, Universit\'e de Rouen, 76801 Saint-\'Etienne du Rouvray Cedex, France\\
     \email{\{jean-marc.champarnaud,ludovic.mignot,florent.nicart\}@univ-rouen.fr}\\
  }
  
  \maketitle
  
  %\centerline{\textbf{\today}}
  
  \begin{abstract} 
This paper proposes an extension to classical regular expressions by the addition of two operators allowing the inclusion of boolean formulae from the zeroth order logic. These expressions are called \emph{constrained expressions}. 
The associated language is defined thanks to the notion of interpretation and of realization.

We show that the language associated when both interpretation and realization are fixed is stricly regular and can be not regular otherwise.% and thus that constrained expressions have a greater expressive power.

    Furthermore, we use an extension of Antimirov partial derivatives in order to solve the membership test in the general case.
Finally, we show that once the interpretation is fixed, the membership test of a word in the language denoted by a constrained expression can be undecidable whereas it is always decidable when the interpretation is not fixed. 
  \end{abstract}

\section{Introduction}\label{se:int}

	Regular expressions are a convenient formalism to denote in a finite and concise way regular languages that are potentially infinite. Based on only three simple operators (sum, catenation and iteration), they are extremely easy to manipulate and are widely used in numerous domains, such as pattern matching, specification or schema validation. However, their expressive power is restricted to the class of regular languages and many attempts have been made to extend the class of their denoted languages while trying to keep their simplicity.
  
  Several approaches exist in order to make this expressive power larger; \emph{e.g.} by adding new operators~\cite{CDJM13} or by modifying the way symbols are combined~\cite{Brz68,Semp00}. In the latter case, the expressive power of the so-called regular-like expressions is increased to the linear languages, that is, a strict subclass of context-free languages in the Chomsky hierarchy~\cite{Cho56}.
  Other concepts have been added to expressions, such as the mechanism of capturing variables~\cite{CSY03,SL12}.
  
Our approach, although it is based on the introduction of two new operators, is quite different
These two operators, respectively $\mid$ and $\dashv$, establish a link with the first order logic without quantifiers (a.k.a. zeroth order logic), allowing us to easily describe non-regular languages, using both predicates and variables in order to evaluate  what we call \emph{constrained expressions}.

Given an expression $E$ and a boolean formula $\phi$, we define the expression $E\mid \phi$ ($E$ such that $\phi$) that denotes $L(E)$ when $\phi$ is satisfied and the empty set otherwise. Given a word $\alpha$, based on symbols and variables, and an expression $E$, we define the expression $\alpha \dashv E$ denoting $\alpha$ if it is in $L(E)$ and the empty set otherwise. The addition of these two operators allows us to go beyond regular languages.
Also, it turns out that constrained expressions allow us to implement  the concept of comprehension over regular expressions. Recall that the comprehension axiom can be stated as follows: for any set $A$, for any property $\phi$, there exists a set $B$ defined for any element $x$ by $x\in B$ $\Leftrightarrow$ $x\in A \wedge \phi(x)$.

  In our formalism, variables are used as a combination of the following two concepts.
  In~\cite{CSY03}, variables are used to formalize the notion of practical regular expressions (a.k.a. regex) with backreferences: in addition to the classical operators of regular expressions, variables can be used to submatch some parts of an expression. As an example, in the expression $E=(a^*)b\backslash 1$, the variable $\backslash 1$ is interpreted as a copy of the match of the expression $a^*$; therefore the language denoted by $E$ is the set $\{a^nba^n\mid n\in\mathbb{N}\}$, that is, not a regular language.
  In~\cite{SL12}, variables are used not to extend the representation power of expressions (since the denoted languages are still regular), but to efficiently solve the submatching problem, that is, to split a word according to the subexpressions of an expression it is denoted by.
  Our formalism is a different application of the concept of variables: we match particular subexpressions that we can repeat, we filter \emph{a posteriori} and we obtain languages which are not necessarily regular. As an example, let us consider the expression $((x\dashv a^*)\cdot (y\dashv b^*)\cdot (z\dashv c^*) \mid P(x,y,z) )$. This expression denotes the set of the words that can be written $w_1 w_2 w_3$ such that $w_1$ (resp. $w_2$, $w_3$) is only made of $a$ (resp. $b$, $c$) and such that the words $w_1$, $w_2$ and $w_3$ satisfy a property $P$. As an example, if $P$ is the property "$w_1$, $w_2$ and $w_3$ admit the same length", then the language (under this interpretation) is the set $\{a^nb^nc^n\mid n\in\mathbb{N}\}$.

  The aim of this paper is to define these new operators, and to show how to interpret them. We also show how to solve the membership problem, that is, to determine whether a given word belongs to the language denoted by a given constrained expression. 
  In order to perform this membership test, we present a mechanism that first associates to the variable some subword of the word to be matched, and then evaluates the boolean formula that may appear during the run.
  
  The membership test for classical regular expressions can be performed \emph{via} the computation of a finite state machine, an automaton. However, we cannot apply this technique here, since we deal with non-regular languages and therefore with infinite state machines. Nevertheless, we apply a well-known method, the expression derivation~\cite{Ant96,Brz64}, in order to reduce the membership problem of any word to the membership test of the empty word. Once this reduction made, we study the decidability of this problem.
  
  Section~\ref{se:pre} is a preliminary section where  we recall some basic definitions of formal language theory, such as languages or expression derivation. We also introduce the notion of zeroth-order logic that we use in the rest of the paper.
  Section~\ref{sec:cons expr lang and der} defines the constrained expressions and the different languages they may denote. We also define in this section the way we derive them.
  Section~\ref{sec:eps membership test} is devoted to illustrating the link between the empty word membership test and a satisfiability problem.
  In Section~\ref{sec:decidab}, we show that the satisfiability problem we use is decidable in the general case and that it is not in a particular subclass of our evaluations.

\section{Preliminaries}\label{se:pre}

  \subsection{Languages and Expressions}
  
  Let $\Sigma$ be an alphabet. We denote by $\Sigma^*$ the free monoid generated by $\Sigma$ with $\cdot$ the catenation product and $\varepsilon$ its identity. Any element in $\Sigma^*$ is called a \emph{word} and $\varepsilon$ the empty word. A \emph{language over} $\Sigma$ is a subset of $\Sigma^*$. 
  
  A language over $\Sigma$ is \emph{regular} if and only if it belongs to the family $\mathrm{Reg}(\Sigma)$ which is the smallest family containing all the subsets of $\Sigma$ and closed under the following three operations:
  \begin{itemize}
    \item \emph{union}: $L\cup L'=\{w\in\Sigma^*\mid w\in L\vee w\in L'\}$,
    \item \emph{catenation}: $L\cdot L'=\{w\cdot w'\in\Sigma^*\mid w\in L\wedge w'\in L'\}$,
    \item \emph{Kleene star}: $L^*=\{w_1\cdots w_k\in\Sigma^*\mid k\geq 0 \wedge \forall 1\leq j\leq k,\ w_j\in L\}$.
  \end{itemize}
  
  A \emph{regular expression} $E$ \emph{over} $\Sigma$ is inductively defined as follows:
  \begin{align*}
    E&=a, & E&=\varepsilon,& E&=\emptyset,\\    
    E&=(E_1)+(E_2),& E&=(E_1)\cdot (E_2),& E&=(E_1)^*,
  \end{align*}
  where $a$ is any symbol in $\Sigma$ and $E_1$ and $E_2$ are any two regular expressions over $\Sigma$. Parentheses can be omitted when there is no ambiguity.  The \emph{language denoted by} $E$ is the language $L(E)$ inductively defined by:
  \begin{align*}
    L(a)&=\{a\},& L(\varepsilon)&=\{\varepsilon\},& L(\emptyset)&=\emptyset,\\
    L(E_1+E_2)&=L(E_1)\cup L(E_2),& L(E_1\cdot E_2)&=L(E_1)\cdot L(E_2),& L(E_1^*)&=(L(E_1))^*,
  \end{align*}
  where $a$ is any symbol in $\Sigma$ and $E_1$ and $E_2$ are any two regular expressions over $\Sigma$. It is well known that language denoted by a regular expression is regular.
  
  Given a word $w$ and a language $L$, the \emph{membership problem} is the problem defined by "Does $w$ belong to $L$". Many methods exist in order to solve this problem: 
  %as far as regular languages, given by regular expressions, are concerned, it can be constructed in a polynomially time complexity w.r.t. the size of the expression a finite state machine, called an \emph{automaton}, that can decide in a polynomially time complexity w.r.t. its size if a word $w$ belongs to the language denoted by a regular expression~\cite{Glu60,IY03,MY60,Tho68}. See~\cite{GH14} for an exhaustive study of the construction and description complexities. 
  as far as regular languages, given by regular expressions, are concerned, a finite state machine, called an \emph{automaton}, can be constructed with a polynomial time complexity w.r.t. the size of the expression, that can decide with a polynomial time complexity w.r.t. the size of the expression if a word $w$ belongs to the language denoted by the expression~\cite{Glu60,IY03,MY60,Tho68}. See~\cite{GH14} for an exhaustive study of these constructions and of their descriptional complexities.
  
  As far as regular expressions are concerned, the computation of a whole automaton is not necessary; the very structure of regular expressions is sufficient. 
  %Considering the \emph{residual} $w^{-1}(L)$ of $L$ w.r.t. $w$ as the set 
  Considering the residual $w^{-1}(L)$ of $L$ w.r.t. $w$, that is, the set
  $\{w'\in\Sigma^*\mid ww'\in L\}$, the membership test $w\in L$ is equivalent to the membership test $\varepsilon \in w^{-1}(L)$. This operation of quotient can be performed directly through regular expressions using the \emph{partial derivation}~\cite{Ant96}, which is an extension of the derivation~\cite{Brz64}.
  
  \begin{definition}[\cite{Ant96}]
    Let $E$ be a regular expression over an alphabet $\Sigma$. The \emph{partial derivative of} $E$ w.r.t. a word $w$ in $\Sigma^*$ is the set $\partial_w(E)$ inductively defined as follows:  
  \begin{align*}
      \partial_w(E)=
        \begin{cases}
            \{E\} & \text{ if } w=\varepsilon,\\
            \{\varepsilon\} & \text{ if } E=w=a\in\Sigma,\\
            \emptyset & \text{ if } w=a\in\Sigma\wedge E\in\Sigma\setminus\{a\}\cup\{\emptyset,\varepsilon\},\\
            \partial_w(F)\cup \partial_w(G) & \text{ if } w=a\in\Sigma\wedge E=F+G,\\
            \partial_w(F)\odot G \cup (\partial_w(G)\mid \varepsilon\in L(F)) & \text{ if }w=a\in\Sigma\wedge E=F\cdot G,\\
            \partial_w(F)\odot F^* & \text{ if } w=a\in\Sigma\wedge E=F^*,\\
            \partial_{w'}(\partial_a(E)) & \text{ if } w=aw'\wedge a\in\Sigma\wedge w'\in\Sigma^*,
        \end{cases}
  \end{align*}
  where $F$ and $G$ are any two regular expressions over $\Sigma$ and where for any set $\mathcal{E}$ of regular expressions, for any regular expression $E'$, $\mathcal{E}\odot E'=\bigcup_{E\in\mathcal{E}} \{E\cdot E'\}$ and for any word $w'$ in $\Sigma^*$, $\partial_w(\mathcal{E})=\bigcup_{E\in\mathcal{E}}\partial_w(E)$.
  \end{definition}
  
  Furthermore, the membership test of $\varepsilon$ is syntactically computable. Considering the predicate $\mathrm{Null}(E)=(\varepsilon\in L(E))$, it can be checked that:
  \begin{align*}
      \mathrm{Null}(E)=
        \begin{cases}
            1 & \text{ if } E=\varepsilon,\\
            0 & \text{ if } E\in\Sigma\cup\{\emptyset\},\\
            \mathrm{Null}(F)\vee \mathrm{Null}(G) & \text{ if } E=F+G,\\
            \mathrm{Null}(F)\wedge \mathrm{Null}(G) & \text{ if }E=F\cdot G,\\
            1 & \text{ if } E=F^*,
         \end{cases}   
  \end{align*}
  where $F$ and $G$ are any two regular expressions over $\Sigma$.
    
  Consequently, from these definitions, the membership test of $w$ in $L(E)$ can be performed as follows:
  
  \begin{proposition}[\cite{Ant96}]
    Let $E$ be a regular expression over an alphabet $\Sigma$ and $w$ be a word in $\Sigma^*$. The following two conditions are equivalent:
    \begin{enumerate}
      \item $w\in L(E)$
      \item $\exists E'\in \partial_w(E)$, $\varepsilon\in L(E')$.
    \end{enumerate}
  \end{proposition}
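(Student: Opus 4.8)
The plan is to reduce the proposition to a statement purely about languages, namely that the partial derivative operator computes the residual (left quotient). Extending the language map to finite sets of expressions by $L(\mathcal{E}) = \bigcup_{E' \in \mathcal{E}} L(E')$, I would first establish the following single-symbol lemma: for every regular expression $E$ over $\Sigma$ and every symbol $a \in \Sigma$,
\[
  L(\partial_a(E)) = a^{-1}(L(E)).
\]
Once this is proved, the whole proposition follows almost mechanically, since condition~(2) asserts exactly that $\varepsilon \in L(\partial_w(E))$, and combining this with $L(\partial_w(E)) = w^{-1}(L(E))$ gives $\varepsilon \in w^{-1}(L(E))$, which by definition of the residual is condition~(1).

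First I would prove the single-symbol lemma by structural induction on $E$. The base cases ($E = \varepsilon$, $E = \emptyset$, $E = b$ with $b \ne a$, and $E = a$) are read off directly from the first three lines of the definition together with the facts that $a^{-1}(\{a\}) = \{\varepsilon\}$ and $a^{-1}(X) = \emptyset$ for the other atoms. The sum case uses that $a^{-1}$ distributes over union, matching $\partial_a(F+G) = \partial_a(F) \cup \partial_a(G)$. The two interesting inductive steps are catenation and star. For $E = F \cdot G$ I would argue that a word $aw'$ lies in $L(F)\cdot L(G)$ exactly when either the leading $a$ is consumed inside the $F$-factor (contributing $a^{-1}(L(F)) \cdot L(G)$) or $F$ contributes the empty word and $a$ is consumed inside $G$ (contributing $a^{-1}(L(G))$, but only when $\varepsilon \in L(F)$); this is precisely the splitting
\[
  a^{-1}(L(F)\cdot L(G)) = a^{-1}(L(F))\cdot L(G) \;\cup\; (a^{-1}(L(G)) \mid \varepsilon \in L(F)),
\]
which matches $\partial_a(F\cdot G)$ after applying the induction hypothesis and the identity $L(\mathcal{E} \odot G) = L(\mathcal{E}) \cdot L(G)$. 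For $E = F^*$ I would decompose any nonempty word of $L(F)^*$ according to its first nonempty factor to obtain $a^{-1}(L(F)^*) = a^{-1}(L(F)) \cdot L(F)^*$, again matching $\partial_a(F^*) = \partial_a(F) \odot F^*$.

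Then I would lift the lemma from symbols to words by induction on $|w|$. The case $w = \varepsilon$ is immediate, since $\partial_\varepsilon(E) = \{E\}$ and $\varepsilon^{-1}(L(E)) = L(E)$. For $w = aw'$ I would use the last line of the definition, $\partial_{aw'}(E) = \partial_{w'}(\partial_a(E))$, together with the residual identity $(aw')^{-1}(L) = (w')^{-1}(a^{-1}(L))$ and the induction hypothesis, to conclude $L(\partial_w(E)) = w^{-1}(L(E))$. Finally, chaining the equivalences $w \in L(E) \iff \varepsilon \in w^{-1}(L(E)) \iff \varepsilon \in L(\partial_w(E)) \iff \exists E' \in \partial_w(E),\ \varepsilon \in L(E')$ yields the proposition.

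The main obstacle is the catenation case of the single-symbol lemma: the guard $\varepsilon \in L(F)$ must be handled exactly, since omitting the second term when $F$ is nullable, or including it when $F$ is not, breaks the equality. The star case is comparatively routine, but it still relies on the observation that every nonempty word in $L(F)^*$ has a well-defined first nonempty factor, which is what licenses the factorisation through $a^{-1}(L(F))$.
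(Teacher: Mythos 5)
Your proposal is correct, and there is nothing in the paper to compare it against: the paper states this proposition as an imported result, citing Antimirov's work, and gives no proof of its own. Your argument --- extending $L$ to finite sets of expressions, proving $L(\partial_a(E)) = a^{-1}(L(E))$ by structural induction (with the nullability guard handled exactly in the catenation case and the first-nonempty-factor decomposition in the star case), lifting to words by induction on $|w|$ via $\partial_{aw'}(E)=\partial_{w'}(\partial_a(E))$, and then chaining the equivalences --- is the standard proof of this classical fact and is sound.
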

  
  Derivation and  partial derivation have already been used in order to perform the membership test over extensions of regular expressions~\cite{CCM11b,CCM12c,CCM14,CJM13}, expressions denoting non-necessarily regular languages~\cite{CDJM13}, guarded strings~\cite{ABM12} or even context-free grammars~\cite{MDS11}.  
  In the rest of this paper, 
  %we increase the number of operators that appear in regular expressions using boolean formulae 
  we extend regular expressions by introducing new operators based on boolean formulae  
  in order to increase the expressive  power of expressions. 
  %Let us recall 
  Let us first recall some well-known definitions of logic.

  \subsection{Zeroth-Order Logic}
  
  The notion of constrained expression introduced in this paper is expressed through the formalism of zeroth-order logic, that is, first order logic without quantifiers (see primitive recursive arithmetic in~\cite{Sko67} for an example of the difference between the expressiveness of propositional logic and zeroth-order logic).
  
  More precisely, we consider two $\mathbb{N}$-indexed families $\mathcal{F}=(\mathcal{F}_k)_{k\in\mathbb{N}}$ and $\mathcal{P}=(\mathcal{P}_k)_{k\in\mathbb{N}}$ of disjoint sets, where for any integer $k$ in $\mathbb{N}$, $\mathcal{F}_k$ is a set of $k$-ary function symbols and $\mathcal{P}_k$ is a set of $k$-ary predicate symbols. The family $\mathcal{F}$  is combined with a set of variables in order to obtain a set of terms. This set of terms is combined with the family $\mathcal{P}$ in order to obtain boolean formulae.
  
  Given a set $X$ of variables, a \emph{term} $t$ \emph{over} $(\mathcal{F},X)$ is inductively defined by:
  \begin{align*}
        t & =x \text{ with } x\in X,\\
        t & =\mathrm{f}(t_1,\ldots,t_k),
  \end{align*}
  where $k$ is any integer, $\mathrm{f}$ is any element in $\mathcal{F}_k$, $t_1,\ldots,t_k$ are any $k$ terms over $(\mathcal{F},X)$. 
    We denote by $\mathcal{F}(X)$ the set of the terms over $(\mathcal{F},X)$.
    
    A \emph{subterm} of a term $t$ is a term in the set $\mathrm{Subterm}(t)$ inductively computed as follows:
  \begin{align*}
        \mathrm{Subterm}(x) & =\{x\},\\
        \mathrm{Subterm}(\mathrm{f}(t_1,\ldots,t_k)) & =\{\mathrm{f}(t_1,\ldots,t_k)\}\cup \bigcup_{j\in\{1,\ldots,k\}}\mathrm{Subterm}(t_j),
  \end{align*}
  where $x$ is any element in $X$, $k$ is any integer, $\mathrm{f}$ is any function symbol in $\mathcal{F}_k$ and $t_1$, $\ldots$, $t_k$ are any $k$ terms in $\mathcal{F}(X)$.
     
    A \emph{boolean formula} $\phi$ \emph{over} $(\mathcal{P},\mathcal{F}(X))$ is inductively defined by:
  \begin{align*}
        \phi & =P(t_1,\ldots,t_k),\\
        \phi & =\mathrm{o}(\phi_1,\ldots,\phi_{k'}),
  \end{align*}
  where $k$ and $k'$ are any two integers, $P$ is any element in $\mathcal{P}_k$, $t_1,\ldots,t_k$ are any $k$ terms in $\mathcal{F}(X)$, $\mathrm{o}$ is any
    $k'$-ary boolean operator associated with a mapping from $\{0,1\}^{k'}$ to $\{0,1\}$
  and $\phi_1,\ldots,\phi_{k'}$ are any $k'$ boolean formulae over $(\mathcal{P},X)$. We denote by $\mathcal{P}(\mathcal{F}(X))$ the set of boolean formulae over $(\mathcal{P},\mathcal{F}(X))$.
  
  Given a formula $\phi$ in $\mathcal{P}(\mathcal{F}(X))$, a term $t$ in $\mathcal{F}(X)$ and a symbol $x$ in $X$, we denote by $\phi_{x\leftarrow t}$ the \emph{substitution of} $x$ by $t$ in $\phi$, which is the boolean formula inductively defined by:
  \begin{align*}
        (\mathrm{o}(\phi_1,\ldots,\phi_{k'}))_{x\leftarrow t} & =\mathrm{o}((\phi_1)_{x\leftarrow t},\ldots,(\phi_{k'})_{x\leftarrow t}),\\
        (P(t_1,\ldots,t_k)_{x\leftarrow t}) & =P((t_1)_{x\leftarrow t},\ldots,(t_k)_{x\leftarrow t}),
  \end{align*}
  where $k$ and $k'$ are any two integers, $P$ is any element in $\mathcal{P}_k$, $t_1,\ldots,t_k$ are any $k$ terms in $\mathcal{F}(X)$, $\mathrm{o}$ is any
    $k'$-ary boolean operator associated with a mapping from $\{0,1\}^{k'}$ to $\{0,1\}$, $\phi_1,\ldots,\phi_{k'}$ are any $k'$ boolean formulae over $(\mathcal{P},X)$ and where for any term $t'$ in $\mathcal{F}(X)$, $t'_{x\leftarrow t}$ is the \emph{substitution of} $x$ by $t$ in $t'$, which is the term inductively defined by:
  \begin{align*}
        (\mathrm{f}(t_1,\ldots,t_k))_{x\leftarrow t} & =\mathrm{f}((t_1)_{x\leftarrow t},\ldots,(t_k)_{x\leftarrow t}),\\
        y_{x\leftarrow t} & =
          \begin{cases}
      y & \text{ if }x\neq y,\\
      t & \text{ otherwise,}
      \end{cases}
  \end{align*}
  where $y$ is any element in $X$, $k$ is any integer, $\mathrm{f}$ is any function symbol in $\mathcal{F}_k$ and $t_1$, $\ldots$, $t_k$ are any $k$ terms in $\mathcal{F}(X)$.
  
  \begin{example}\label{ex formule bool}
    Let us consider the two families $\mathcal{F}=(\mathcal{F}_k)_{k\in\mathbb{N}}$ and $\mathcal{P}=(\mathcal{P}_k)_{k\in\mathbb{N}}$ defined by
  \begin{align*}
        \mathcal{F}_k&=
          \begin{cases}
            \{f\} & \text{ if }k=1,\\
            \{g,h\} & \text{ if }k=2,\\
            \emptyset & \text{ otherwise}
          \end{cases},&
        \mathcal{P}_k&=
          \begin{cases}
            \{P\} & \text{ if }k=1,\\
            \{Q,R\} & \text{ if }k=2,\\
            \emptyset & \text{ otherwise,}
          \end{cases}
  \end{align*}
  and the set $X=\{x,y,z\}$.        
        As an example, the set of boolean formulae over $(\mathcal{P},\mathcal{F}(X))$ contains the formulae:
    \begin{itemize}
      \item $\phi_1=P(x) \vee Q(x,z)$,
      \item $\phi_2=Q(y,f(x)) \wedge R(h(z,z),g(f(y),z))$,
      \item $\phi_3=\neg P(f(g(x,x)))$,
      \item $\phi_4=\underline{?:} (\phi_1,\phi_2,\phi_3)$, where $\underline{?:}$ is the ternary  boolean operator corresponding to the \emph{If-Then-Else}-like conditional expression, generally written $(\phi_1\ ?\ \phi_2\ :\ \phi_3)$.
    \end{itemize}
    \qed
  \end{example}
  
  After having defined the syntactic part of the logic formulae we use, we show how to evaluate these formulae, \emph{i.e.} how to define the semantics of the logic formulae. The boolean evaluation of a formula is performed in two steps. First, an \emph{interpretation} defines a domain and associates the function and predicate symbols with functions; then, each variable symbol is associated with a value from the domain by a \emph{realization}. 
  
  \begin{definition}[Interpretation]\label{def interpretation}
    Let $\mathcal{F}=(\mathcal{F}_k)_{k\in\mathbb{N}}$ and $\mathcal{P}=(\mathcal{P}_k)_{k\in\mathbb{N}}$ be two families of disjoint sets. An \emph{interpretation} $I$ over $(\mathcal{P},\mathcal{F})$ is a tuple $(\mathfrak{D},\mathfrak{F})$ where:
    \begin{itemize}
      \item $\mathfrak{D}$ is a set, called the \emph{interpretation domain} of $I$,
      \item $\mathfrak{F}$ is a function:
        \begin{itemize}
          \item from $\mathcal{P}_k$ to $2^{\mathfrak{D}^k}$,
          \item and from $\mathcal{F}_k$ to $2^{\mathfrak{D}^{k+1}}$ such that for any function symbol $\mathrm{f}$ in $\mathcal{F}_k$, for any two elements $(e_1,\ldots,e_k,e_{k+1})$ and $(e'_1,\ldots,e'_k,e'_{k+1})$ in $\mathfrak{F}(\mathrm{f})$, $(e_1,\ldots,e_k)=(e'_1,\ldots,e'_k)$ $\Rightarrow$ $e_{k+1}=e'_{k+1}$, and such that for any $k$ elements $(e_1,\ldots,e_k)$ in $\mathfrak{D}^k$, there exists $e_{k+1}$ in $\mathfrak{D}$ such that $(e_1,\ldots,e_k,e_{k+1})\in \mathfrak{F}(\mathrm{f})$  
        \end{itemize}
        called the \emph{interpretation function}.
    \end{itemize}
  \end{definition}
  
  \begin{definition}[Realization]
    Let $\mathcal{P}$ and $\mathcal{F}$ be two families of disjoint sets, and $I=(\mathfrak{D},\mathfrak{F})$ an interpretation over $(\mathcal{P},\mathcal{F})$. Let $X$ be a set. An $X$-\emph{realization} $\mathrm{r}$ over $I$ is a function from $X$ to $\mathfrak{D}$.
  \end{definition}
  
  Once an interpretation $I$ and a realization $\mathrm{r}$ given, a term can be evaluated as an element of the domain and a formula as a boolean \emph{via} the function $\mathrm{eval}_{(I,\mathrm{r})}$, the $(I,\mathrm{r})$\emph{-evaluation}:
  
  \begin{definition}[Term Evaluation]
    Let $\mathcal{P}$ and $\mathcal{F}$ be two families of disjoint sets and $I=(\mathfrak{D},\mathfrak{F})$ an interpretation over $(\mathcal{P},\mathcal{F})$. Let $X$ be a set. Let $\mathrm{r}$ be an $X$-realization over $I$. Let $t$ be a term in $\mathcal{F}(X)$. The $(I,\mathrm{r})$-\emph{evaluation} of $t$ is the element $\mathrm{eval}_{I,\mathrm{r}}(t)$ in $\mathfrak{D}$ defined by:
    \begin{align*}
      \mathrm{eval}_{I,\mathrm{r}}(t)&=
        \begin{cases}
            \mathrm{r}(x) & \text{ if }t=x\wedge x\in X,\\
            x_{k+1} & \text{ if }t=\mathrm{f}(t_1,\ldots,t_k)\ \wedge\ (\mathrm{eval}_{I,\mathrm{r}}(t_1),\ldots,\mathrm{eval}_{I,\mathrm{r}}(t_k),x_{k+1})\in \mathfrak{F}(\mathrm{f})\\
          \end{cases}
    \end{align*}
    where $k$ is any integer, $\mathrm{f}$ is any function symbol in $\mathcal{F}_k$, and $t_1,\ldots,t_k$ are any $k$ elements in $\mathcal{F}(X)$.
  \end{definition}
  
  \begin{definition}[Formula Evaluation]
    Let $\mathcal{P}$ and $\mathcal{F}$ be two families of disjoint sets and $I=(\mathfrak{D},\mathfrak{F})$ an interpretation over $(\mathcal{P},\mathcal{F})$. Let $X$ be a set. Let $\mathrm{r}$ be an $X$-realization over $I$. Let $t$ be a term in $\mathcal{F}(X)$. Let $\phi$ be a boolean formula in $\mathcal{P}(\mathcal{F}(X))$. The $(I,\mathrm{r})$-\emph{evaluation} of $\phi$ is the boolean $\mathrm{eval}_{I,\mathrm{r}}(\phi)$ inductively defined by:
  \begin{align*}
        \mathrm{eval}_{I,\mathrm{r}}(P(t_1,\ldots,t_k)) & =
          \begin{cases}
            1 & \text{ if } (\mathrm{eval}_{I,\mathrm{r}}(t_1),\ldots,\mathrm{eval}_{I,\mathrm{r}}(t_k)) \in \mathfrak{F}(P),\\
            0 & \text{ otherwise,}
          \end{cases}\\
        \mathrm{eval}_{I,\mathrm{r}}(\mathrm{o}(\phi_1,\ldots,\phi_k)) & =\mathrm{o}'(\mathrm{eval}_{I,\mathrm{r}}(\phi_1),\ldots,\mathrm{eval}_{I,\mathrm{r}}(\phi_k)),
  \end{align*}
  where $k$ is any integer, $P$ is any predicate symbol in $\mathcal{P}_k$, $t_1,\ldots,t_k$ are any $k$ elements in $\mathcal{F}(X)$, $\mathrm{o}$ is any $k$-ary boolean operator associated with a mapping $\mathrm{o}'$ from $\{0,1\}^k$ to $\{0,1\}$ and $\phi_1,\ldots,\phi_k$ are any $k$ boolean formulae over $(\mathcal{P},X)$.
  \end{definition}

  \begin{example}\label{ex inter real}
    Let us consider Example~\ref{ex formule bool}. Let $\Sigma=\{a,b\}$ be an alphabet. Let $I$ be the interpretation $(\mathfrak{D},\mathfrak{F})$ over $(\mathcal{P},\mathcal{F})$ and $\mathrm{r}$ be the $X$-realization over $I$ defined by:
  \begin{align*}
          \mathfrak{D} & =\Sigma^*, & \mathfrak{F}(g) & =\{(w_1,w_2,w_1\cdot w_2)\in \mathfrak{D}^3\},\\
          \mathfrak{F}(P) & =\{w\in \mathfrak{D}\mid w\neq\varepsilon\}, & \mathfrak{F}(h) & =\{(w_1,w_2,w_1w_1w_2)\in \mathfrak{D}^3\},\\
          \mathfrak{F}(Q) & =\{(w_1,w_2)\in \mathfrak{D}^2\mid w_1=w_2\},&\mathrm{r}(x) & =aa,\\
          \mathfrak{F}(R) & =\{(w_1,w_2)\in \mathfrak{D}^2\mid w_1=\mathrm{rev}(w_2)\},&\mathrm{r}(y) & =bb,\\
          \mathfrak{F}(f) & =\{(w_1,\mathrm{rev}(w_1))\in \mathfrak{D}^2\},&\mathrm{r}(z) & =\varepsilon,
  \end{align*}
  where for any word $w$ in $\Sigma^*$, $\mathrm{rev}(w)$ is the word defined by:
  \begin{align*}
      \mathrm{rev}(w)=
        \begin{cases}
            w & \text{ if } w=\varepsilon,\\
            \mathrm{rev}(w')x & \text{ if }w=xw'\wedge w'\in\Sigma^*\wedge x\in\Sigma.
          \end{cases}
  \end{align*}
  Then:    
    \begin{align*}
      \mathrm{eval}_{(I,\mathrm{r})}(\phi_1) & =\mathrm{eval}_{(I,\mathrm{r})}(P(x) \vee Q(x,z))\\
      & =(aa\neq\varepsilon)\ \mathrm{Or}\ (aa==\varepsilon)\\
      & =1\\
      \mathrm{eval}_{(I,\mathrm{r})}(\phi_2) & =\mathrm{eval}_{(I,\mathrm{r})}(Q(y,f(x)) \wedge R(h(z,z),g(f(y),z))\\
      & =(bb==aa)\ \mathrm{And}\ (\varepsilon == \mathrm{rev}(bb))\\
      & =0\\      
      \mathrm{eval}_{(I,\mathrm{r})}(\phi_3) & =\mathrm{eval}_{(I,\mathrm{r})}(\neg P(f(g(x,x))))\\
      & =\mathrm{Not}\ (aaaa\neq\varepsilon)\\
      & =0\\      
      \mathrm{eval}_{(I,\mathrm{r})}(\phi_4) & =(\mathrm{eval}_{(I,\mathrm{r})}(\phi_1)\ \mathrm{Implies}\ \mathrm{eval}_{(I,\mathrm{r})}(\phi_2))\\
      & \ \ \ \ \ \mathrm{And}\ (\mathrm{Not}\ (\mathrm{eval}_{(I,\mathrm{r})}(\phi_1))\ \mathrm{Implies}\ \mathrm{eval}_{(I,\mathrm{r})}(\phi_3))\\
      & =(1\ \mathrm{Implies}\  0)\ \mathrm{And}\ (0\ \mathrm{Implies}\ 0)\\
      & =0
    \end{align*}
    \qed       
  \end{example}

\section{Constrained Expressions, their Languages and Derivatives}\label{sec:cons expr lang and der}

  In this section, zeroth-order logic is combined with classical regular expressions in order to define \emph{constrained expressions}. The language denoted by these expressions is not necessarily regular. We extend the membership problem for constrained expressions using partial derivatives and then show that it is equivalent to a satisfiability problem.

\subsection{Constrained Expressions and their Languages}

  Whereas regular expressions are defined over a unique symbol alphabet, constrained expressions deal with zeroth-order logic and therefore include function, predicate and variable symbols. Hence the notion of alphabet is extended to the notion of \emph{expression environment} in order to take into account
all these symbols.
  
  \begin{definition}[Expression Environment]
    An \emph{expression environment} is a $4$-tuple $(\Sigma,\Gamma,\mathcal{P},\mathcal{F})$ where:
    \begin{itemize}
      \item $\Sigma$ is an alphabet, called the \emph{symbol alphabet},
      \item $\Gamma$ is an alphabet, called the \emph{variable alphabet},
      \item $\mathcal{P}$ is a $\mathbb{N}$-indexed family of disjoint sets, called the \emph{family of predicate symbols},
      \item $\mathcal{F}$ is a $\mathbb{N}$-indexed family of disjoint sets, called the \emph{family of function symbols} such that $\Sigma\cup\{\varepsilon\}\subset\mathcal{F}_0$ and $\{\cdot\} \in \mathcal{F}_2$.
    \end{itemize}
  \end{definition}
  
  Once this environment stated, we can syntactically define the set of \emph{constrained expressions}, by adding two new operators to regular operators: the first operator, $\mid$, is based on the combination of an expression $E$ and of a boolean formula $\phi$, producing the expression $E\mid \phi$; the second operator, $\dashv$, links a word $\alpha$ composed of variable and letter symbols to an expression $E$, producing the expression $\alpha \dashv E$.
  Notice that the following definitions use extended boolean operators, such as intersection or negation. However, we will extend the membership test for expressions only using the sum operator.

  \begin{definition}[Constrained Expression]
    Let $\mathcal{E}=(\Sigma,\Gamma,\mathcal{P},\mathcal{F})$ be an expression environment. A \emph{constrained expression} $E$ \emph{over} $\mathcal{E}$ is inductively defined by:
    \begin{align*}
      E&=\alpha,&&& E&=\emptyset,\\
      E&=\mathrm{o}(E_1,\ldots,E_k),& E&=(E_1)\cdot (E_2),& E&=(E_1)^*,\\
      E&=(E_1)\mid(\phi),&&& E&=(\alpha) \dashv (E_1),
    \end{align*}
    where $k$ is any integer, $\mathrm{o}$ is any $k$-ary boolean operator associated with a mapping from $\{0,1\}^k$ to $\{0,1\}$, $E_1,\ldots,E_k$ are any $k$ constrained expressions over $\mathcal{E}$, $\alpha$ is any word in $(\Sigma\cup \Gamma)^*$ and $\phi$ is a boolean formula in $\mathcal{P}(\mathcal{F}(\Gamma))$.
  \end{definition}  
\noindent  
  Parenthesis can be omitted when there is no ambiguity.
  
  Any boolean formula that appears in a constrained expression over an environment $\mathcal{E}=(\Sigma,\Gamma,\mathcal{P},\mathcal{F})$ is, by definition, a formula in $\mathcal{P}(\mathcal{F}(\Gamma))$. Furthermore, since we want a constrained expression to denote  a subset of $\Sigma^*$, variable symbols in $\Gamma$ have to be evaluated as words in $\Sigma^*$. Moreover, classical symbols, like $\varepsilon$ or $a$ in $\Sigma$, have to be considered as $0$-ary functions in the interpretation. All these considerations imply some specializations of the notions of
interpretation and realization, defined as follows.
  
  \begin{definition}[Expression Interpretation]
    Let $\mathcal{E}=(\Sigma,\Gamma,\mathcal{P},\mathcal{F})$ be an expression environment. An \emph{expression interpretation} $I$ over $\mathcal{E}$ is an interpretation $(\mathfrak{D},\mathfrak{F})$ over $(\mathcal{P},\mathcal{F})$ satisfying the following three conditions:
    \begin{enumerate}
      \item $\mathfrak{D}=\Sigma^*$,
      \item for any symbol $\alpha$ in $\Sigma\cup\{\varepsilon\}\subset \mathcal{F}_0$, $\mathfrak{F}(\alpha)=\{\alpha\}$, 
      \item $\mathfrak{F}(\cdot)=\{(u,v,w)\in(\Sigma^*)^3\mid w=uv\}$.
    \end{enumerate}
  \end{definition} 
  
  The two new operators appearing in a constrained expression are used to extend the expressive power of regular expressions. The expression $E\mid\phi$ denotes the set of words that $E$ may denote whenever the formula $\phi$ is satisfied. The expression $\alpha \dashv E$ denotes the set of words that $E$ may denote and that can be "matched" by $\alpha$. In order to perform this matching, we extend any $\Gamma$-realization over an expression interpretation as a morphism from $(\Sigma\cup\Gamma)^*$ to $\Sigma^*$. % as follows.
  
    Let $\mathcal{E}=(\Sigma,\Gamma,\mathcal{P},\mathcal{F})$ be an expression environment. Let $I$ be an expression interpretation over $\mathcal{E}$ and $\mathrm{r}$ be a $\Gamma$-realization over $I$. The domain of the realization $\mathrm{r}$ can be extended to $(\Sigma\cup\Gamma)^*$ as follows. For any word $\alpha$ in $(\Sigma\cup\Gamma)^*$, $\mathrm{r}(\alpha)$ is the word in $\Sigma^*$ inductively computed by:
      \begin{align*}
      \mathrm{r}(\alpha)=
        \begin{cases}
          \varepsilon & \text{if } \alpha=\varepsilon,\\
          a\mathrm{r}(\alpha') & \text{if } \alpha=a\alpha'\ \wedge a\in \Sigma,\\
          \mathrm{r}(x)\mathrm{r}(\mathrm{\alpha'}) & \text{ if } \alpha=x\alpha' \ \wedge\ x\in\Gamma.
        \end{cases}
      \end{align*}
    
  Using this extension, we can now formally define the different languages that a constrained expression may denote. We consider the following three cases where first both the interpretation and
the realization are fixed, then only the interpretation is fixed
and finally nothing is fixed.
  
  \begin{definition}[(I,r)-Language]\label{def i r lang}
    Let $\mathcal{E}=(\Sigma,\Gamma,\mathcal{P},\mathcal{F})$ be an expression environment. Let $I$ be an expression interpretation over $\mathcal{E}$ and $\mathrm{r}$ be a $\Gamma$-realization over $I$. Let $E$ be a constrained expression over $\mathcal{E}$.
    The $(I,\mathrm{r})$-\emph{language denoted by} $E$ is the language $L_{I,\mathrm{r}}(E)$ inductively defined by:
    \begin{align*}
      L_{I,\mathrm{r}}(\alpha)&=\{\mathrm{r}(\alpha)\},\\
      L_{I,\mathrm{r}}(\emptyset)&=\emptyset,\\
      L_{I,\mathrm{r}}(\alpha\dashv E_1)&=\{\mathrm{r}(\alpha) \mid \mathrm{r}(\alpha)\in L_{I,\mathrm{r}}(E_1)\},\\
      L_{I,\mathrm{r}}(\mathrm{o}(E_1,\ldots,E_k))&=\mathrm{o}'(L_{I,\mathrm{r}}(E_1),\ldots,L_{I,\mathrm{r}}(E_k)),\\
      L_{I,\mathrm{r}}(E_1\cdot E_2)&=L_{I,\mathrm{r}}(E_1)\cdot L_{I,\mathrm{r}}(E_2),\\
      L_{I,\mathrm{r}}(E_1^*)&=(L_{I,\mathrm{r}}(E_1))^*,\\
      L_{I,\mathrm{r}}(E_1 \mid \phi)&=
        \begin{cases}
            L_{I,\mathrm{r}}(E_1) & \text{ if } \mathrm{eval}_{(I,\mathrm{r})}(\phi),\\
            \emptyset & \text{otherwise,}
        \end{cases}
    \end{align*}
    where $k$ is any integer, $\mathrm{o}$ is any $k$-ary boolean operator, $\mathrm{o}'$ is the language operator associated with $\mathrm{o}$, $E_1,\ldots,E_k$ are any $k$ constrained expression over $\mathcal{E}$, $\alpha$ is any word in $(\Sigma\cup \Gamma)^*$ and $\phi$ is any boolean formula in $\mathcal{P}(\mathcal{F}(\Gamma))$.
  \end{definition}
  
  We denote by $\mathrm{Real}_\Gamma(I)$ the set of the $\Gamma$-realizations over an interpretation $I$.
    
  \begin{definition}[I-Language]
    Let $\mathcal{E}=(\Sigma,\Gamma,\mathcal{P},\mathcal{F})$ be an expression environment. Let $E$ be a constrained expression over $\mathcal{E}$. Let $I$ be an expression interpretation over $\mathcal{E}$. The $I$-\emph{language denoted by} $E$ is the language $L_{I}(E)$ defined by:
    \begin{align*}
      L_{I}(E)=\bigcup_{\mathrm{r}\in \mathrm{Real}_\Gamma(I)}L_{I,\mathrm{r}}(E).
    \end{align*}
  \end{definition}
  
  Given an expression environment $\mathcal{E}=(\Sigma,\Gamma,\mathcal{P},\mathcal{F})$, we denote by $\mathrm{Int}(\mathcal{E})$ the set of the expression interpretations over $\mathcal{E}$.
    
  \begin{definition}[Language]
    Let $\mathcal{E}=(\Sigma,\Gamma,\mathcal{P},\mathcal{F})$ be an expression environment. Let $E$ be a constrained expression over $\mathcal{E}$. The \emph{language denoted by} $E$ is the language $L(E)$ defined by:
    \begin{align*}
      L(E)=\bigcup_{I\in \mathrm{Int}(\mathcal{E})}L_{I}(E).
    \end{align*}
  \end{definition}
  
  \begin{example}\label{ex exp cons lang}
    Let $ \mathcal{E}=(\Sigma,\Gamma,\mathcal{P},\mathcal{F})$ be the expression environment defined by:
    \begin{itemize}
      \item $\Sigma=\{a,b,c\}$,
      \item $\Gamma=\{x,y,z\}$,
      \item $\mathcal{P}=\mathcal{P}_2=\{\lessdot,\sim\}$,
      \item $\mathcal{F}_0=\Sigma\cup\{\varepsilon\}$, $\mathcal{F}_1=\{\mathrm{f}\}$, $\mathcal{F}_2=\{\cdot\}$.
    \end{itemize}    
    Let us consider the constrained expressions $E_1=xb^*y\mid \mathopen{\sim}(\mathrm{f}(x),\mathrm{f}(y))$ and $E_2= (ab)^*x \dashv (yx\mid \lessdot (y,x))$. 
    Let $\mathrm{I}=(\Sigma^*,\mathfrak{F}_1)$ be the expression interpretation defined by:
    \begin{itemize}
      \item $\mathfrak{F}(\lessdot)=\{(u,v)\mid |u|\leq |v| \}$,
      \item $\mathfrak{F}(\sim)=\{(u,u)\}$,
      \item $\mathfrak{F}(\alpha)=\{\alpha\}$, for any $\alpha$ in $\mathcal{F}_0$, 
      \item $\mathfrak{F}(\mathrm{f})=\{(u,a^{|u|_a})\}$
      \item $\mathfrak{F}(\cdot)=\{(u,v,u\cdot v)\}$.
    \end{itemize}
    
    In other words, the evaluation of an expression w.r.t. $\mathrm{I}_1$ considers that:
    \begin{itemize}
      \item $\lessdot(u,v)$ is true if and only if $u$ is shorter than $v$,
      \item $\sim(u,v)$ is true if and only if $u=v$,
      \item $\mathrm{f}$ is a function that changes all the symbols in $u$ that are different from $a$ into a symbol $\varepsilon$.
    \end{itemize}
    
    By abuse of notation, let us syntactically apply the interpretation as follows:
    \begin{align*}
      [E_1]_{\mathrm{I}}&=xb^*y\mid a^{|x|_a}=a^{|y|_a},& [E_2]_{\mathrm{I}}&=(ab)^*x \dashv (yx\mid |y|\leq |x|)
    \end{align*}
    
    Let us consider the $\mathrm{I}$-languages denoted by these expressions:
    \begin{itemize}
      \item $L_\mathrm{I}(E_1)$ is the set of words $ub^nv$ with $n\geq 0$ and $|u|_a=|v|_a$,
      \item $L_\mathrm{I}(E_2)$ is the set of words $(ab)^n u$ with $n\geq 0$ and $2n\leq |u|$. 
    \end{itemize}
    
    As an example, the word $ababbbaa$ belongs to:
    \begin{itemize}
      \item $L_\mathrm{I}(E_1)$ since it can be obtained by considering the realization $\mathrm{r}_1$ associating $aba$ with $x$ and $aa$ with $y$: 
    \begin{align*}
        [E_1]_{\mathrm{I},\mathrm{r}_1}=aba b^*aa \mid (a^{2}=a^{2}) 
    \end{align*}
that is equivalent to  $abab^*aa$,
      \item $L_\mathrm{I}(E_2)$ since it can be obtained by considering the realization $\mathrm{r}_2$ associating $bbaa$ with $x$ and $abab$ with $y$: 
    \begin{align*}
        [E_2]_{\mathrm{I},\mathrm{r}_2}=(ab)^*bbaa \dashv (ababbbaa\mid 4\leq 4)
    \end{align*}
    that is equivalent to $(ab)^*bbaa \dashv (ababbbaa)$ and finally to $ababbbaa$.
    \end{itemize}
    
    In other words, the word $ababbbaa$ belongs to $L_\mathrm{I,\mathrm{r}_1}(E_1)\subset L_\mathrm{I}(E_1)$ and to $L_\mathrm{I,\mathrm{r}_2}(E_2)\subset L_\mathrm{I}(E_2)$.
    
    Notice that the word $ababbbaa$ is not in the $(\mathrm{I},\mathrm{r}_2)$-language denoted by $[E_1]_{\mathrm{I},\mathrm{r}_2}=bbaab^*abab\mid a^{2}=a^{2}$ that is equivalent to $bbaab^*abab$ nor in the $(\mathrm{I},\mathrm{r}_1)$-language denoted by $[E_2]_{\mathrm{I},\mathrm{r}_1}=(ab)^*aba \dashv (aaaba\mid 2\leq 3)$ that is equivalent to $(ab)^*aba \cap aaaba$ and finally to $\emptyset$. 
  \end{example}
    \begin{example}\label{ex anbncn}
      Let us consider the expression environment $\mathcal{E}$ of Example~\ref{ex exp cons lang}.
      Let $E$ be the constrained expression defined as follows:
      \begin{align*}
        E &= ((x\dashv a^*)\cdot (y\dashv b^*)\cdot (z\dashv c^*))\mid \sim(x,y)\wedge \sim (y,z)
      \end{align*}
      Let us consider an expression interpretation $J=(\Sigma^*,\mathfrak{G})$ that satisfies
      \begin{align*}
        \mathfrak{G}(\sim)&=\{(w_1,w_2)\in\Sigma^*\mid |w_1|=|w_2|\}
      \end{align*}
      Then
      \begin{align*}
        L_J(E) &=\{w_aw_bw_c \in\Sigma^*\mid (\forall \alpha\in\{a,b,c\}, w_\alpha\in\alpha^*) \wedge |w_a|=|w_b|=|w_c|\}\\
               &=\{a^nb^nc^n\mid n\in\mathbb{N}\}
      \end{align*}
    \end{example}

  \subsection{The $(I,r)$-Language of a Constrained Expression is Regular}
  
  Whenever an interpretation and a realization are fixed, the language denoted by a constrained expression is a regular one. The proof is based on the computation of an equivalent regular expression.
  
  \begin{definition}[Regularization]\label{def i r reg exp}
    Let $\mathcal{E}=(\Sigma,\Gamma,\mathcal{P},\mathcal{F})$ be an expression environment and $E$ be a constrained expression over $\mathcal{E}$. Let $I$ be an expression interpretation over $\mathcal{E}$ and $\mathrm{r}$ be a $\Gamma$-realization over $I$. The $(I,\mathrm{r})$-\emph{regularization} of $E$ is the regular expression $\mathrm{reg}_{I,\mathrm{r}}(E)$ inductively defined as follows:
    \begin{align*}
      \mathrm{reg}_{I,\mathrm{r}}(\alpha)&=\mathrm{r}(\alpha),\\
      \mathrm{reg}_{I,\mathrm{r}}(\emptyset)&=\emptyset,\\
      \mathrm{reg}_{I,\mathrm{r}}(\mathrm{o}(E_1,\ldots,E_k))&=\mathrm{o}(\mathrm{reg}_{I,\mathrm{r}}(E_1),\ldots,\mathrm{reg}_{I,\mathrm{r}}(E_k)),\\
      \mathrm{reg}_{I,\mathrm{r}}(E_1\cdot E_2)&=\mathrm{reg}_{I,\mathrm{r}}(E_1)\cdot \mathrm{reg}_{I,\mathrm{r}}(E_2),\\
      \mathrm{reg}_{I,\mathrm{r}}(E_1^*)&=\mathrm{reg}_{I,\mathrm{r}}(E_1)^*,\\
      \mathrm{reg}_{I,\mathrm{r}}(E_1\mid\phi)&=
        \begin{cases}
             \mathrm{reg}_{I,\mathrm{r}}(E_1) & \text{ if } \mathrm{eval}_{I,\mathrm{r}}(\phi),\\
             \emptyset & \text{ otherwise,}
        \end{cases}\\
      \mathrm{reg}_{I,\mathrm{r}}(\alpha \dashv E_1)&= \mathrm{r}(\alpha) \cap \mathrm{reg}_{I,\mathrm{r}}(E_1),
    \end{align*}
    where $k$ is any integer, $\mathrm{o}$ is any $k$-ary boolean operator associated with a mapping from $\{0,1\}^k$ to $\{0,1\}$, $E_1,\ldots,E_k$ are any $k$ constrained expressions over $\mathcal{E}$, $\alpha$ is any word in $(\Sigma\cup \Gamma)^*$ and $\phi$ is a boolean formula in $\mathcal{P}(\mathcal{F}(\Gamma))$.  
  \end{definition}
  
  \begin{proposition}
    Let $\mathcal{E}=(\Sigma,\Gamma,\mathcal{P},\mathcal{F})$ be an expression environment and $E$ be a constrained expression over $\mathcal{E}$. Let $I$ be an expression interpretation over $\mathcal{E}$ and $\mathrm{r}$ be a $\Gamma$-realization over $I$. Then:
    \begin{align*}
      L_{I,\mathrm{r}}(E)=L(\mathrm{reg}_{I,\mathrm{r}}(E)).
    \end{align*}
  \end{proposition}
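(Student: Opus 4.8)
The plan is to proceed by structural induction on the constrained expression $E$, exploiting the fact that the two maps $L_{I,\mathrm{r}}$ and $\mathrm{reg}_{I,\mathrm{r}}$ are defined by parallel recursions over the same grammar. For each syntactic form of $E$ I would establish that $L_{I,\mathrm{r}}(E)$ and $L(\mathrm{reg}_{I,\mathrm{r}}(E))$ coincide, using the induction hypothesis on the immediate subexpressions together with the semantics of $L(\cdot)$ on regular expressions. Since the clauses defining $\mathrm{reg}_{I,\mathrm{r}}$ were chosen precisely to mirror those defining $L_{I,\mathrm{r}}$, every case amounts to checking that the corresponding clause of $L(\cdot)$ realizes the intended operation.

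For the base cases, when $E=\emptyset$ both sides equal $\emptyset$ immediately. When $E=\alpha$ with $\alpha\in(\Sigma\cup\Gamma)^*$, I would observe that $\mathrm{reg}_{I,\mathrm{r}}(\alpha)=\mathrm{r}(\alpha)$ is a word in $\Sigma^*$ read as a regular expression (the catenation of its symbols), so that $L(\mathrm{r}(\alpha))=\{\mathrm{r}(\alpha)\}=L_{I,\mathrm{r}}(\alpha)$. This is the one spot where I must be explicit about the abuse of notation that silently coerces a word into a regular expression.

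The inductive steps for the regular operators are routine. For $\mathrm{o}(E_1,\ldots,E_k)$, $E_1\cdot E_2$ and $E_1^*$, the definition of $\mathrm{reg}_{I,\mathrm{r}}$ pushes the operator through unchanged, and the matching clause of $L(\cdot)$ combined with the induction hypothesis $L_{I,\mathrm{r}}(E_i)=L(\mathrm{reg}_{I,\mathrm{r}}(E_i))$ closes each case; for the boolean-operator case I rely on $\mathrm{o}'$ being the language operator associated with $\mathrm{o}$, so the semantics of the extended regular expression $\mathrm{o}(F_1,\ldots,F_k)$ is exactly $\mathrm{o}'(L(F_1),\ldots,L(F_k))$.

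The two new operators need a little extra bookkeeping, which is where the only real care is required. For $E_1\mid\phi$, both maps branch on the single boolean $\mathrm{eval}_{I,\mathrm{r}}(\phi)$: when it holds the claim reduces to the induction hypothesis on $E_1$, and when it fails both sides are $\emptyset$. For $\alpha\dashv E_1$, I would compute $L(\mathrm{reg}_{I,\mathrm{r}}(\alpha\dashv E_1))=L(\mathrm{r}(\alpha))\cap L(\mathrm{reg}_{I,\mathrm{r}}(E_1))=\{\mathrm{r}(\alpha)\}\cap L_{I,\mathrm{r}}(E_1)$ using the base case and the induction hypothesis, and then recognize that $\{\mathrm{r}(\alpha)\}\cap L_{I,\mathrm{r}}(E_1)=\{\mathrm{r}(\alpha)\mid \mathrm{r}(\alpha)\in L_{I,\mathrm{r}}(E_1)\}=L_{I,\mathrm{r}}(\alpha\dashv E_1)$. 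There is no deep obstacle; the only conceptual points are this singleton-intersection identity and the fact that the regularization presupposes an extended regular-expression semantics (with intersection and arbitrary boolean operators) for which $L(\cdot)$ distributes over the operators as the associated language operations.
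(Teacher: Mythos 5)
Your proposal is correct and follows essentially the same route as the paper's own proof: a structural induction in which each clause of $\mathrm{reg}_{I,\mathrm{r}}$ is matched against the corresponding clause of $L_{I,\mathrm{r}}$, with the $\alpha\dashv E_1$ case resolved by the identity $\{\mathrm{r}(\alpha)\}\cap L_{I,\mathrm{r}}(E_1)=\{\mathrm{r}(\alpha)\mid\mathrm{r}(\alpha)\in L_{I,\mathrm{r}}(E_1)\}$ and the $E_1\mid\phi$ case by branching on $\mathrm{eval}_{I,\mathrm{r}}(\phi)$. Your explicit remarks on coercing the word $\mathrm{r}(\alpha)$ into a regular expression and on requiring extended regular-expression semantics for the boolean operators are sound points that the paper leaves implicit.
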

  \begin{proof}
    By induction over the structure of $E$. According to Definition~\ref{def i r reg exp} and to Definition~\ref{def i r lang}:
    \begin{align*}
      L(\mathrm{reg}_{I,\mathrm{r}}(\alpha))&=L(\mathrm{r}(\alpha))\\
      &=L_{I,\mathrm{r}}(\alpha)\\
      L(\mathrm{reg}_{I,\mathrm{r}}(\emptyset))&=\emptyset\\
      &=L_{I,\mathrm{r}}(\emptyset)\\
      L(\mathrm{reg}_{I,\mathrm{r}}(\mathrm{o}(E_1,\ldots,E_k))) & =L(\mathrm{o}(\mathrm{reg}_{I,\mathrm{r}}(E_1),\ldots,\mathrm{reg}_{I,\mathrm{r}}(E_k)))\\
      & =\mathrm{o'}(L(\mathrm{reg}_{I,\mathrm{r}}(E_1)),\ldots,L(\mathrm{reg}_{I,\mathrm{r}}(E_k)))\\
      & =\mathrm{o'}(L_{I,\mathrm{r}}(E_1),\ldots,L_{I,\mathrm{r}}(E_k)) & (\textbf{Induction hypothesis})\\
      & =L_{I,\mathrm{r}}(\mathrm{o}(E_1,\ldots,E_k))\\
      L(\mathrm{reg}_{I,\mathrm{r}}(E_1\cdot E_2)) & =L(\mathrm{reg}_{I,\mathrm{r}}(E_1)\cdot \mathrm{reg}_{I,\mathrm{r}}(E_2))\\
      & =L(\mathrm{reg}_{I,\mathrm{r}}(E_1)) \cdot L(\mathrm{reg}_{I,\mathrm{r}}(E_2))\\
      & =L_{I,\mathrm{r}}(E_1) \cdot L_{I,\mathrm{r}}(E_2)&(\textbf{Induction hypothesis})\\
      & =L_{I,\mathrm{r}}(E_1  \cdot E_2)\\
      L(\mathrm{reg}_{I,\mathrm{r}}(E_1^*)) & =L(\mathrm{reg}_{I,\mathrm{r}}(E_1)^*)\\
      & =L(\mathrm{reg}_{I,\mathrm{r}}(E_1))^*\\
      & =L_{I,\mathrm{r}}(E_1)^*&(\textbf{Induction hypothesis})\\
      & =L_{I,\mathrm{r}}(E_1^*)\\
      L(\mathrm{reg}_{I,\mathrm{r}}(E_1\mid\phi)) & =
         \begin{cases}
             L(\mathrm{reg}_{I,\mathrm{r}}(E_1)) & \text{ if } \mathrm{eval}_{I,\mathrm{r}}(\phi),\\
             L(\emptyset) & \text{ otherwise,}
         \end{cases}\\
         & =
         \begin{cases}
             L_{I,\mathrm{r}}(E_1) & \text{ if } \mathrm{eval}_{I,\mathrm{r}}(\phi),\\
             L(\emptyset) & \text{ otherwise,}
         \end{cases}&(\textbf{Induction hypothesis})\\
         & = L_{I,\mathrm{r}}(E_1\mid\phi)\\
         L(\mathrm{reg}_{I,\mathrm{r}}(\alpha \dashv E_1)) & = L(\mathrm{r}(\alpha) \cap \mathrm{reg}_{I,\mathrm{r}}(E_1))\\
      & = L(\mathrm{r}(\alpha)) \cap L(\mathrm{reg}_{I,\mathrm{r}}(E_1))\\
      & = L(\mathrm{r}(\alpha)) \cap L_{I,\mathrm{r}}(E_1)&(\textbf{Induction hypothesis})\\
      & = \{\mathrm{r}(\alpha)\} \cap L_{I,\mathrm{r}}(E_1)\\
      & = \{\mathrm{r}(\alpha)\mid \mathrm{r}(\alpha)\in  L_{I,\mathrm{r}}(E_1)\}\\
      & = L_{I,\mathrm{r}}(\alpha \dashv E_1)\\
    \end{align*}
    where $k$ is any integer, $\mathrm{o}$ is any $k$-ary boolean operator associated with a mapping from $\{0,1\}^k$ to $\{0,1\}$, $\mathrm{o}'$ is the language operator associated with $\mathrm{o}$, $E_1,\ldots,E_k$ are any $k$ constrained expressions over $\mathcal{E}$, $\alpha$ is any word in $(\Sigma\cup \Gamma)^*$ and $\phi$ is a boolean formula in $\mathcal{P}(\mathcal{F}(\Gamma))$.
    \qed
  \end{proof}
  
 Once this regular
expression is computed, any classical membership test can be performed; hence:
  
  \begin{corollary}\label{cor i r lang rat}
    Let $\mathcal{E}=(\Sigma,\Gamma,\mathcal{P},\mathcal{F})$ be an expression environment and $E$ be a constrained expression over $\mathcal{E}$. Let $I$ be an expression interpretation over $\mathcal{E}$ and $\mathrm{r}$ be a $\Gamma$-realization over $I$. Let $w$ be a word in $\Sigma^*$. Then:
    \begin{align*}
      \text{To determine whether or not w belongs to } L_{I,\mathrm{r}}(E) \text{ is polynomially decidable.}
    \end{align*}
  \end{corollary}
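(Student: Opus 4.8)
The plan is to reduce the problem to a classical membership test through the preceding Proposition, and then to show that this classical test can be carried out in polynomial time despite the extended boolean operators that may occur in the regularization.

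First, by the preceding Proposition, $L_{I,\mathrm{r}}(E)=L(\mathrm{reg}_{I,\mathrm{r}}(E))$, so it suffices to decide $w\in L(\mathrm{reg}_{I,\mathrm{r}}(E))$. I would begin by bounding the size of $\mathrm{reg}_{I,\mathrm{r}}(E)$. Reading off Definition~\ref{def i r reg exp}, the regularization is obtained by a single structural pass over $E$: every constructor is copied, each leaf $\alpha$ is replaced by the fixed word $\mathrm{r}(\alpha)$ whose length is at most $|\alpha|\cdot\max_{x\in\Gamma}|\mathrm{r}(x)|$, and each subexpression $E_1\mid\phi$ is replaced either by $\mathrm{reg}_{I,\mathrm{r}}(E_1)$ or by $\emptyset$ according to the value $\mathrm{eval}_{I,\mathrm{r}}(\phi)$. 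Since $I$ and $\mathrm{r}$ are fixed, $\max_{x\in\Gamma}|\mathrm{r}(x)|$ is a constant, so the size of $\mathrm{reg}_{I,\mathrm{r}}(E)$ is linear in the size of $E$, provided each required evaluation $\mathrm{eval}_{I,\mathrm{r}}(\phi)$ is itself computable in polynomial time.

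Second, I would observe that $\mathrm{reg}_{I,\mathrm{r}}(E)$ is an expression built from sum, catenation and star together with arbitrary boolean operators $\mathrm{o}$ and the intersection $\mathrm{r}(\alpha)\cap(\cdot)$ coming from the $\dashv$ case; since $\mathrm{Reg}(\Sigma)$ is closed under all these operations, $L(\mathrm{reg}_{I,\mathrm{r}}(E))$ is regular. Crucially, I would not try to eliminate these operators, as translating intersection or complement into a plain regular expression (or a single automaton) may incur an exponential, or worse, blow-up. Instead, to settle membership of the single word $w$, I would fill a table recording, for every subexpression $F$ of $\mathrm{reg}_{I,\mathrm{r}}(E)$ and every factor $w[i..j]$ of $w$, the boolean value $[\,w[i..j]\in L(F)\,]$. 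This table is computed bottom-up, by structural induction on $F$ and by increasing factor length: a boolean node combines the values of its children pointwise through $\mathrm{o}'$, a catenation node tries all split points, a star node refers to strictly shorter factors, and the intersection node $\mathrm{r}(\alpha)\cap F_1$ simply conjoins the test $w[i..j]=\mathrm{r}(\alpha)$ with the entry for $F_1$. The answer is the entry for the root and the whole word $w$.

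The table has $O(|\mathrm{reg}_{I,\mathrm{r}}(E)|\cdot|w|^2)$ entries, each computable in $O(|w|)$ time, hence the membership test runs in time polynomial in $|E|$ and $|w|$, which yields the claim. I expect the main obstacle to be precisely the extended boolean operators: the naive route through automata is not polynomial, and the argument only goes through because single-word membership --- unlike emptiness or equivalence --- is insensitive to the representation blow-up induced by intersection and complement. A secondary point to make explicit is the standing assumption that the fixed interpretation evaluates formulae in polynomial time, since otherwise the cost of computing the regularization, rather than the membership test itself, would dominate.
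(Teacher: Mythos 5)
Your proposal is correct and follows the same overall route as the paper: both reduce the question, via the preceding Proposition, to membership of $w$ in $L(\mathrm{reg}_{I,\mathrm{r}}(E))$. Where you genuinely diverge is in how that residual test is justified: the paper disposes of it with the single remark that ``once this regular expression is computed, any classical membership test can be performed,'' whereas you observe that $\mathrm{reg}_{I,\mathrm{r}}(E)$ is not a plain regular expression --- it keeps the arbitrary boolean operators $\mathrm{o}$ and acquires an intersection at every $\dashv$ node (Definition~\ref{def i r reg exp}) --- so the automaton constructions cited in the preliminaries do not immediately give a polynomial bound (iterated products and complementations may blow up), and you instead run a CYK-style table over subexpressions of $\mathrm{reg}_{I,\mathrm{r}}(E)$ and factors of $w$. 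This is the standard and correct way to keep single-word membership polynomial for extended regular expressions; just phrase the star case so that the empty factor is accepted outright and the recursion splits off a nonempty prefix in $L(F_1)$, which keeps the induction on factor length well-founded. Your second explicit caveat is also well taken: since Definition~\ref{def interpretation} allows $\mathfrak{F}$ to be an arbitrary (even non-computable) set-theoretic object, the corollary only makes sense under the tacit assumption that $\mathrm{r}$ and each $\mathrm{eval}_{I,\mathrm{r}}(\phi)$ are polynomial-time computable; the paper assumes this silently, and making it a stated hypothesis is an improvement rather than a deviation. In short, your argument buys a complete complexity analysis where the paper offers a one-line appeal to classical results, at the cost of a longer proof.
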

  
  \subsection{Derivatives for Constrained Expressions}   
   
    \begin{remark}
    %From this point, we restrict the set of boolean operators to the sum.
    From now on, the set of boolean operators is restricted to the sum.
  \end{remark}

%%% DEBUT JMC
  
%  Unlike the previous case, whenever the realization or the interpretation are not given, the language denoted by a constrained expression is an infinite union of regular languages that is not necessarily regular. In order to perform the membership test, we extend the notion of partial derivatives to the case of constrained expression. While deriving expressions, choices have to be made in order to fix a realization. As an example, deriving an expression $x\cdot x$ with $x$ a variable symbol with respect to the symbol $a$ implies the variable symbol $x$ is associated to $a$. If a realization does not associates $x$ and $a$, then the derivative would be empty. Consequently, the considered realization transforms $x$ in $a$ and then associates the expression $a\cdot a$ to $x\cdot x$. Deriving this expression w.r.t. $a$ returns the expression $\varepsilon\cdot a$ equivalent to $a$.

We showed in the previous section that the language of any constrained expression with a fixed interpretation and realization is regular. However, whenever the realization or the interpretation is not given, the language denoted by a constrained expression is an infinite union of regular languages which is not necessarily regular. Thus, in order to perform the membership test, the notion of partial derivatives is extended to the case of constrained expressions.
The idea is the following: for any interpretation and realization, a syntactical test can be achieved by computing all the splits of the word for which the membership test is performed. Once these precomputations terminated, new constrained expressions are generated and the membership test has to be performed for the empty word. In fact, we show that it is equivalent to solving the logical part, that is, to determine the satisfiability of the new formulae.
   While deriving expressions, choices have to be made in order to fix a realization. As an example, deriving the expression $x\cdot x$, where $x$ is a variable symbol, with respect to the symbol $a$, implies that the variable symbol $x$ is associated with a word starting with a, otherwise, the derivative would be empty. Consequently, such a realization transforms $x$ in $a$ and then associates the expression $x\cdot x$ with the expression $a\cdot a$. Deriving this expression w.r.t. $a$ returns the expression $\varepsilon\cdot a$ which is equivalent to $a$.
  
%  As a direct consequence, the partial derivation has to memorize the assumptions made through the computation. And this is why a partial derivative is a set of couples expression-assumptions, where an assumption is a couple variable symbol $x$ -word $\alpha$: the realization associates to the variable $x$ the word $\alpha$.    

  As a direct consequence, the partial derivation has to memorize the assumptions made during the computation. Therefore, a partial derivative needs to be a set of tuples composed of an expression and a set of assumptions, where an assumption is a tuple composed of a variable symbol $x$ and a word $\alpha$: the realization associates the variable $x$ with the word $\alpha$.    
%  These assumptions are needed to transform subexpressions of the starting expression; as an example, let us consider an expression $E\cdot F$. If assumptions are needed to perform the membership test while deriving $E$, these assumptions have to be applied over $F$ too by substitution. Let us then extend the notion of substitution to words, to boolean formulae and to constrained expressions.
  These assumptions are needed to transform subexpressions of the initial expression. As an example, let us consider the expression $E\cdot F$. If assumptions are needed to perform the membership test while deriving $E$, these assumptions have to be applied over $F$ too via a substitution. Let us then extend the notion of substitution to words, to boolean formulae and to constrained expressions.

%  Let $\Sigma$ be an alphabet and let $\alpha$ and $w$ be two words in $\Sigma^*$. Let $x$ be a symbol in $\Sigma$. We denote by $(\alpha)_{(x,w)}$ the word obtained by substituting any occurrences of $x$ in $\alpha$ by $w$, that is:

  Let $\Sigma$ be an alphabet and let $\alpha$ and $w$ be two words in $\Sigma^*$. Let $x$ be a symbol in $\Sigma$. We denote by $(\alpha)_{(x,w)}$ the word obtained by substituting any occurrence of $x$ in $\alpha$ by $w$, that is:
  \begin{align*}
    (\alpha)_{(x,w)}&=
      \begin{cases}
          \varepsilon & \text{ if }\alpha=\varepsilon,\\
          a(\alpha')_{(x,w)} & \text{ if }\alpha=a\alpha'\ \wedge\ a\in\Sigma\setminus\{x\},\\
          w(\alpha')_{(x,w)} & \text{ if }\alpha=x\alpha'
      \end{cases}
  \end{align*}
    \begin{definition}[$\mathrm{Term}$ Function]
     Let $\mathrm{Term}$ be the function from $\Sigma^*$ to $(\Sigma\cup\{\cdot,\varepsilon\})(\emptyset)$ (that is, the set of functions over the empty set of variables) inductively defined for any word $w$ as follows:
        \begin{align*}
        \mathrm{Term}(w)=
        \begin{cases}
            w & \text{ if } w\in\Sigma\cup\{\varepsilon\},\\
            \cdot(a,\mathrm{Term}(w')) & \text{ if }w=aw'\ \wedge\ a\in\Sigma\ \wedge\ w'\in\Sigma^*.\\
          \end{cases}
        \end{align*}
  \end{definition}
  For a boolean formula $\phi$, we denote by $\phi_{(x,w)}$ the boolean formula defined by:  
  \begin{align*}
      \phi_{(x,w)}=\phi_{x\leftarrow \mathrm{Term}(w)}.
  \end{align*}
  Finally, for any constrained expression $E$, we denote by $E_{(x,w)}$ the expression:
  \begin{align*}
    \emptyset_{(x,w)}&=\emptyset,\\
    (\alpha\dashv E_1)_{(x,w)}&=(\alpha)_{(x,w)}\dashv (E_1)_{(x,w)},\\
    (E_1+E_2)_{(x,w)}&=(E_1)_{(x,w)}+(E_2)_{(x,w)},\\
    (E_1\cdot E_2)_{(x,w)}&=(E_1)_{(x,w)}\cdot(E_2)_{(x,w)},\\
    (E_1^*)_{(x,w)}&=(E_1)_{(x,w)}^*,\\
    (E_1\mid\phi)_{(x,w)}&=(E_1)_{(x,w)}\mid \phi_{(x,w)}.
  \end{align*}
 Let $X$ be a subset of $\Sigma\times\Sigma^*$ satisfying the following two conditions:
 \begin{enumerate}
   \item \emph{Functional:} for any two distinct couples $(x,w)$ and $(x',w')$ in $X$, $x\neq x'$;
   \item \emph{Non-crossing: }for any two distinct couples $(x,w)$ and $(x',w')$ in $X$, $x$ does not appear in $w'$.
 \end{enumerate}
 Since $\Sigma$ is finite and therefore can be considered as ordered, we consider that $\Sigma\times\Sigma^*$ is ordered by an arbitrary lexicographic order from $\Sigma$.
 We extend the substitution to couples in $X$ as follows:
  \begin{align*}
    (\alpha)_{X}&=
      \begin{cases}
          \alpha & \text{ if } X=\emptyset,\\
          (\alpha_{(x,w)})_{X'} & \text{ if } X'=X\setminus\{(x,w)\} \wedge (x,w)=\mathrm{min}(X)
      \end{cases}\\     
    (\phi)_{X}&=
      \begin{cases}
          \phi & \text{ if } X=\emptyset,\\
          (\phi_{(x,w)})_{X'} & \text{ if } X'=X\setminus\{(x,w)\} \wedge (x,w)=\mathrm{min}(X)
      \end{cases}\\
    (E)_{X}&=
      \begin{cases}
          E & \text{ if } X=\emptyset,\\
          (E_{(x,w)})_{X'} & \text{ if } X'=X\setminus\{(x,w)\} \wedge (x,w)=\mathrm{min}(X)
      \end{cases}
  \end{align*}
  
%  Let us reconsider the previous example, the expression $x\cdot x$. If we want to test whether the word $aa$ belongs to the denoted language (and we want that it does), $x$ can be replaced by $a$, and then the derivation of $xx$ w.r.t. $aa$ produces an expression that denotes $\varepsilon$. However, substituting $x$ by a symbol is not sufficient in the general case. If we want to perform the membership test of the word $abab$, the derivation w.r.t. $a$ has to memorize that the realization associates with $x$ a word that \emph{starts with} the symbol $a$. Then the variable $x$ can be replaced by the word $ax$: $xx$ is considered as $axax$ while the derivation w.r.t. $a$, producing $xax$. Deriving w.r.t. $b$, the assumption $x$ (the new $x$, not the old one) is associated with a word that starts with $b$ has to be made, replacing $xax$ by $bxabx$ and producing $xabx$. Deriving it w.r.t. $a$, a new type of assumptions can be made: if the new $x$ is replaced by $\varepsilon$, then the $xabx$ is replaced by $ab$ and its derivation w.r.t. $ab$ will produce $\varepsilon$, proving  that the word $abab$ is denoted by $xx$.
%  
%  As a direct consequence, the partial derivation of a constrained expression will compute all the combinations of assumptions that can be made while deriving.
  
  Let us continue with the previous example with the expression $x\cdot x$. If we want to check that the word $aa$ belongs to the language denoted by this expression, $x$ can be replaced by $a$, and then the derivation of $xx$ w.r.t. $aa$ produces an expression that denotes $\varepsilon$. However, substituting $x$ by a symbol is not sufficient in the general case. If we want to perform the membership test of the word $abab$, the derivation w.r.t. $a$ has to memorize that the realization associates $x$ with a word that \emph{starts with} the symbol $a$. Then the variable $x$ can be replaced by the word $ax$: the expression $xx$ is transformed into $axax$ when the derivative w.r.t. $a$ is computed, producing the expression $xax$. Deriving w.r.t. $b$, the assumption that $x$ (the new $x$, not the old one) is associated with a word that starts with $b$ has to be made, replacing $xax$ by $bxabx$ and producing $xabx$. Deriving it w.r.t. $a$, a new assumption can be made: if the new $x$ is replaced by $\varepsilon$, then the expression $xabx$ is replaced by the word $ab$ and its derivation w.r.t. $ab$ will produce $\varepsilon$, proving  that the word $abab$ is denoted by $xx$.
  
  As a direct consequence, the partial derivation of a constrained expression will compute all the combinations of assumptions that can be made during the derivation.

  %%% FIN JMC 

Let $\mathcal{E}=(\Sigma,\Gamma,\mathcal{P},\mathcal{F})$ be an expression environment.
  We denote by $\mathrm{Exp}(\mathcal{E})$ the set of the constrained expressions over $\mathcal{E}$. 
  Let us consider a word $\alpha$ in $(\Sigma\cup\Gamma)^*$.
  Either $\alpha=\varepsilon$, and therefore $\frac{\partial}{\partial_a}(\alpha)=\emptyset$, or $\alpha=\beta\alpha'$ with $\beta$ in $(\Sigma\cup\Gamma)$ and $\alpha'$ in $(\Sigma\cup\Gamma)^*$.
  If $\beta\in\Sigma\setminus\{a\}$, then $\frac{\partial}{\partial_a}(\alpha)=\emptyset$. 
  If $\beta=a$, then the only derived term of $\alpha$ is $\alpha'$ with no assumption of substitution made; therefore $\frac{\partial}{\partial_a}(\alpha)=\{(\alpha',\emptyset\}$.
  If $\beta=x\in\Gamma$, then two assumptions have to be considered:
  \begin{itemize}
    \item if $x$ starts with $a$, then $x$ can be substituted by $ax$, and then $\alpha$ becomes $ax\alpha'$. In this case, the only derived term is $x\alpha'$ under the substitution $(x,ax)$; therefore $\{(x\alpha',\{(x,ax)\})\}\in\frac{\partial}{\partial_a}(\alpha)$.
    \item if $x$ equals $\varepsilon$, then all the occurrences of $x$ have to be substituted by $\varepsilon$. 
    Thus $\alpha$ becomes $\alpha''=\alpha'_{(x,\varepsilon)}$. 
    Once this substitution made, there are no more occurrences of $x$ in $\alpha''$ and $\alpha''$ has to be derived w.r.t. $a$.
    Consequently, $\bigcup_{(\alpha'',X)\in\frac{\partial}{\partial_a}((\alpha')_{(x,\varepsilon)})} \{(\alpha'',X\cup\{(x,\varepsilon)\})\} \subset \frac{\partial}{\partial_a}(E)$.
  \end{itemize} 
  More formally,
  \begin{definition}[Constrained Derivative of a word]\label{def deriv expr cont mot}
    Let $\mathcal{E}=(\Sigma,\Gamma,\mathcal{P},\mathcal{F})$ be an expression environment and let $\alpha$ be a word in $(\Sigma\cup\Gamma)^*$. 
    Let $a$ be a symbol in $\Sigma$. 
    The \emph{constrained derivative of} $\alpha$ \emph{w.r.t.} $a$ is the subset $\frac{\partial}{\partial_a}(\alpha)$ of $(\Sigma\cup\Gamma)^*\times 2^{(\Gamma\times (\Sigma\cup\Gamma)^*)}$ inductively computed as follows:
    \begin{align*}
      \frac{\partial}{\partial_a}(\varepsilon)&=\emptyset,\\
      \frac{\partial}{\partial_a}(\alpha)&=
        \begin{cases}
            \{(\alpha',\emptyset)\} & \text{ if } \alpha=a\alpha'\\
            \{(x\cdot(\alpha')_{(x,ax)},\{(x,ax)\})\}\cup \\
            \quad \bigcup_{(\alpha'',X)\in\frac{\partial}{\partial_a}((\alpha')_{(x,\varepsilon)})} \{(\alpha'',X\cup\{(x,\varepsilon)\})\} & \text{ if }\alpha=x\cdot \alpha' \wedge x\in\Gamma,\\
            \emptyset & \text{otherwise}.
        \end{cases}
    \end{align*}
  \end{definition}
  
Let us check that the sets that that appear in a derived term are functional and non-crossing: 
  \begin{lemma}\label{lem ens deriv part ok}
    Let $\mathcal{E}=(\Sigma,\Gamma,\mathcal{P},\mathcal{F})$ be an expression environment and let $\alpha$ be a word in $(\Sigma\cup\Gamma)^*$. 
    Let $a$ be a symbol in $\Sigma$. 
    Then for any couple $(\alpha',Y)$ in $\frac{\partial}{\partial_a}(\alpha)$, it holds:
    \begin{align*}
      \text{$Y$ is a functional non-crossing set.}
    \end{align*}
  \end{lemma}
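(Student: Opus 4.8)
The plan is to argue by induction on the length $|\alpha|$ of $\alpha$ as a word over $\Sigma\cup\Gamma$, following the recursion of Definition~\ref{def deriv expr cont mot}. When $\alpha=\varepsilon$ or when $\alpha$ begins with a symbol of $\Sigma\setminus\{a\}$, the derivative is empty and there is nothing to prove; when $\alpha=a\alpha'$ the unique set is $\emptyset$, which is trivially functional and non-crossing. In the remaining case $\alpha=x\alpha'$ with $x\in\Gamma$, the first derived couple carries the singleton $\{(x,ax)\}$, which is again trivially functional and non-crossing, both conditions being vacuous for a family containing fewer than two distinct couples. The real work is in the couples coming from the recursive term $\frac{\partial}{\partial_a}((\alpha')_{(x,\varepsilon)})$, each of which adjoins $(x,\varepsilon)$ to a set $X$ furnished by the recursion.

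Checking that $X\cup\{(x,\varepsilon)\}$ is functional and non-crossing requires controlling which variables occur in the couples of $X$: functionality demands that $x$ not already be a left-component of $X$, and non-crossing demands that $x$ not occur in any right-component of $X$. Neither is implied by functionality and non-crossing of $X$ on their own. I therefore strengthen the statement proved by induction with an \emph{occurrence invariant}: for every couple $(y,w)$ in $Y$, every variable occurring in $y$ or in $w$ already occurs in $\alpha$.

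With this strengthening the inductive step goes through. Since $|(\alpha')_{(x,\varepsilon)}|\leq|\alpha'|<|\alpha|$, the induction hypothesis applies to $(\alpha')_{(x,\varepsilon)}$; in particular every variable occurring in a couple of $X$ occurs in $(\alpha')_{(x,\varepsilon)}$. Because every occurrence of $x$ has been erased by the substitution by $\varepsilon$, the variable $x$ does not occur in $(\alpha')_{(x,\varepsilon)}$, and hence $x$ occurs in no component of any couple of $X$. This yields at once that $x$ is not a left-component of any couple of $X$, so that $X\cup\{(x,\varepsilon)\}$ remains functional. For non-crossing, the only pairs of distinct couples to examine are those formed by $(x,\varepsilon)$ together with a couple $(x_2,w_2)$ of $X$: the condition requires on the one hand that $x$ not occur in $w_2$, which holds since $x$ occurs in no right-component of $X$, and on the other hand that $x_2$ not occur in $\varepsilon$, which is vacuous; as $X$ is itself non-crossing, $X\cup\{(x,\varepsilon)\}$ is non-crossing. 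The occurrence invariant is finally restored: the variables of $X$ occur in $(\alpha')_{(x,\varepsilon)}$ and a fortiori in $\alpha$, the new left-component $x$ occurs in $\alpha=x\alpha'$, and the new right-component $\varepsilon$ contributes no variable.

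The principal obstacle is not any one computation but locating this strengthening: the bare statement is not self-supporting under induction, and tracking the support of the variables appearing in the derived sets is exactly what is needed to close the recursive case. Once the occurrence invariant is carried along, every verification collapses to the single observation that replacing $x$ by $\varepsilon$ deletes $x$ from the word being derived, so that no couple produced downstream can mention $x$.
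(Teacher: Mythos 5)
Your proof is correct and follows essentially the same route as the paper's: induction on the length of the word, with the trivial non-recursive cases handled directly and the key observation that the substitution $(x,\varepsilon)$ erases every occurrence of $x$, so no couple produced by the recursive call on $(\alpha')_{(x,\varepsilon)}$ can mention $x$. Your explicit \emph{occurrence invariant} is a cleaned-up form of the paper's nested induction claim that $(y,z)\in Y \Rightarrow y\neq x$; if anything, yours is slightly more complete, since tracking variables in the \emph{right} components as well is exactly what the non-crossing condition needs, a point the paper leaves implicit (it tacitly relies on right components having the shape $ay$ or $\varepsilon$).
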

  \begin{proof}
    The proof is done by induction over the length of the words.
    Obviously the condition holds for non inductive cases of Definition~\ref{def deriv expr cont mot}.
    Let $(\alpha'',Y)$ be a couple in $\mathcal{P}=\frac{\partial}{\partial_a}((\alpha')_{(x,\varepsilon)})$.
    By induction over the length of $\alpha'$, it can be shown that $(y,z)\in Y \Rightarrow y\neq x$, since there is no occurrence of $x$ in $(\alpha')_{(x,\varepsilon)}$.
    By induction hypothesis, $Y$ is functional and non-crossing, and therefore so is $Y\cup\{(x,\varepsilon)\}$.
    \qed
  \end{proof}
  
  Let us now extend the partial derivation to constrained expressions. 
  We first syntactically define the derivatives, and then we prove their existence.
  %ici expliquer les formules de derivation. Par exemple marquage de l'hypothese E traversable dans E.F dans derivee de F
     
  \begin{definition}[Constrained Derivative of a Constrained Expression]\label{def deriv expr cont exp}
    Let $\mathcal{E}=(\Sigma,\Gamma,\mathcal{P},\mathcal{F})$ be an expression environment and let $E$ be a constrained expression over $\mathcal{E}$. Let $a$ be a symbol in $\Sigma$. The \emph{constrained derivative of} $E$ \emph{w.r.t.} $a$ is the subset $\frac{\partial}{\partial_a}(E)$ of $\mathrm{Exp}(\mathcal{E})\times 2^{(\Gamma\times (\Sigma\cup\Gamma)^*)}$ inductively computed as follows:
    \begin{align*}
      \frac{\partial}{\partial_a}(\alpha\dashv E_1)&=      
            \bigcup_{ 
              \subalign{
                (\alpha',X_1)&\in \frac{\partial}{\partial_a}(\alpha)\\
                (E_2,X_2)&\in \frac{\partial}{\partial_a}({E_1}_{X_1})
              }
            } \{((\alpha')_{X_2} \dashv E_2,X_1\cup X_2)\}\\
      \frac{\partial}{\partial_a}(E_1+E_2)&=\frac{\partial}{\partial_a}(E_1)\cup \frac{\partial}{\partial_a}(E_2),\\
      \frac{\partial}{\partial_a}(E_1\cdot E_2)&=\frac{\partial}{\partial_a}(E_1) \odot E_2 \cup (\varepsilon\dashv E_1) \odot  \frac{\partial}{\partial_a}(E_2),\\
      \frac{\partial}{\partial_a}(E_1^*)&=\frac{\partial}{\partial_a}(E_1) \odot E_1^*,\\
      \frac{\partial}{\partial_a}(E_1\mid\phi)&=\frac{\partial}{\partial_a}(E_1)\mid\mid \phi
    \end{align*}
    where for any subset $\mathcal{E}$ of $\mathrm{Exp}(\mathcal{E})\times 2^\Gamma$, for any expression $F$ and for any formula $\phi$,
    \begin{align*}
      \mathcal{E}  \odot  F&= \bigcup_{(E,X)\in\mathcal{E}} \{(E\cdot (F)_{X},X) \},\\
      F  \odot  \mathcal{E}&= \bigcup_{(E,X)\in\mathcal{E}} \{((F)_{X}\cdot E,X) \},\\
      \mathcal{E}  \mid\mid \phi&= \bigcup_{(E,X)\in\mathcal{E}} \{(E \mid (\phi)_{X},X) \}.
    \end{align*}
  \end{definition}
\begin{lemma}\label{lem union X dis}
    Let $\mathcal{E}=(\Sigma,\Gamma,\mathcal{P},\mathcal{F})$ be an expression environment and let $E$ be a constrained expression over $\mathcal{E}$. Let $X$ be a subset of $\{(x,ax),(x,\varepsilon)\mid a\in\Sigma,x\in\Gamma\}$. Let $a$ be a symbol in $\Sigma$. Then:
        \begin{align*}
      (E',X')\in\frac{\partial}{\partial_a}(E_X) \Rightarrow \{x\mid \exists (x,u)\in X\} \cap \{x'\mid \exists (x',u)\in X'\}=\emptyset.
        \end{align*}
  
  \end{lemma}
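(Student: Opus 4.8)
The plan is to reduce the statement to a purely syntactic invariant of the words $\alpha$ occurring in $E_X$ and to propagate it through the inductive definition of the derivative. Write $\mathrm{vars}(Y)=\{y\mid\exists(y,u)\in Y\}$ and $V=\mathrm{vars}(X)$. The central observation is that applying $(\cdot)_X$ to an expression resolves every variable of $V$: if $(x,\varepsilon)\in X$ then $x$ disappears from $E_X$, while if $(x,bx)\in X$ with $b\in\Sigma$ then, since substitution replaces every occurrence of $x$ simultaneously, each remaining occurrence of $x$ in $E_X$ is immediately preceded by the letter $b$. Because $X\subseteq\{(x,bx),(x,\varepsilon)\}$ is functional and non-crossing (as in Lemma~\ref{lem ens deriv part ok}), later substitutions neither reintroduce a resolved variable nor insert a symbol between such an $x$ and its guarding letter, so this property survives the successive substitutions composing $(\cdot)_X$. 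Call a word $\beta\in(\Sigma\cup\Gamma)^*$ \emph{$V$-guarded} if every occurrence of a variable of $V$ in $\beta$ is immediately preceded by a letter of $\Sigma$; the discussion shows that every word occurring in $E_X$ is $V$-guarded.

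First I would prove the word-level version: if $\beta$ is $V$-guarded, then for every $(\beta',Y)\in\frac{\partial}{\partial_a}(\beta)$ one has $\mathrm{vars}(Y)\cap V=\emptyset$. This goes by induction on $|\beta|$ along Definition~\ref{def deriv expr cont mot}. The non-inductive cases are immediate (a leading $a$ produces the empty assumption set, a leading letter different from $a$ yields no derivative). In the case $\beta=x\beta'$ with $x\in\Gamma$, guardedness forces $x\notin V$ since $x$ is leading, hence not preceded by a letter; the assumption $(x,ax)$ is then harmless, and for the branch substituting $x$ by $\varepsilon$ one checks that $(\beta')_{(x,\varepsilon)}$ is still $V$-guarded — deleting the non-$V$ variable $x$ cannot expose a variable of $V$, because each such variable keeps the letter preceding it — so the induction hypothesis applies and the added pair $(x,\varepsilon)$ again has first component outside $V$. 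The key point, valid regardless of whether the guarding letter equals $a$, is that the derivation never peels past a letter: a leading letter is either the consumed $a$ (after which the branch stops) or a letter $\neq a$ (after which the branch dies), so a variable of $V$, always shielded by a letter, is never turned into an assumption.

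With the word lemma in hand I would conclude by structural induction on $E$, maintaining the invariant that every word of the expression being derived is $V$-guarded and using that this invariant is preserved by the substitutions occurring in the derivative. The operators $+$, $\cdot$, ${}^*$ and $\mid$ are routine: by Definition~\ref{def deriv expr cont exp} the assumption set attached to each derived term is exactly one produced by deriving an immediate subexpression whose words remain $V$-guarded, so the induction hypothesis applies directly. The genuine obstacle is the case $E=\alpha\dashv E_1$, where a derived term carries $X_1\cup X_2$ with $(\alpha',X_1)\in\frac{\partial}{\partial_a}(\alpha)$ and $(E_2,X_2)\in\frac{\partial}{\partial_a}((E_1)_{X_1})$. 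The word lemma applied to the $V$-guarded word $\alpha$ gives $\mathrm{vars}(X_1)\cap V=\emptyset$. For $X_2$ one derives the already-substituted expression $(E_1)_{X_1}$, so two points must be verified: that $(E_1)_{X_1}$ is still $V$-guarded — which follows because $\mathrm{vars}(X_1)\cap V=\emptyset$ and substituting a non-$V$ variable by $ay$ or by $\varepsilon$ leaves every $V$-variable preceded by its letter — and that the induction is legitimate, which holds because $(\cdot)_{X_1}$ preserves the operator tree of $E_1$ (words are sent to words) and hence its structural size, so $(E_1)_{X_1}$ is strictly smaller than $\alpha\dashv E_1$. The induction hypothesis then yields $\mathrm{vars}(X_2)\cap V=\emptyset$, whence $\mathrm{vars}(X_1\cup X_2)\cap V=\emptyset$, which is the claim.
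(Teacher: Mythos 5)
Your proof is correct and follows essentially the same route as the paper's own (much terser) argument: both rest on the observation that applying $X$ leaves every variable in the domain of $X$ either erased or immediately preceded by a letter of $\Sigma$, while the derivation only ever creates assumptions for variables it encounters in leading position, so the two sets of variables cannot collide. Your $V$-guardedness invariant, its preservation under the substitutions, and the two-level (word, then expression) induction are a rigorous elaboration of exactly the idea the paper compresses into three sentences.
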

  \begin{proof}
    Let us notice that according to Definition~\ref{def deriv expr cont mot}, there exists a symbol $x'$ in $\Gamma$ and a word $u$ in $(\Sigma\cup\Gamma)^*$ satisfying $(x',u)\in X'$ only if there exists a subexpression $\alpha$ of $E_X$ such that $\alpha=x'\alpha'$ for some $\alpha'$ in $(\Sigma\cup\Gamma)^*$.    
    Furthermore, since applying $(x,u)\in X$ for some $u\in\{ax,\varepsilon\}$ over $E$ (producing $E_{x\leftarrow u}$) replaces any occurrence of $x$ either by $\varepsilon$ or by $ax$, there exists no subexpression $\alpha$ of $E_X$ such that $\alpha=x'\alpha'$ for some $\alpha'$ in $(\Sigma\cup\Gamma)^*$.    
    Consequently, the same symbol $x'$ cannot be the first component of both a tuple in $X$ and of a tuple in $X'$.
    \qed
  \end{proof}
  
  \begin{lemma}
    Let $\mathcal{E}=(\Sigma,\Gamma,\mathcal{P},\mathcal{F})$ be an expression environment and let $E$ be an expression over $\mathcal{E}$. 
    Let $a$ be a symbol in $\Sigma$. 
    Then for any couple $(E',Y)$ in $\frac{\partial}{\partial_a}(E)$, it holds that:
    \begin{align*}
      \text{$Y$ is a functional non-crossing set.}
    \end{align*}
  \end{lemma}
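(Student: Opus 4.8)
The plan is to argue by complete induction on the number of operator nodes of $E$ (so that a word $\alpha$ and $\emptyset$ have measure $0$), and to split the two required properties: I would first reduce \emph{non-crossing} to \emph{functionality} by a structural observation, and then concentrate the real work on functionality, where the only genuinely new case is $\alpha\dashv E_1$.

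The observation is the following invariant. Inspecting Definition~\ref{def deriv expr cont mot}, every assumption tuple is introduced in the shape $(x,ax)$ or $(x,\varepsilon)$, and Definition~\ref{def deriv expr cont exp} never modifies an individual tuple: it only forms unions and copies whole sets through $\odot$ and $\mid\mid$. Hence in any assumption set $Y$ arising from $\frac{\partial}{\partial_a}$, each tuple $(x,w)$ has $w\in\{ax,\varepsilon\}$, so the only variable that may occur in $w$ is $x$ itself. Therefore, once $Y$ is known to be functional, any two distinct tuples $(x,w),(x',w')$ of $Y$ satisfy $x\neq x'$ while the only variable possibly occurring in $w'$ is $x'$; thus $x$ does not occur in $w'$, which is exactly non-crossing. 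It consequently suffices to establish functionality.

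For functionality, the base case $E=\emptyset$ is vacuous and the base case $E=\alpha$ is precisely Lemma~\ref{lem ens deriv part ok}. For $E_1+E_2$, $E_1\cdot E_2$, $E_1^*$ and $E_1\mid\phi$, each assumption set of $\frac{\partial}{\partial_a}(E)$ is copied verbatim from one produced by $\frac{\partial}{\partial_a}(E_1)$ or $\frac{\partial}{\partial_a}(E_2)$, since $\odot$ and $\mid\mid$ leave the second component untouched, so functionality is immediate from the induction hypothesis on the strictly smaller $E_1,E_2$. The crux is $E=\alpha\dashv E_1$, where a derived assumption set is $Y=X_1\cup X_2$ with $(\alpha',X_1)\in\frac{\partial}{\partial_a}(\alpha)$ and $(E_2,X_2)\in\frac{\partial}{\partial_a}({E_1}_{X_1})$. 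Here $X_1$ is functional by Lemma~\ref{lem ens deriv part ok} and, being produced by a word derivative, satisfies $X_1\subseteq\{(x,ax),(x,\varepsilon)\}$, so Lemma~\ref{lem union X dis} applies and gives that the domains of $X_1$ and $X_2$ are disjoint. Since substitution leaves the operator structure, hence the operator count, of an expression unchanged (clear from its inductive definition), ${E_1}_{X_1}$ has the same operator count as $E_1$, strictly below that of $\alpha\dashv E_1$; the induction hypothesis therefore applies to ${E_1}_{X_1}$ and makes $X_2$ functional. Functionality of $Y=X_1\cup X_2$ then follows by cases: two distinct tuples both in $X_1$, respectively both in $X_2$, have distinct first components by functionality of $X_1$, respectively $X_2$, while a tuple of $X_1$ and a tuple of $X_2$ have first components in disjoint domains. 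Combined with the invariant above, this functionality yields the full functional non-crossing conclusion.

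I expect the main obstacle to be exactly this last case, on two counts: justifying that the induction hypothesis may be invoked on ${E_1}_{X_1}$ rather than on $E_1$ — which is why the substitution-invariant operator-count measure is the convenient induction parameter — and merging $X_1$ with $X_2$, where the disjoint-domain Lemma~\ref{lem union X dis} together with the form $X_1\subseteq\{(x,ax),(x,\varepsilon)\}$ is doing the essential work.
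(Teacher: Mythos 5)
Your proof is correct, and its skeleton is the same as the paper's: base cases from Lemma~\ref{lem ens deriv part ok}, the sum/catenation/star/such-that cases copy assumption sets verbatim, and the $\dashv$ case combines Lemma~\ref{lem ens deriv part ok}, an induction hypothesis applied to the derivative of ${E_1}_{X_1}$, and Lemma~\ref{lem union X dis}. Where you genuinely depart from the paper is in the bookkeeping, and both departures buy rigor that the paper's own proof lacks. First, the paper argues ``by induction over the structure of expressions'' and then invokes the induction hypothesis on ${E_1}_{X_1}$, which is \emph{not} a subexpression of $\alpha\dashv E_1$; your complete induction on the operator count, together with the (easy but necessary) observation that substitution preserves that count, is exactly what is needed to make this invocation legitimate. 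Second, the paper concludes directly ``from Lemma~\ref{lem union X dis}, $X_1\cup X_2$ is a functional non-crossing set,'' but that lemma only yields disjointness of the domains of $X_1$ and $X_2$; to get functionality of the union one still needs functionality of each part, and to get non-crossing one needs your explicit invariant that every tuple ever produced has the shape $(x,ax)$ or $(x,\varepsilon)$, so that the only variable occurring in a right-hand side is its own left-hand side. The paper uses this silently; you isolate it, which also lets you cleanly reduce non-crossing to functionality once and for all rather than re-arguing it case by case. In short: same route and same two key lemmas, but your version is the careful one.
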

  \begin{proof}
    The proof is done by induction over the structure of expressions.
    Basic cases are well defined from Lemma~\ref{lem ens deriv part ok}.
    The cases of the sum, catenation, star and "\emph{such that}"-operation leave the properties of the sets unchanged.
    Thus, let us consider the case of the "\emph{membership}"-operation.
    
    Let $E=\alpha\dashv E_1$ and $(E',Y)$ be a couple in $\bigcup_{ 
              \subalign{
                (\alpha',X_1)&\in \frac{\partial}{\partial_a}(\alpha)\\
                (E_2,X_2)&\in \frac{\partial}{\partial_a}({E_1}_{X_1})
              }
            } \{((\alpha')_{X_2} \dashv E_2,X_1\cup X_2)\}$.
    Consider a couple $(\alpha',X_1)$ in $\frac{\partial}{\partial_a}(\alpha)$.
    From Lemma~\ref{lem ens deriv part ok}, $X_1$ is a functional non-crossing set.
    Let $(E_2,X_2)$ be a couple in $\frac{\partial}{\partial_a}({E_1}_{X_1})$.
    By induction hypothesis, $X_2$  is a functional non-crossing set.
    Finally, from Lemma~\ref{lem union X dis}, $X_1\cup X_2$ is a functional non-crossing set.
    \qed
  \end{proof}
  
  \begin{corollary}
    The constrained derivation is well-defined.
  \end{corollary}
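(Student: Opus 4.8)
The plan is to read off the Corollary as an immediate consequence of the preceding lemma, by tracing where Definition~\ref{def deriv expr cont exp} actually uses a partial operation. The only operations in that definition that are \emph{not} total are the substitutions $(\alpha)_{X}$, $(\phi)_{X}$ and $(E)_{X}$: each of these was defined only for subsets $X$ of $\Sigma\times\Sigma^{*}$ that are functional and non-crossing. Hence proving that the constrained derivation is well-defined amounts to checking that every set $X$ fed to such a substitution during the inductive computation of $\frac{\partial}{\partial_a}(E)$ is indeed functional and non-crossing.

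First I would fix the environment and the symbol $a$ and argue by structural induction on $E$, case by case as in Definition~\ref{def deriv expr cont exp}. In the base cases (a word $\alpha$, handled through Definition~\ref{def deriv expr cont mot}, and $\emptyset$) no substitution is applied, and Lemma~\ref{lem ens deriv part ok} already guarantees that every set produced is functional and non-crossing. For the sum, the catenation, the star and the such-that cases, the substitutions that appear are only the $(F)_{X}$ and $(\phi)_{X}$ hidden inside the operations $\odot$ and $\mid\mid$, with $X$ ranging over the sets carried by the derivatives of the immediate subexpressions; by the induction hypothesis (equivalently, by the functional non-crossing lemma just established) these $X$ are functional and non-crossing, so the operations are legitimate and the output sets are again of that form.

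The case that needs real care, and the one I expect to be the main obstacle, is the membership operation $\alpha\dashv E_1$, where two substitutions are nested. I would check them in order. The inner derivative $\frac{\partial}{\partial_a}({E_1}_{X_1})$ presupposes that ${E_1}_{X_1}$ exists, i.e. that $X_1$ is admissible; this holds because $X_1$ comes from $\frac{\partial}{\partial_a}(\alpha)$ and is functional non-crossing by Lemma~\ref{lem ens deriv part ok}. The produced term $(\alpha')_{X_2}\dashv E_2$ presupposes that $X_2$ is admissible; this holds by the induction hypothesis applied to ${E_1}_{X_1}$. It remains to see that the set $X_1\cup X_2$ labelling the output is itself functional and non-crossing, so that the derivative can in turn be substituted into and re-derived (as happens when one iterates the derivation over a word). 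This is exactly where Lemma~\ref{lem union X dis} is used: it yields that the variables occurring as first components of $X_1$ and of $X_2$ are disjoint, and disjointness together with the individual functional non-crossing properties of $X_1$ and $X_2$ gives that $X_1\cup X_2$ is functional and non-crossing. Once this interplay is settled, the Corollary follows with no further computation: every partial operation invoked by Definition~\ref{def deriv expr cont exp} has been shown to receive a legal argument.
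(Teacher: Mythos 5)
Your proof is correct and takes essentially the same approach as the paper: the corollary there is an immediate consequence of the preceding lemma, whose proof is exactly your structural induction — base cases from Lemma~\ref{lem ens deriv part ok}, the sum, catenation, star and such-that cases by the induction hypothesis, and the membership case by combining the induction hypothesis with Lemma~\ref{lem union X dis} to conclude that $X_1\cup X_2$ is functional and non-crossing. Your explicit framing of well-definedness (every partial substitution $(\cdot)_X$ must receive a functional non-crossing argument) is precisely what the paper leaves implicit.
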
 
  In the following, in the examples, we use the symbol $\equiv$ as a semantical equivalence between expressions or sets. As an example, $\alpha\dashv \emptyset \equiv \emptyset$ or $\emptyset\equiv \{(\emptyset,\emptyset)\}$.
  \begin{example}\label{ex deriv anbncn}
    Let us consider the expression $E$ defined in Example~\ref{ex anbncn}.
    Let us set 
    \begin{align*}
      E_a &= (x\dashv a^*) &
      E_b &= (y\dashv b^*)\\
      E_c &= (z\dashv c^*) &
      \phi&= \sim(x,y)\wedge \sim (y,z)
    \end{align*}
    Consequently
    \begin{align*}
      E&= (E_a\cdot E_b \cdot E_c)\mid \phi
    \end{align*}
    Then
    \begin{align*}      
        \frac{\partial}{\partial_a}(E) &= \frac{\partial}{\partial_a}(E_a\cdot E_b\cdot E_c) \mid\mid \phi\\
        \frac{\partial}{\partial_a}(E_a\cdot E_b\cdot E_c)&=
          \frac{\partial}{\partial_a}(E_a)\odot (E_b \cdot E_c)
          \cup
          (\varepsilon\dashv E_a)\odot \frac{\partial}{\partial_a}(E_b \cdot E_c)\\
        \frac{\partial}{\partial_a}(E_b \cdot E_c) &=        
          \frac{\partial}{\partial_a}(E_b)\odot E_c
          \cup
          (\varepsilon\dashv E_b)\odot \frac{\partial}{\partial_a}(E_c)
    \end{align*}
    Furthermore,
    \begin{align*}
        \frac{\partial}{\partial_a}(x) &=\{(x,\{(x,ax)\})\} &
        \frac{\partial}{\partial_a}(x\dashv a^*) &=\{(x\dashv a^*,\{(x,ax)\})\}=\{(E_a,\{(x,ax)\})\}\\
        \frac{\partial}{\partial_a}(y) &=\{(y,\{(y,ay)\})\} &
        \frac{\partial}{\partial_a}(y\dashv b^*) &=\{(y\dashv \emptyset,\{(y,ay)\})\}\equiv \emptyset\\
        \frac{\partial}{\partial_a}(z) &=\{(z,\{(z,az)\})\} &
        \frac{\partial}{\partial_a}(z\dashv c^*) &=\{(z\dashv \emptyset,\{(z,az)\})\}\equiv \emptyset
    \end{align*}
    Then
    \begin{align*}
      \frac{\partial}{\partial_a}(E_b \cdot E_c) &\equiv \emptyset\\
      \frac{\partial}{\partial_a}(E_a\cdot E_b\cdot E_c)&\equiv \{(E_a,\{(x,ax)\})\} \odot (E_b \cdot E_c)\\
      &\equiv \{(E_a\cdot (E_b \cdot E_c),\{(x,ax)\})\}\\     
        \frac{\partial}{\partial_a}(E) &\equiv \{(E_a\cdot (E_b \cdot E_c),\{(x,ax)\})\} \mid\mid \phi\\
        &\equiv \{((E_a\cdot (E_b \cdot E_c))\mid \sim(ax,y)\wedge \sim (y,z),\{(x,ax)\})\}
    \end{align*}
  \end{example}
  
  The following of this section is devoted to proving that the derivation can be used to perform the membership test. In fact, we show that to determine whether or not a word $w$ is denoted by a constrained expression $E$ is equivalent to determining whether or not $\varepsilon$ is denoted by one of the derived expressions from $E$.
  
  We first model the fact that an assumption made through the derivation can be performed through a substitution without modifying the membership test: the main idea is that if a realization associates a word $au$ with a symbol $x$, the result is the same as if any occurrence of $x$ is replaced by $ax$ and if another realization is considered, where $u$ is associated with $x$. Hence, we can transfer a symbol from the realization to the expression.
  
  \begin{definition}[Compatible Realization]
    Let $\mathcal{E}=(\Sigma,\Gamma,\mathcal{P},\mathcal{F})$ be an expression environment. Let $I$ be an expression interpretation over $\mathcal{E}$ and $\mathrm{r}$ be a $\Gamma$-realization over $I$. Let $X$ be a subset of $\{(x,ax),(x,\varepsilon)\mid a\in\Sigma,x\in\Gamma\}$. The realization $\mathrm{r}$ is said to be \emph{compatible with} $X$ if and only if the following  two conditions hold:
    \begin{itemize}
      \item $\forall (x,ax)\in X$, $\mathrm{r}(x)=au$ for some word $u$ in $\Sigma^*$,
      \item $\forall (x,\varepsilon)\in X$, $\mathrm{r}(x)=\varepsilon$.
    \end{itemize}
  \end{definition}
  
  \begin{definition}[Associated Realization]
    Let $\mathcal{E}=(\Sigma,\Gamma,\mathcal{P},\mathcal{F})$ be an expression environment. Let $X$ be a subset of $\{(x,ax),(x,\varepsilon)\mid a\in\Sigma,x\in\Gamma\}$. Let $I$ be an expression interpretation over $\mathcal{E}$ and $\mathrm{r}$ be a $\Gamma$-realization over $I$ compatible with $X$.  The \emph{realization} $X$\emph{-associated with} $\mathrm{r}$ is defined for any symbol $x$ in $\Gamma$ as follows:
    \begin{align*} 
      \mathrm{r}'(x)&=
        \begin{cases}
            w & \text{ if } \mathrm{r}(x)=aw \wedge (x,ax)\in X,\\
            \varepsilon & \text{ if } \mathrm{r}(x)=\varepsilon \wedge (x,\varepsilon)\in X,\\
            \mathrm{r}(x) & \text{ otherwise.}
         \end{cases}
     \end{align*}
  \end{definition}
  
    \begin{lemma}\label{lem r assoc egal r pour phi}
    Let $\mathcal{E}=(\Sigma,\Gamma,\mathcal{P},\mathcal{F})$ be an expression environment and let $\phi$ be a boolean formula in $\mathcal{P}(\mathcal{F}(\Gamma))$. Let $X$ be a subset of $\{(x,ax),(x,\varepsilon)\mid a\in\Sigma,x\in\Gamma\}$. Let $I$ be an expression interpretation over $\mathcal{E}$ and $\mathrm{r}$ be a $\Gamma$-realization over $I$ compatible with $X$. Let $\mathrm{r}'$ be the realization $X$-associated with $\mathrm{r}$. Then:
    \begin{align*}
      \mathrm{eval}_{I,\mathrm{r}}(\phi)=\mathrm{eval}_{I,\mathrm{r}'}(\phi_X).
    \end{align*}
  \end{lemma}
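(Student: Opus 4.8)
The plan is to prove the identity by induction on $\mathrm{Card}(X)$, following the inductive definition of $(\phi)_X$ that peels off $\mathrm{min}(X)$ one couple at a time; the whole statement then reduces to the case of a single substitution, whose proof rests on the classical substitution lemma of zeroth-order logic. First I would record that auxiliary substitution lemma (not stated explicitly above): for any $\Gamma$-realization $\rho$, any variable $x$ and any term $t$ in $\mathcal{F}(\Gamma)$, $\mathrm{eval}_{I,\rho}(\phi_{x\leftarrow t})=\mathrm{eval}_{I,\rho[x:=e]}(\phi)$, where $e=\mathrm{eval}_{I,\rho}(t)$ and $\rho[x:=e]$ is the realization coinciding with $\rho$ outside $x$ and sending $x$ to $e$. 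This is a routine mutual structural induction on terms and formulae from the definitions of Term Evaluation and Formula Evaluation: the only variable whose image changes is $x$, and $\mathrm{eval}_{I,\rho}(t)=e$ is exactly the value substituted.

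For the base of the main induction, $X=\emptyset$ gives $(\phi)_\emptyset=\phi$ and $\mathrm{r}'=\mathrm{r}$, so both sides coincide. For a single couple $X=\{(x,w)\}$ with $w\in\{ax,\varepsilon\}$, the associated realization $\mathrm{r}'$ differs from $\mathrm{r}$ only at $x$. The key computation is $\mathrm{eval}_{I,\mathrm{r}'}(\mathrm{Term}(w))=\mathrm{r}(x)$: when $w=ax$, compatibility gives $\mathrm{r}(x)=au$ and $\mathrm{r}'(x)=u$, while $\mathrm{Term}(ax)$ is the term $\cdot(a,x)$, so its evaluation is $\mathfrak{F}(\cdot)(a,\mathrm{r}'(x))=a\cdot u=au=\mathrm{r}(x)$ (using $\mathfrak{F}(a)=\{a\}$ and that $\mathfrak{F}(\cdot)$ is catenation); when $w=\varepsilon$, both $\mathrm{Term}(\varepsilon)$ and $\mathrm{r}(x)$ evaluate to $\varepsilon$. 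Applying the substitution lemma with $\rho=\mathrm{r}'$ and $t=\mathrm{Term}(w)$ then yields $\mathrm{eval}_{I,\mathrm{r}'}(\phi_{(x,w)})=\mathrm{eval}_{I,\mathrm{r}'[x:=\mathrm{r}(x)]}(\phi)=\mathrm{eval}_{I,\mathrm{r}}(\phi)$, since $\mathrm{r}'[x:=\mathrm{r}(x)]=\mathrm{r}$.

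For the inductive step, write $(x,w)=\mathrm{min}(X)$ and $X'=X\setminus\{(x,w)\}$, so that $(\phi)_X=(\phi_{(x,w)})_{X'}$. Let $\mathrm{r}_1$ be the $\{(x,w)\}$-associated realization of $\mathrm{r}$; the single-couple case gives $\mathrm{eval}_{I,\mathrm{r}}(\phi)=\mathrm{eval}_{I,\mathrm{r}_1}(\phi_{(x,w)})$. Because $X$ consists only of couples of the form $(x,ax)$ or $(x,\varepsilon)$ and $\mathrm{r}$ is compatible with $X$, each variable is touched by at most one couple, so $x$ occurs in no couple of $X'$; hence $\mathrm{r}_1$ agrees with $\mathrm{r}$ on every variable touched by $X'$, $\mathrm{r}_1$ is again compatible with $X'$, and the $X'$-associated realization of $\mathrm{r}_1$ equals $\mathrm{r}'$ (at $x$ both leave $\mathrm{r}_1(x)$ unchanged, and elsewhere both apply the same couple to the same value). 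The induction hypothesis applied to $\phi_{(x,w)}$ and $X'$ then gives $\mathrm{eval}_{I,\mathrm{r}_1}(\phi_{(x,w)})=\mathrm{eval}_{I,\mathrm{r}'}((\phi_{(x,w)})_{X'})=\mathrm{eval}_{I,\mathrm{r}'}(\phi_X)$, and chaining the two equalities closes the induction.

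I expect the main obstacle to be the bookkeeping in the inductive step rather than the logic: one must check carefully that ``first update $x$, then associate along $X'$'' produces exactly the globally $X$-associated realization $\mathrm{r}'$, and that compatibility is preserved along the way — both of which rely on the restricted shape of $X$ together with compatibility forcing functionality and the non-crossing property, so that distinct couples never interfere (and the order fixed by $\mathrm{min}$ is irrelevant). The term-evaluation identity $\mathrm{eval}_{I,\mathrm{r}'}(\mathrm{Term}(w))=\mathrm{r}(x)$ is the only genuinely computational point, and it is short.
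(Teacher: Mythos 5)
Your proof is correct, but it takes a genuinely different route from the paper's. The paper proves the lemma by a two-stage structural induction carried out with the whole set $X$ at once: first it shows $\mathrm{eval}_{I,\mathrm{r}}(t)=\mathrm{eval}_{I,\mathrm{r}'}(t_X)$ for every term $t$, where the variable case is a direct three-way analysis (is $(x,ax)\in X$, is $(x,\varepsilon)\in X$, or neither) that computes the iterated substitution $x_X$ in closed form, and the function case uses the functionality of $\mathfrak{F}(\mathrm{f})$; it then lifts this to formulae by induction on $\phi$. You instead induct on $\mathrm{Card}(X)$, following literally the $\mathrm{min}$-peeling definition of $(\phi)_X$, and reduce the single-couple case to the classical substitution lemma $\mathrm{eval}_{I,\rho}(\phi_{x\leftarrow t})=\mathrm{eval}_{I,\rho[x:=e]}(\phi)$ combined with the computation $\mathrm{eval}_{I,\mathrm{r}'}(\mathrm{Term}(w))=\mathrm{r}(x)$. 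The trade-off is clear: your decomposition is more modular, since it isolates a reusable textbook lemma and never has to determine in one shot what the iterated substitution does to a variable (a fact the paper uses implicitly, resting on functionality of $X$); on the other hand, the genuine syntactic induction has not disappeared --- it is hidden inside your ``routine'' substitution lemma, whose proof is essentially the same term/formula induction the paper performs directly --- and your route adds the bookkeeping of showing that the single-couple association followed by $X'$-association composes to the global $X$-associated realization and that compatibility is preserved. You handle that bookkeeping correctly, in particular by observing that compatibility forces functionality of $X$ (the lemma's hypotheses never state it explicitly), which is exactly the point that makes both your composition argument and the paper's closed-form computation of $x_X$ legitimate.
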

  \begin{proof}
    We proceed in two steps.
    \begin{enumerate}
    \item\label{p1} Let us first show that for any term $t$ in $\mathcal{F}(\Gamma)$, $\mathrm{eval}_{I,\mathrm{r}}(t)=\mathrm{eval}_{I,\mathrm{r}'}(t_X)$.
    By induction over $t$.
    \begin{enumerate}
      \item Suppose that $t=x\in\Gamma$.
        Then 
        \begin{align*}
          \mathrm{eval}_{I,\mathrm{r}}(x)=\mathrm{r}(x) & =
            \begin{cases}
                  aw & \text{ if }\mathrm{r}(x)=aw \wedge (x,ax)\in X\\
                  \varepsilon & \text{ if }\mathrm{r}(x)=\varepsilon \wedge (x,\varepsilon)\in X\\
                  \mathrm{r}(x) & \text{ otherwise,}
            \end{cases}\\
            & =
              \begin{cases}
                  a\mathrm{r}'(x) & \text{ if }\mathrm{r}(x)=aw \wedge (x,ax)\in X\\
                  \varepsilon & \text{ if }\mathrm{r}(x)=\varepsilon \wedge (x,\varepsilon)\in X\\
                  \mathrm{r}'(x) & \text{ otherwise.}
              \end{cases}
        \end{align*}
        
        Furthermore,
        \begin{align*}
          x_X&=
            \begin{cases}
                  ax & \text{ if }\mathrm{r}(x)=aw \wedge (x,ax)\in X\\
                  \varepsilon & \text{ if }\mathrm{r}(x)=\varepsilon \wedge (x,\varepsilon)\in X\\
                  x & \text{ otherwise.}
             \end{cases}
         \end{align*}        
        Consequently,
        \begin{align*}
          \mathrm{eval}_{I,\mathrm{r}'}(x_X)=\mathrm{r}(x)  & =
            \begin{cases}
                  a\mathrm{r}'(x) & \text{ if }\mathrm{r}(x)=aw \wedge (x,ax)\in X\\
                  \varepsilon & \text{ if }\mathrm{r}(x)=\varepsilon \wedge (x,\varepsilon)\in X\\
                  \mathrm{r}'(x) & \text{ otherwise.}
            \end{cases}
         \end{align*}
      \item Suppose that $t=f(t_1,\ldots,t_n)$, that $I=(\Sigma^*,\mathfrak{F})$ and that $(\mathrm{eval}_{I,r}(t_1),\ldots,$ $\mathrm{eval}_{I,r}(t_n),x_{k+1})\in \mathfrak{F}(f)$. 
      By induction hypothesis, it holds that:
        \begin{align*}
          (\mathrm{eval}_{I,r}(t_1),\ldots,\mathrm{eval}_{I,r}(t_n),x_{k+1})\in \mathfrak{F}(f) \Leftrightarrow (\mathrm{eval}_{I,r'}({t_1}_X),\ldots,\mathrm{eval}_{I,r'}({t_n}_X),x_{k+1})\in \mathfrak{F}(f)
         \end{align*}
        Then:
        \begin{align*}
        \mathrm{eval}_{I,\mathrm{r}}(f(t_1,\ldots,t_n)) =x_{k+1} =\mathrm{eval}_{I,\mathrm{r}'}(f({t_1}_X,\ldots,{t_n}_X)) =\mathrm{eval}_{I,\mathrm{r}'}({t}_X)
         \end{align*}
    \end{enumerate}    
    \item Let us show now by induction over $\phi$ that $\mathrm{eval}_{I,\mathrm{r}}(\phi)=\mathrm{eval}_{I,\mathrm{r}'}(\phi_X)$.    
    \begin{enumerate}
      \item If $\phi=P(t_1,\ldots,t_k)$, then $\phi_X=P({t_1}_X,\ldots,{t_n}_X)$.
        Then
        \begin{align*}
        \mathrm{eval}_{I,\mathrm{r}}(\phi)=1 & \Leftrightarrow \mathrm{eval}_{I,\mathrm{r}}(P(t_1,\ldots,t_k))=1\\
        & \Leftrightarrow (\mathrm{eval}_{I,r}(t_1),\ldots,\mathrm{eval}_{I,r}(t_n))\in \mathfrak{F}(P)\\
        &\Leftrightarrow (\mathrm{eval}_{I,r'}({t_1}_X),\ldots,\mathrm{eval}_{I,r'}({t_n}_X))\in \mathfrak{F}(P)&\textbf{(Previous item~\ref{p1})}\\
        & \Leftrightarrow \mathrm{eval}_{I,\mathrm{r}'}(P({t_1}_X,\ldots,{t_n}_X))=1\\
        & \Leftrightarrow \mathrm{eval}_{I,\mathrm{r}'}(\phi_X)=1
         \end{align*}
      \item Suppose that $\phi=o(\phi_1,\ldots,\phi_n)$.
        Then $\phi_X=o({\phi_1}_X,\ldots,{\phi_n}_X)$.
        Then   
        \begin{align*}
          \mathrm{eval}_{I,\mathrm{r}}(o(\phi_1,\ldots,\phi_n))=1 & \Leftrightarrow o(\mathrm{eval}_{I,\mathrm{r}}({\phi_1},)\ldots,\mathrm{eval}_{I,\mathrm{r}}({\phi_n}))=1\\
          & \Leftrightarrow o(\mathrm{eval}_{I,\mathrm{r}'}({\phi_1}_X,)\ldots,\mathrm{eval}_{I,\mathrm{r}'}({\phi_n}_X))=1&\textbf{(Induction hypothesis)}\\
          & \Leftrightarrow \mathrm{eval}_{I,\mathrm{r}'}(o({\phi_1}_X,\ldots,{\phi_n}_X))=1\\
          & \Leftrightarrow \mathrm{eval}_{I,\mathrm{r}'}(\phi_X)=1
         \end{align*}
    \end{enumerate}
    \end{enumerate}    
    \qed
  \end{proof}

  \begin{lemma}\label{lem r assoc egal r}
    Let $\mathcal{E}=(\Sigma,\Gamma,\mathcal{P},\mathcal{F})$ be an expression environment and let $E$ be a constrained expression over $\mathcal{E}$. Let $X$ be a subset of $\{(x,ax),(x,\varepsilon)\mid a\in\Sigma,x\in\Gamma\}$. Let $I$ be an expression interpretation over $\mathcal{E}$ and $\mathrm{r}$ be a $\Gamma$-realization over $I$ compatible with $X$. Let $\mathrm{r}'$ be the realization $X$-associated with $\mathrm{r}$. Then:
        \begin{align*}
      L_{I,\mathrm{r}}(E)=L_{I,\mathrm{r}'}(E_X).
        \end{align*}
  \end{lemma}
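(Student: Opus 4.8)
The plan is to prove the identity by structural induction on $E$, mirroring clause by clause the definition of $L_{I,\mathrm{r}}$ and of the substitution $(\cdot)_X$, and reusing Lemma~\ref{lem r assoc egal r pour phi} for the subcase involving a boolean formula. Before starting the induction I would isolate the combinatorial heart of the statement as a preliminary claim about words: for every word $\alpha$ in $(\Sigma\cup\Gamma)^*$, one has $\mathrm{r}(\alpha)=\mathrm{r}'((\alpha)_X)$. This is the word-level analogue of item~\ref{p1} in the proof of Lemma~\ref{lem r assoc egal r pour phi}, and it is where essentially all of the work lies.

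To establish this claim I would observe that the extended realization $\mathrm{r}$ is by construction a morphism from $(\Sigma\cup\Gamma)^*$ to $\Sigma^*$, and that $(\cdot)_X$, being a composition of single-symbol substitutions, is a morphism on $(\Sigma\cup\Gamma)^*$; hence $\alpha\mapsto\mathrm{r}'((\alpha)_X)$ is a morphism too, and two morphisms out of a free monoid coincide as soon as they agree on the generators. So it suffices to check the equality on single symbols. For a letter $a\in\Sigma$ we have $(a)_X=a$ and $\mathrm{r}(a)=a=\mathrm{r}'(a)$. For a variable $x\in\Gamma$ I would split according to the three cases defining the $X$-associated realization; here I would first note that compatibility forces $X$ to be functional (two couples sharing the same $x$ would impose contradictory values on $\mathrm{r}(x)$), which in turn makes $X$ non-crossing, so that $(x)_X$ is computed unambiguously. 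If $(x,ax)\in X$, compatibility gives $\mathrm{r}(x)=a\,\mathrm{r}'(x)$ while $(x)_X=ax$, whence $\mathrm{r}'((x)_X)=a\,\mathrm{r}'(x)=\mathrm{r}(x)$; if $(x,\varepsilon)\in X$, then $(x)_X=\varepsilon$ and $\mathrm{r}(x)=\varepsilon=\mathrm{r}'(\varepsilon)$; otherwise $(x)_X=x$ and $\mathrm{r}'(x)=\mathrm{r}(x)$ by definition. (If one prefers not to invoke the morphism property of $(\cdot)_X$, the same claim follows by a direct induction on $|\alpha|$, following the three clauses of the extension of $\mathrm{r}$ to $(\Sigma\cup\Gamma)^*$.)

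With the claim in hand, the induction on $E$ is routine. The base cases $E=\alpha$ and $E=\emptyset$ are immediate, since $L_{I,\mathrm{r}}(\alpha)=\{\mathrm{r}(\alpha)\}=\{\mathrm{r}'((\alpha)_X)\}=L_{I,\mathrm{r}'}((\alpha)_X)$ by the claim and $(\emptyset)_X=\emptyset$. For the sum, the catenation and the star, the substitution distributes over the operator and $L_{I,\mathrm{r}}$ is defined componentwise, so the result follows by applying the induction hypothesis to the immediate subexpressions. For $E=E_1\mid\phi$, I would use that $(E_1\mid\phi)_X=(E_1)_X\mid\phi_X$ together with Lemma~\ref{lem r assoc egal r pour phi}, which yields $\mathrm{eval}_{I,\mathrm{r}}(\phi)=\mathrm{eval}_{I,\mathrm{r}'}(\phi_X)$; hence the guard holds on one side exactly when it holds on the other, and the two surviving branches coincide by the induction hypothesis. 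For $E=\alpha\dashv E_1$, I would combine the preliminary claim $\mathrm{r}(\alpha)=\mathrm{r}'((\alpha)_X)$ with the induction hypothesis on $E_1$ to match the filtered sets $\{\mathrm{r}(\alpha)\mid \mathrm{r}(\alpha)\in L_{I,\mathrm{r}}(E_1)\}$ and $\{\mathrm{r}'((\alpha)_X)\mid \mathrm{r}'((\alpha)_X)\in L_{I,\mathrm{r}'}((E_1)_X)\}$.

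The only genuinely delicate point is the preliminary word-level claim, and even there the difficulty is purely a matter of bookkeeping: one must check that $(\cdot)_X$ behaves as a simultaneous substitution (guaranteed by functionality and the non-crossing property, both inherited from compatibility) and that compatibility of $\mathrm{r}$ with $X$ is precisely what makes the three cases of the $X$-associated realization well defined. Everything downstream is a mechanical transport of this identity through the inductive clauses, closely paralleling the structure already deployed for Lemma~\ref{lem r assoc egal r pour phi}.
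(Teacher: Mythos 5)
Your proposal is correct and follows essentially the same route as the paper: structural induction on $E$, with the base case reduced to checking single symbols of $\Sigma\cup\Gamma$, the $\mid$-case discharged by Lemma~\ref{lem r assoc egal r pour phi}, and all remaining cases by direct application of the induction hypothesis. The only difference is presentational: you factor the word-level identity $\mathrm{r}(\alpha)=\mathrm{r}'((\alpha)_X)$ out as a preliminary claim (proved via the morphism property of $\mathrm{r}$ and $(\cdot)_X$), whereas the paper establishes the same fact inside the induction by recurrence on the length of $\alpha$; the mathematical content is identical.
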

  \begin{proof}
    By induction over the structure of $E$.
    \begin{enumerate}
      \item Let us suppose that $E=\alpha$. 
        By recurrence over the length of $\alpha$.
        \begin{enumerate}
          \item If $\alpha=\varepsilon$ or $\alpha=a\in\Sigma$, $E=E_X$ and then $L_{I,\mathrm{r}}(E)=\{\alpha\}=L_{I,\mathrm{r}'}(E_X)$.
          \item If $\alpha=x\in\Gamma$, then           
        \begin{align*}
          L_{I,\mathrm{r}}(x)=\{\mathrm{r}(x)\} & =
            \begin{cases}
                      \{aw\} & \text{ if } \mathrm{r}(x)=aw\wedge (x,ax)\in X,\\
                      \{\varepsilon\} & \text{ if } \mathrm{r}(x)=\varepsilon \wedge (x,\varepsilon)\in X,\\
                      \{\mathrm{r}(x)\} & \text{ otherwise.}
            \end{cases}\\              
                  & =
            \begin{cases}
                      \{a\mathrm{r}'(x)\}& \text{ if } \mathrm{r}(x)=aw\wedge (x,ax)\in X,\\
                      \{\varepsilon\} & \text{ if } \mathrm{r}(x)=\varepsilon \wedge (x,\varepsilon)\in X,\\
                          \{\mathrm{r}'(x)\} & \text{ otherwise.}\\
                    \end{cases}
        \end{align*}
              
              Furthermore, 
              $E_X=
            \begin{cases}
                    ax & \text{ if } \mathrm{r}(x)=aw\wedge (x,ax)\in X,\\
                    \varepsilon & \text{ if } \mathrm{r}(x)=\varepsilon\wedge (x,\varepsilon)\in X,\\
                    x& \text{ otherwise,}
                  \end{cases}
               $
               
               and then 
               
               $L_{I,\mathrm{r}'}(E_X)=               
            \begin{cases}
                  \{a\mathrm{r}'(x)\} & \text{ if } \mathrm{r}(x)=aw\wedge (x,ax)\in X,\\
                  \{\varepsilon\} & \text{ if } \mathrm{r}(x)=\varepsilon \wedge (x,\varepsilon)\in X,\\
                  \{\mathrm{r}'(x)\} & \text{ otherwise.}
                \end{cases}$
              
          \item Suppose that $\alpha=\alpha'\beta$ with $\alpha'\in\Sigma\cup\Gamma$.
                Then        
        \begin{align*}
                    L_{I,\mathrm{r}}(\alpha) & =L_{I,\mathrm{r}}(\alpha')L_{I,\mathrm{r}}(\beta)\\
                    & =L_{I,\mathrm{r}'}(\alpha'_X) L_{I,\mathrm{r}'}(\beta_X) &\textbf{(Recurrence hypothesis)}\\
                    & =L_{I,\mathrm{r}'}(\alpha'_X\beta_X)\\
                    & =L_{I,\mathrm{r}'}(\alpha_X)       
        \end{align*}
        \end{enumerate}
        
      \item Let us suppose that $E=\alpha\vdash F$.
        Then    
        \begin{align*}
            L_{I,\mathrm{r}}(\alpha\vdash E) & =\{\mathrm{r}(\alpha)\mid \mathrm{r}(\alpha)\in L_{I,\mathrm{r}}(F)\}\\
            & =\{\mathrm{r}'(\alpha_X)\mid \mathrm{r}'(\alpha_X)\in L_{I,\mathrm{r}'}(F_X)\}&\textbf{(Induction hypothesis)}\\
            & =L_{I,\mathrm{r}'}(\alpha_X\vdash F_X)\\
            & =L_{I,\mathrm{r}'}(E_X)   
        \end{align*}
        
      \item Let us suppose that $E=F\mid\phi$.
        According to Lemma~\ref{lem r assoc egal r pour phi}, $\mathrm{eval}_{I,r}(\phi)=\mathrm{eval}_{I,r'}(\phi_X)$.
        Hence if $\mathrm{eval}_{I,r}(\phi)=\mathrm{eval}_{I,r'}(\phi')=0$, $L_{I,\mathrm{r}}(E)=L_{I,\mathrm{r}'}(E_X)=\emptyset$.
        Otherwise,
        \begin{align*}
            L_{I,\mathrm{r}}(F\mid\phi) & =L_{I,\mathrm{r}}(F)\\
            & =L_{I,\mathrm{r}'}(F_X)&\textbf{(Induction hypothesis)}\\
            & =L_{I,\mathrm{r}'}(F_X\mid\phi_X)\\
            & =L_{I,\mathrm{r}'}(E_X)
        \end{align*}
      \item Let us suppose that $E=F+G$.
        Then:
        \begin{align*}
            L_{I,\mathrm{r}}(F+G) & =L_{I,\mathrm{r}}(F)\cup L_{I,\mathrm{r}}(G)\\
            & =L_{I,\mathrm{r}'}(F_X)\cup L_{I,\mathrm{r}'}(G_X)&\textbf{(Induction hypothesis)}\\
            & =L_{I,\mathrm{r}'}(F_X+G_X)\\
            & =L_{I,\mathrm{r}'}(E_X)
        \end{align*}
        
      \item Let us suppose that $E=F\cdot G$.
        Then:
        \begin{align*}
            L_{I,\mathrm{r}}(F\cdot G) & =L_{I,\mathrm{r}}(F)\cdot L_{I,\mathrm{r}}(G)\\
            & =L_{I,\mathrm{r}'}(F_X)\cdot L_{I,\mathrm{r}'}(G_X)&\textbf{(Induction hypothesis)}\\
            & =L_{I,\mathrm{r}'}(F_X\cdot G_X)\\
            & =L_{I,\mathrm{r}'}(E_X)
        \end{align*}
      \item Let us suppose that $E=F^*$.
        Then:
        \begin{align*}
            L_{I,\mathrm{r}}(F^*) & =L_{I,\mathrm{r}}(F)^*\\
            & =L_{I,\mathrm{r}'}(F_X)^*&\textbf{(Induction hypothesis)}\\
            & =L_{I,\mathrm{r}'}(F_X^*)\\
            & =L_{I,\mathrm{r}'}(E_X)
        \end{align*}
    \end{enumerate}
    \qed
  \end{proof}
  
  Let us now show that the partial derivation can be used to perform the membership test over constrained expressions whenever the realization is not fixed: if a word $aw$ belongs to the language of a constrained expression $E$ whenever a realization $\mathrm{r}$ is considered, the partial derivation w.r.t. $a$ always produces at least a tuple $(E',X)$ where $E'$ denotes $w'$ when another realization is considered (the realization $X$-associated with $\mathrm{r}'$). 
  
  \begin{proposition}\label{prop eq quot deriv part}
    Let $\mathcal{E}=(\Sigma,\Gamma,\mathcal{P},\mathcal{F})$ be an expression environment and let $E$ be a constrained expression over $\mathcal{E}$. Let $I$ be an expression interpretation over $\mathcal{E}$ and $\mathrm{r}$ be a $\Gamma$-realization over $I$. Let $w$ be a word in $\Sigma^*$ and $a$ be a symbol in $\Sigma$. Then the  following  two conditions are equivalent:
    \begin{itemize}
      \item $w \in a^{-1}(L_{I,\mathrm{r}}(E))$
      \item there exists a tuple $(E',X)\in\frac{\partial}{\partial_a}(E)$ such that $w \in L_{I,\mathrm{r}'}(E')$, where $\mathrm{r}'$ is the realization $X$-associated with $\mathrm{r}$.
    \end{itemize}
  \end{proposition}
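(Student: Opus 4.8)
The plan is to prove both implications at once, by structural induction on the constrained expression $E$, reading the claim as the equivalence $aw\in L_{I,\mathrm{r}}(E)\Leftrightarrow \exists (E',X)\in\frac{\partial}{\partial_a}(E)$ with $\mathrm{r}$ compatible with $X$ and $w\in L_{I,\mathrm{r}'}(E')$, where throughout $\mathrm{r}'$ denotes the realization $X$-associated with $\mathrm{r}$. The existential must be read as ranging only over those $X$ compatible with $\mathrm{r}$, since otherwise $\mathrm{r}'$ is undefined; in the forward direction these $X$ will be produced by construction, matching the actual values of $\mathrm{r}$. The two workhorses are Lemma~\ref{lem r assoc egal r} and Lemma~\ref{lem r assoc egal r pour phi}: they state that substituting an assumption $X$ into an expression (resp. a formula) and simultaneously passing from $\mathrm{r}$ to $\mathrm{r}'$ leaves the denoted language (resp. the truth value) unchanged. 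These let me freely move an assumption between the realization and the syntax, which is exactly the operation each clause of Definition~\ref{def deriv expr cont exp} performs.

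\emph{Base cases.} For $E=\emptyset$ both sides are false. For $E=\alpha$ a word, $L_{I,\mathrm{r}}(\alpha)=\{\mathrm{r}(\alpha)\}$, and I would run a nested induction on $|\alpha|$ following Definition~\ref{def deriv expr cont mot}. If $\alpha=a\alpha'$ the assumption set is empty, $\mathrm{r}'=\mathrm{r}$, and $aw=\mathrm{r}(\alpha)$ iff $w=\mathrm{r}(\alpha')$. If $\alpha=x\alpha'$ with $x\in\Gamma$, I split on $\mathrm{r}(x)$: when $\mathrm{r}(x)=au$ the relevant tuple is $(x\cdot(\alpha')_{(x,ax)},\{(x,ax)\})$, and Lemma~\ref{lem r assoc egal r} applied to $\alpha'$ gives $\mathrm{r}'((\alpha')_{(x,ax)})=\mathrm{r}(\alpha')$, so the derived word evaluates to $u\,\mathrm{r}(\alpha')=w$; when $\mathrm{r}(x)=\varepsilon$ the compatible tuples come from $\frac{\partial}{\partial_a}((\alpha')_{(x,\varepsilon)})$, and since $\mathrm{r}(\alpha')=\mathrm{r}((\alpha')_{(x,\varepsilon)})$ and $(\alpha')_{(x,\varepsilon)}$ is strictly shorter than $\alpha$, the inner induction hypothesis closes the case.

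\emph{Inductive cases.} The sum is immediate from the union structure of both the derivative and $L_{I,\mathrm{r}}$. For the \emph{such that} operation $E_1\mid\phi$, I would combine the induction hypothesis on $E_1$ with the identity $\frac{\partial}{\partial_a}(E_1\mid\phi)=\frac{\partial}{\partial_a}(E_1)\mid\mid\phi$ and with Lemma~\ref{lem r assoc egal r pour phi}, which identifies $\mathrm{eval}_{I,\mathrm{r}}(\phi)$ with $\mathrm{eval}_{I,\mathrm{r}'}(\phi_X)$, so the guard is preserved when transferred onto the derived term. The star reduces to the catenation analysis. For catenation $E_1\cdot E_2$ I would factor any $aw=uv$ with $u\in L_{I,\mathrm{r}}(E_1)$ and $v\in L_{I,\mathrm{r}}(E_2)$ according to whether $u=\varepsilon$ or $u=au'$: in the first subcase $\varepsilon\in L_{I,\mathrm{r}}(E_1)$ is recorded by the guard $\varepsilon\dashv E_1$, which after transfer by Lemma~\ref{lem r assoc egal r} denotes exactly $\{\varepsilon\}$, and one derives $E_2$; in the second subcase one derives $E_1$, obtaining some $(E_1',X_1)$, and $v\in L_{I,\mathrm{r}}(E_2)=L_{I,\mathrm{r}'}((E_2)_{X_1})$ again by Lemma~\ref{lem r assoc egal r}, matching the term $(E_1'\cdot(E_2)_{X_1},X_1)$ produced by $\odot$. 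The membership case $\alpha\dashv E_1$ is treated in the same spirit: $aw\in L_{I,\mathrm{r}}(\alpha\dashv E_1)$ means $aw=\mathrm{r}(\alpha)$ and $aw\in L_{I,\mathrm{r}}(E_1)$, so I pair the word base case (yielding $(\alpha',X_1)$) with the induction hypothesis applied to $(E_1)_{X_1}$ (yielding $(E_2,X_2)$), and use Lemma~\ref{lem union X dis} to see that $X_1$ and $X_2$ act on disjoint variables, so that $\mathrm{r}$ is compatible with $X_1\cup X_2$ and the two associated realizations are consistent.

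The main obstacle I expect lies precisely in the catenation and membership cases, where an assumption $X_1$ fixed while deriving the first component must be applied by substitution to the second component \emph{before} that component is derived: one has to verify that a single realization $\mathrm{r}'$ ($X_1\cup X_2$-associated with $\mathrm{r}$) simultaneously witnesses membership for both derived factors, with no clash between the assumptions. This is exactly where Lemma~\ref{lem r assoc egal r} (language invariance under assumption transfer) and Lemma~\ref{lem union X dis} (disjointness of the variable supports of $X_1$ and $X_2$) combine to guarantee consistency; once they are invoked, each case collapses to the corresponding classical Antimirov identity.
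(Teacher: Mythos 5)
Your proposal is correct and takes essentially the same route as the paper's own proof: the same structural induction with a nested induction on the length of $\alpha$ for the word case (split on $\mathrm{r}(x)=au$ versus $\mathrm{r}(x)=\varepsilon$), and the same deployment of Lemma~\ref{lem r assoc egal r}, Lemma~\ref{lem r assoc egal r pour phi} and Lemma~\ref{lem union X dis} in the catenation, guard and membership cases. The only presentational difference is that you make explicit the restriction of the existential to assumption sets $X$ compatible with $\mathrm{r}$, which the paper leaves implicit in the phrase ``the realization $X$-associated with $\mathrm{r}$''.
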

  \begin{proof}
    By induction over the structure of $E$. By definition, $w\in a^{-1}(L_{I,\mathrm{r}}(E))$ $\Leftrightarrow$ $aw \in L_{I,\mathrm{r}}(E)$.     
    \begin{enumerate}
      \item Whenever $E\in\{\emptyset,\varepsilon\}$, $a^{-1}(L_{I,\mathrm{r}}(E))$ and $\frac{\partial}{\partial_a}(E)$ are both empty.
      \item Let us suppose that $E=\alpha$. Three cases can occur.
      \begin{enumerate}
        \item Let us suppose that $\alpha=b\alpha'$. If $b\neq a$, $a^{-1}(L_{I,\mathrm{r}}(E))=\emptyset$.        
        Otherwise (if $b=a$), $(\alpha',\emptyset)\in \frac{\partial}{\partial_a}(E)$. Furthermore,
        \begin{align*}
          aw \in L_{I,\mathrm{r}}(a\alpha') \Leftrightarrow w\in L_{I,\mathrm{r}}(\alpha')
        \end{align*}        
        Finally, since $\mathrm{r}'=\mathrm{r}$ is the realization $\emptyset$-associated with $\mathrm{r}$,
        \begin{align*}
          w\in a^{-1}(L_{I,\mathrm{r}}(\alpha)) \Leftrightarrow w\in L_{I,\mathrm{r}'}(\alpha')
        \end{align*}         
        \item Let us suppose that $\alpha=x\alpha'$ and that $\mathrm{r}(x)=aw'$ for some $w'\in\Sigma^*$. Let us denote by $\mathrm{r}'$ the realization $\{(x,ax)\}$-associated with $\mathrm{r}$. Then
        \begin{align*}
          \mathrm{r}(\alpha)=\mathrm{r'}(\alpha)_{x\leftarrow ax}=\mathrm{r}'(ax(\alpha')_{x\leftarrow ax})
        \end{align*}        
        As a direct consequence, 
        \begin{align*}
          aw \in L_{I,\mathrm{r}}(\alpha) \Leftrightarrow aw \in L_{I,\mathrm{r}'}(ax(\alpha')_{x\leftarrow ax}) \Leftrightarrow w \in L_{I,\mathrm{r}'}(x(\alpha')_{x\leftarrow ax})
        \end{align*}        
        Finally, it holds by definition that $(x(\alpha')_{x\leftarrow ax},\{(x,ax)\})\in \frac{\partial}{\partial_a}(E)$.        
        \item Let us suppose that $\alpha=x\alpha'$ and that $\mathrm{r}(x)=\varepsilon$. Then:
        \begin{align*}
          \mathrm{r}(\alpha) =\mathrm{r}(\alpha')=\mathrm{r}((\alpha')_{x\leftarrow\varepsilon})
        \end{align*}        
        As a direct consequence,
        \begin{align*}
          w\in a^{-1}(L_{I,\mathrm{r}}((\alpha')_{x\leftarrow\varepsilon}))
        \end{align*}        
        According to induction hypothesis, there exists a tuple $(\alpha'',X)$ belonging to $\frac{\partial}{\partial_a}((\alpha')_{x\leftarrow\varepsilon})$ such that $w \in L_{I,\mathrm{r}''}(\alpha'')$ with $\mathrm{r}''$ the realization $X$-associated with $\mathrm{r}$. Let $\mathrm{r}'$ be the realization $X\cup \{(x,\varepsilon)\}$-associated with $\mathrm{r}$. Since there is no occurrence of $x$ in $\alpha''$ (since $\alpha''$ is a derivated term of $\alpha_{x\leftarrow\varepsilon}$), it holds that $\mathrm{r}''(\alpha'')=\mathrm{r}'(\alpha'')$. Furthermore, it holds by definition that $(\alpha'',X\cup\{(x,\varepsilon)\})\in \frac{\partial}{\partial_a}(E)$. Finally, $w \in L_{I,\mathrm{r}'}(\alpha'')$
      \end{enumerate}
      \item Let us suppose that $E=\alpha\dashv E_1$. 
      Consider that there exists a tuple $(E',X)\in\frac{\partial}{\partial_a}(E)$ such that $w \in L_{I,\mathrm{r}'}(E')$, where $\mathrm{r}'$ is the realization $X$-associated with $\mathrm{r}$. 
      
      Equivalently, there exist $(\alpha',X_1)\in \frac{\partial}{\partial_a}(\alpha)$ and  $(E_2,X_2)\in \frac{\partial}{\partial_a}({E_1}_{X_1})$ 
      such that $(E'=(\alpha')_{X_2} \dashv E_2,X_1\cup X_2)\in \frac{\partial}{\partial_a}(E)$ and $w \in L_{I,\mathrm{r}'}(E')$, where $\mathrm{r}'$ is the realization $X_1\cup X_2$-associated with $\mathrm{r}$.
      
      By definition, $L_{I,\mathrm{r}'}(E')=L_{I,\mathrm{r}'}((\alpha')_{X_2})\cap L_{I,\mathrm{r}'}(E_2)$. Consequently $w\in L_{I,\mathrm{r}'}((\alpha')_{X_2})$ and $w\in L_{I,\mathrm{r}'}(E_2)$. Let us denote by $\mathrm{r}_1$ (resp. $\mathrm{r}_2$) the realization $X_1$-associated (resp. $X_2$-associated) with $\mathrm{r}$.
      
      Since $\mathrm{r}'$ is the realization $(X_1\cup X_2)$-associated with $\mathrm{r}$, and since according to Lemma~\ref{lem union X dis}, $\{x_1\mid \exists (x_1,u)\in X_1\} \cap \{x_2\mid \exists (x_2,u)\in X_2\}=\emptyset$, the by definition, for any symbol $x$ in $\Gamma$, the following equality is satisfied:
        \begin{align*}
        \mathrm{r}'(x)=
        \begin{cases}
            w & \text{ if } \mathrm{r}(x)=aw \wedge (x,ax)\in X_1,\\
            w & \text{ if } \mathrm{r}(x)=aw \wedge (x,ax)\in X_2,\\
            \varepsilon & \text{ if } \mathrm{r}(x)=\varepsilon \wedge (x,\varepsilon)\in X_1,\\
            \varepsilon & \text{ if } \mathrm{r}(x)=\varepsilon \wedge (x,\varepsilon)\in X_2,\\
            \mathrm{r}(x) & \text{ otherwise.}
          \end{cases}
        \end{align*}    
    By definitions of $\mathrm{r}_1$ and $\mathrm{r}_2$,  for any symbol $x$ in $\Gamma$:
        \begin{align*}
        \mathrm{r}_1(x)=
        \begin{cases}
            w & \text{ if } \mathrm{r}(x)=aw \wedge (x,ax)\in X_1,\\
            \varepsilon & \text{ if } \mathrm{r}(x)=\varepsilon \wedge (x,\varepsilon)\in X_1,\\
            \mathrm{r}(x) & \text{ otherwise,}
          \end{cases}\\
        \mathrm{r}_2(x)=
        \begin{cases}
            w & \text{ if } \mathrm{r}(x)=aw \wedge (x,ax)\in X_2,\\
            \varepsilon & \text{ if } \mathrm{r}(x)=\varepsilon \wedge (x,\varepsilon)\in X_2,\\
            \mathrm{r}(x) & \text{ otherwise.}
          \end{cases}
        \end{align*}    
    Hence, since $X_1\cap X_2=\emptyset$,
        \begin{align*}
        \mathrm{r}'(x)=
        \begin{cases}
            w & \text{ if } \mathrm{r}_1(x)=aw \wedge (x,ax)\in X_1,\\
            w & \text{ if } \mathrm{r}(x)=aw=\mathrm{r}_1(x) \wedge (x,ax)\in X_2\setminus X_1,\\
            \varepsilon & \text{ if } \mathrm{r}_1(x)=\varepsilon \wedge (x,\varepsilon)\in X_1,\\
            \varepsilon & \text{ if } \mathrm{r}(x)=\varepsilon=\mathrm{r}_1(x) \wedge (x,\varepsilon)\in X_2\setminus X_1,\\
            \mathrm{r}(x) & \text{ otherwise.}
          \end{cases}
        \end{align*}
    Consequently, $\mathrm{r}'$ is $X_2$-associated with $\mathrm{r}_1$. Symmetrically, $\mathrm{r}'$ is $X_1$-associated with $\mathrm{r}_2$.
    
      Since $w$ belongs to $L_{I,\mathrm{r}'}(E_2)$, there exists a tuple $(E_2,X_2)\in\frac{\partial}{\partial_a}((E_1)_{X_1})$ such that $w \in L_{I,\mathrm{r}'}(E_2)$, where $\mathrm{r}'$ is the realization $X_2$-associated with $\mathrm{r}_1$. By induction hypothesis $aw\in L_{I,\mathrm{r}_1}({E_1}_{X_1})$. According to Lemma~\ref{lem r assoc egal r}, $aw\in L_{I,\mathrm{r}_1}({E_1}_{X_1})$ $\Leftrightarrow$ $aw\in L_{I,\mathrm{r}}({E_1})$. Since $\mathrm{r}'$ is $X_2$-associated with $\mathrm{r}_1$, according to Lemma~\ref{lem r assoc egal r}, $L_{I,\mathrm{r}'}((\alpha')_{X_2})=L_{I,\mathrm{r}_1}(\alpha')$. Hence $w\in L_{I,\mathrm{r}_1}(\alpha')$ and by induction hypothesis $aw\in L_{I,\mathrm{r}}(\alpha)$.      
      Finally, it holds that $aw\in L_{I,\mathrm{r}}(E)$.
      \item Let us suppose that $E=E_1 + E_2$. Then       
      $w\in a^{-1}(L_{i,r}(E_1+ E_2)$      
      $\Leftrightarrow$ $w\in a^{-1}(L_{i,r}(E_1))$ $\vee$ $w\in a^{-1}(L_{i,r}(E_2))$.      
      By induction, it is equivalent to $\exists k\in\{1,2\} \mid \exists (E',X)\in\frac{\partial}{\partial_a}(E_k)$, $w\in L_{I,r'}(E')$ where $r'$ is the realization $X$ associated with $r$.       
      Since $\frac{\partial}{\partial_a}(E_1)\cup \frac{\partial}{\partial_a}(E_2)\subset \frac{\partial}{\partial_a}(E_1+E_2)$, it is equivalent to $\exists (E',X)\in\frac{\partial}{\partial_a}(E_1+E_2)$, $w\in L_{I,r'}(E')$ where $r'$ is the realization $X$ associated with $r$.
      \item Let us suppose that $E=E_1\cdot E_2$. Then 
        \begin{align*}
        w\in a^{-1}(L_{i,r}(E_1\cdot E_2)
        \Leftrightarrow
        \begin{cases}
          & w\in a^{-1}(L_{I,r}(E_1))\cdot L_{I,r}(E_2)\\
          \vee &  (\varepsilon\in L_{I,r}(E_1) \wedge w\in a^{-1}(L_{I,r}(E_2)))\\
        \end{cases}
        \end{align*}            
      Moreover 
        \begin{align*}
        w\in a^{-1}(L_{I,r}(E_1))\cdot L_{I,r}(E_2)
        \Leftrightarrow
        \begin{cases}
          & w=w_1\cdot w_2\\
          \wedge & \exists (E',X)\in\frac{\partial}{\partial_a}(E_1), w_1\in L_{I,r'}(E') \\
          & \text{ where $r'$ is the realization $X$-associated with $r$}\\
          \wedge& w_2\in L_{I,r}(E_2)
        \end{cases}
        \end{align*}      
      According to Lemma~\ref{lem r assoc egal r}, $w_2\in L_{I,r}(E_2)$ $\Leftrightarrow$ $w_2\in L_{I,r'}((E_2)_X)$.      
      Hence
        \begin{align*}
        w\in a^{-1}(L_{I,r}(E_1))\cdot L_{I,r}(E_2)
        \Leftrightarrow
        \begin{cases}
          & w\in L_{I,r'}(E'\cdot ({E_2}_X))\\
          & \text{ where $r'$ is the realization $X$-associated with $r$}\\
          \wedge & (E'\cdot (E_2)_X,X) \in \frac{\partial}{\partial_a}(E_1)\odot E_2\subset \frac{\partial}{\partial_a}(E_1\cdot E_2)
        \end{cases}
        \end{align*}      
      Finally consider that $ \varepsilon\in L_{I,r}(E_1)$ $\wedge$ $w\in a^{-1}(L_{I,r}(E_2))$.      
      By induction $w\in a^{-1}(L_{I,r}(E_2))$      
      $\Leftrightarrow$ $\exists (E',X)\in\frac{\partial}{\partial_a}(E_2),$ $w\in L_{I,r'}(E')$ where $r'$ is the realization $X$-associated with $r$.      
      Moreover $ \varepsilon\in L_{I,r}(E_1)$ $\Leftrightarrow$ $ \varepsilon\in L_{I,r}(\varepsilon \dashv E_1)$ is equivalent to $ \varepsilon\in L_{I,r'}(\varepsilon \dashv (E_1)_X)$ according to Lemma~\ref{lem r assoc egal r}.
      
      Hence 
        \begin{align*}
        w\in a^{-1}(L_{I,r}(E_2))
        \Leftrightarrow
        \begin{cases}
          & w\in L_{I,r'}( (\varepsilon\dashv (E_1)_X) \cdot E_2)\\
          & \text{ where $r'$ is the realization $X$-associated with $r$}\\
          \wedge & ((\varepsilon\dashv (E_1)_X) \cdot E',X) \in (\varepsilon\dashv (E_1)_X)\odot \frac{\partial}{\partial_a}(E_2) \subset \frac{\partial}{\partial_a}(E_1\cdot E_2)
        \end{cases}
        \end{align*}
      \item Let us suppose that $E=E_1^*$. Then       
      $w\in a^{-1}(L_{i,r}(E_1^*)$      
      $\Leftrightarrow$ $w=w_1\cdot w_2 $ $\wedge$ $w_1\in a^{-1}(L_{i,r}(E_1))$ $\wedge$ $w_2\in L_{i,r}(E_1^*)$.      
      By induction, $w_1\in a^{-1}(L_{i,r}(E_1))$      
      $\Leftrightarrow$ $\exists (E',X)\in\frac{\partial}{\partial_a}(E_1)$ $w_1\in L_{I,r'}(E')$ where $r'$ is the realization $X$-associated with $r$.      
      According to Lemma~\ref{lem r assoc egal r}, $w_2\in L_{i,r}(E_1^*)$      
      $\Leftrightarrow$ $w_2\in L_{i,r'}((E_1^*)_X)$.      
      Hence 
        \begin{align*}
        w\in a^{-1}(L_{i,r}(E_1^*)
        \Leftrightarrow
        \begin{cases}
          & w=w_1\cdot w_2 \\
          \wedge & w\in L_{I,r'}(E'\cdot (E_1^*)_X)\\
          & \text{ $r'$ is the realization $X$-associated with $r$}\\
          \wedge & (E'\cdot (E_1^*)_X,X) \in \frac{\partial}{\partial_a}(E_1)\odot E_1^* \subset \frac{\partial}{\partial_a}(E_1^*)\\
        \end{cases}
        \end{align*}
      \item Let us suppose that $E=E_1\mid \phi$. Then      
      $w\in a^{-1}(L_{i,r}(E_1\mid \phi)$      
      $\Leftrightarrow$ $w\in a^{-1}(L_{i,r}(E_1))$ $\wedge$ $\mathrm{eval}_{I,r}(\phi)$.      
      By induction, $w_1\in a^{-1}(L_{i,r}(E_1))$      
      $\Leftrightarrow$ $\exists (E',X)\in\frac{\partial}{\partial_a}(E_1)$, $w_1\in L_{I,r'}(E')$ where $r'$ is the realization $X$ associated with $r$.      
      According to Lemma\ref{lem r assoc egal r pour phi}, $\mathrm{eval}_{I,r}(\phi)=\mathrm{eval}_{I,r'}(\phi_X)$.      
      Consequently, 
        \begin{align*}
        w\in a^{-1}(L_{i,r}(E_1\mid \phi)
        \Leftrightarrow
        \begin{cases}
          & w\in L_{i,r'}(E' \mid \phi_X)\\
          & \text{ where $r'$ is the realization $X$-associated with $r$}\\
          \wedge & (E'\mid \phi_X,X)\in \frac{\partial}{\partial_a}(E_1) \mid\mid \phi \subset \frac{\partial}{\partial_a}(E_1\mid\phi)\\
        \end{cases}
        \end{align*}       
    \end{enumerate}    
    \qed
  \end{proof}
  
  As a direct consequence of Proposition~\ref{prop eq quot deriv part}, the partial derivation of a constrained expression w.r.t. a symbol is valid.
  
  \begin{theorem}\label{thm egal quot deriv part}
    Let $\mathcal{E}=(\Sigma,\Gamma,\mathcal{P},\mathcal{F})$ be an expression environment and let $E$ be a constrained expression over $\mathcal{E}$. Let $I$ be an expression interpretation over $\mathcal{E}$. Let $a$ be a symbol in $\Sigma$. Then the  following two conditions hold:
    \begin{enumerate}
      \item $a^{-1}(L_{I}(E))= \bigcup_{(E',X)\in\frac{\partial}{\partial_a}(E)} L_I(E')$,
      \item $a^{-1}(L(E))= \bigcup_{(E',X)\in\frac{\partial}{\partial_a}(E)} L(E')$.
    \end{enumerate}
  \end{theorem}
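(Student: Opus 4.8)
The plan is to lift the fixed-realization equivalence of Proposition~\ref{prop eq quot deriv part} first to the union over all realizations (giving condition~1) and then to the union over all interpretations (giving condition~2). Throughout, the only abstract fact about residuals that I would use is that quotienting commutes with arbitrary unions: for any family $(L_i)_i$ of languages, $a^{-1}(\bigcup_i L_i)=\bigcup_i a^{-1}(L_i)$, which is immediate from the definition $a^{-1}(L)=\{w\mid aw\in L\}$.

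For condition~1, I would unfold $L_I(E)=\bigcup_{\mathrm{r}\in\mathrm{Real}_\Gamma(I)}L_{I,\mathrm{r}}(E)$ and, using the commutation above, reduce the statement to proving the two inclusions between $\bigcup_{\mathrm{r}}a^{-1}(L_{I,\mathrm{r}}(E))$ and $\bigcup_{(E',X)\in\frac{\partial}{\partial_a}(E)}\bigcup_{\mathrm{r}'}L_{I,\mathrm{r}'}(E')$. The forward inclusion is the easy direction: if $w\in a^{-1}(L_{I,\mathrm{r}}(E))$ for some $\mathrm{r}$, then Proposition~\ref{prop eq quot deriv part} yields a tuple $(E',X)$ and the realization $\mathrm{r}'$ that is $X$-associated with $\mathrm{r}$ such that $w\in L_{I,\mathrm{r}'}(E')$; since $\mathrm{r}'$ is itself a realization over $I$, this gives $w\in L_I(E')$ and hence membership in the right-hand union.

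The reverse inclusion is where the work lies, and it is the step I expect to be the main obstacle. Here one is handed an arbitrary tuple $(E',X)\in\frac{\partial}{\partial_a}(E)$ and an arbitrary realization $\mathrm{r}'$ with $w\in L_{I,\mathrm{r}'}(E')$, and must produce a realization $\mathrm{r}$ of which $\mathrm{r}'$ is (essentially) the $X$-associated realization, so that the backward direction of Proposition~\ref{prop eq quot deriv part} applies and yields $aw\in L_{I,\mathrm{r}}(E)$. I would construct this \emph{pre-realization} $\mathrm{r}$ by inverting the definition of the associated realization on each variable $x$: set $\mathrm{r}(x)=a\,\mathrm{r}'(x)$ when $(x,ax)\in X$, set $\mathrm{r}(x)=\varepsilon$ when $(x,\varepsilon)\in X$, and set $\mathrm{r}(x)=\mathrm{r}'(x)$ otherwise. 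This is well defined because $X$ is functional, and by construction $\mathrm{r}$ is compatible with $X$. The $X$-associated realization $\mathrm{r}''$ of this $\mathrm{r}$ then agrees with $\mathrm{r}'$ on every variable that is not eliminated by an assumption $(x,\varepsilon)$; for the variables $x$ with $(x,\varepsilon)\in X$ the two realizations may differ, but such an $x$ has no occurrence left in $E'$, so $L_{I,\mathrm{r}''}(E')=L_{I,\mathrm{r}'}(E')$ and $w\in L_{I,\mathrm{r}''}(E')$ still holds. Applying Proposition~\ref{prop eq quot deriv part} to $\mathrm{r}$ then gives $w\in a^{-1}(L_{I,\mathrm{r}}(E))\subseteq a^{-1}(L_I(E))$. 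The delicate point to justify carefully is precisely that $(x,\varepsilon)\in X$ forces $x$ to be absent from $E'$: this follows from the fact that the corresponding substitution $x\leftarrow\varepsilon$ erases all occurrences of $x$ and, by the non-crossing property together with Lemma~\ref{lem union X dis}, no later derivation step can reintroduce it.

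Finally, condition~2 follows formally from condition~1 with no further combinatorial argument: writing $L(E)=\bigcup_{I\in\mathrm{Int}(\mathcal{E})}L_I(E)$, the commutation of $a^{-1}$ with unions gives $a^{-1}(L(E))=\bigcup_I a^{-1}(L_I(E))$, and substituting condition~1 inside turns this into $\bigcup_I\bigcup_{(E',X)}L_I(E')$. Since these are ordinary set unions they may be exchanged, yielding $\bigcup_{(E',X)}\bigcup_I L_I(E')=\bigcup_{(E',X)}L(E')$, as required.
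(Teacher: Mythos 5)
Your proof is correct and follows essentially the same route as the paper's: the forward inclusion of condition~1 comes directly from Proposition~\ref{prop eq quot deriv part}, the reverse inclusion is obtained by inverting the associated-realization construction to build a pre-realization $\mathrm{r}$ from $\mathrm{r}'$ and $X$, and condition~2 follows by unfolding $L(E)$ over interpretations and exchanging unions. If anything, your handling of the variables with $(x,\varepsilon)\in X$ is slightly more careful than the paper's, which sets $\mathrm{r}(x)=\varepsilon$ only when $\mathrm{r}'(x)=\varepsilon$ (so that compatibility of $\mathrm{r}$ with $X$ can silently fail) and asserts without comment that $\mathrm{r}'$ is exactly the $X$-associated realization, whereas you ensure compatibility by construction and explicitly justify the resulting discrepancy on erased variables by the fact that they no longer occur in $E'$.
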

  \begin{proof}
    Let $w$ be a word in $\Sigma^*$. 
    \begin{enumerate}
      \item By definition of $L_{I}(E)$, $w\in a^{-1}(L_{I}(E))$ $\Leftrightarrow$ there exists a realization $\mathrm{r}$ such that $w\in a^{-1}(L_{I,\mathrm{r}}(E))$.
      
      According to Proposition~\ref{prop eq quot deriv part},      
      $w\in a^{-1}(L_{I,\mathrm{r}}(E))$ $\Leftrightarrow$ there exists a tuple $(E',X)\in\frac{\partial}{\partial_a}(E)$ such that $w \in L_{I,\mathrm{r}'}(E')$, where $\mathrm{r}'$ is the realization $X$-associated with $\mathrm{r}$. As a direct conclusion, $w\in \bigcup_{(E',X)\in\frac{\partial}{\partial_a}(E)} L_{I,\mathrm{r}'}(E')\subset \bigcup_{(E',X)\in\frac{\partial}{\partial_a}(E)} L_I(E')$.
      
      Suppose that $w\in \bigcup_{(E',X)\in\frac{\partial}{\partial_a}(E)} L_I(E')$. Hence there exists a tuple $(E',X)\in\frac{\partial}{\partial_a}(E)$ such that $w\in L_I(E')$. By definition of $ L_I(E')$, there exists a realization $\mathrm{r}'$ such that $w\in L_{I,\mathrm{r}'}(E')$. Let $\mathrm{r}$ be the realization defined for any symbol $x$ in $\Gamma$ as follows:
        \begin{align*}
        \mathrm{r}(x)=
        \begin{cases}
              aw & \text{ if } \mathrm{r}'(x)=w \wedge (x,ax)\in X,\\
              \varepsilon & \text{ if } \mathrm{r}'(x)=\varepsilon \wedge (x,\varepsilon)\in X,\\
              \mathrm{r}'(x) & \text{ otherwise.}
            \end{cases}
        \end{align*}
      
      As a direct consequence, $\mathrm{r}'$ is the realization $X$-associated with $\mathrm{r}$, and according to Proposition~\ref{prop eq quot deriv part}, since there exists a tuple $(E',X)\in\frac{\partial}{\partial_a}(E)$ such that $w \in L_{I,\mathrm{r}'}(E')$, where $\mathrm{r}'$ is the realization $X$-associated with $\mathrm{r}$, it holds that $w\in a^{-1}(L_{I,\mathrm{r}}(E))$. By definition of $L_{I}(E)$, $w\in a^{-1}(L_{I}(E))$.
      \item By definition of $L(E)$, $w\in a^{-1}(L(E))$ $\Leftrightarrow$ there exists an interpretation $I$ such that $w\in a^{-1}(L_I(E))$. We have shown that there exists an interpretation $I$ such that $w\in a^{-1}(L_I(E))$ $\Leftrightarrow$ there exists an interpretation $I$ such that $w\in \bigcup_{(E',X)\in\frac{\partial}{\partial_a}(E)} L_I(E')$, 
      %Finally, by definition of $L(E')$, there exists an interpretation $I$ such that $w\in \bigcup_{(E',X)\in\frac{\partial}{\partial_a}(E)} L_I(E')$ if and only if $w\in \bigcup_{(E',X)\in\frac{\partial}{\partial_a}(E)} L(E')$.
      which is by definition of $L(E')$ equivalent to the fact that $w\in$ $ \bigcup_{(E',X)\in\frac{\partial}{\partial_a}(E)} L(E')$. 
    \end{enumerate}
    \qed
  \end{proof}
  
  The partial derivation can be extended from symbols to words as follows: 
  
  \begin{definition}[Word Derivative]
    Let $\mathcal{E}=(\Sigma,\Gamma,\mathcal{P},\mathcal{F})$ be an expression environment and let $E$ be a constrained expression over $\mathcal{E}$. Let $a$ be a symbol in $\Sigma$ and $w$ be a word in $\Sigma^*$. Then:
        \begin{align*}
        \frac{\partial}{\partial_{aw}}(E)=
        \begin{cases}
            \frac{\partial}{\partial_{a}}(E) & \text{ if }w=\varepsilon,\\
            \bigcup_{(E',X)\in\frac{\partial}{\partial_{a}}(E)} \frac{\partial}{\partial_{w}}(E') & \text{ otherwise.}
          \end{cases}
        \end{align*}
  \end{definition}  
  
  \begin{example}\label{ex deriv anbncn abc}
    Let us continue Example~\ref{ex deriv anbncn}.
    Let us set $E'=((E_a\cdot (E_b \cdot E_c))\mid \sim(ax,y)\wedge \sim (y,z)$ and let us compute 
    \begin{align*}
      \frac{\partial}{\partial_{ab}}(E)&\equiv \frac{\partial}{\partial_b}(E')\\
      &=\frac{\partial}{\partial_b}(E_a\cdot (E_b \cdot E_c)) \mid\mid \sim(ax,y)\wedge \sim (y,z)\\
      \frac{\partial}{\partial_b}(E_a\cdot (E_b \cdot E_c))&=
        \frac{\partial}{\partial_b}(E_a)\odot (E_b \cdot E_c)
        \cup
        (\varepsilon\dashv E_a)\odot \frac{\partial}{\partial_b}(E_b\cdot E_c)\\
        \frac{\partial}{\partial_b}(E_b\cdot E_c) &= 
        (\varepsilon\dashv E_b)\odot \frac{\partial}{\partial_b}(E_c)
    \end{align*}
    Furthermore
    \begin{align*}
      \frac{\partial}{\partial_b}(E_a) &\equiv \emptyset\\
      \frac{\partial}{\partial_b}(E_b) &=\{(E_b,\{(y,by)\})\}\\
      \frac{\partial}{\partial_b}(E_c) &\equiv \emptyset
    \end{align*}
    Then
    \begin{align*}
      \frac{\partial}{\partial_b}(E_b\cdot E_c)&\equiv \{(E_b\cdot E_c,\{(y,by)\})\}\\
      \frac{\partial}{\partial_b}(E_a\cdot (E_b \cdot E_c))&\equiv(\varepsilon\dashv E_a)\odot \frac{\partial}{\partial_b}(E_b\cdot E_c)\\
      &\equiv \{ ((\varepsilon\dashv E_a)\cdot E_b\cdot E_c,\{(y,by)\})
    \end{align*}
    And consequently
    \begin{align*}
      \frac{\partial}{\partial_{ab}}(E')&\equiv \{((\varepsilon\dashv E_a)\cdot E_b \cdot E_c)\mid \sim(ax,by)\wedge \sim (by,z),\{(y,by)\})\}
    \end{align*}
    Finally, setting
    \begin{align*}
      E''&=(\varepsilon\dashv E_a)\cdot E_b \cdot E_c)\mid \sim(ax,by)\wedge \sim (by,z)
    \end{align*}
    one can compute
    \begin{align*}
      \frac{\partial}{\partial_{abc}}(E)& \equiv \frac{\partial}{\partial_{c}}(E'')\\
      & \equiv \{(((\varepsilon\dashv E_a)\cdot (\varepsilon\dashv E_b) \cdot E_c)\mid \sim(ax,by)\wedge \sim (by,cz),\{(z,cz)\})\}
    \end{align*}
  \end{example}
  
  \begin{theorem}
    Let $\mathcal{E}=(\Sigma,\Gamma,\mathcal{P},\mathcal{F})$ be an expression environment and let $E$ be a constrained expression over $\mathcal{E}$. Let $w$ be a word in $\Sigma^+$. Then the following conditions hold:
    \begin{enumerate}
      \item $w^{-1}(L_{I}(E))= \bigcup_{(E',X)\in\frac{\partial}{\partial_w}(E)} L_I(E')$,
      \item $w^{-1}(L(E))= \bigcup_{(E',X)\in\frac{\partial}{\partial_w}(E)} L(E')$.
    \end{enumerate}
  \end{theorem}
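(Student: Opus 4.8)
The plan is to prove the theorem by induction over the length of the word $w$, using Theorem~\ref{thm egal quot deriv part} as the base case and the definition of the word derivative to drive the inductive step. Both statements (the $I$-language version and the $L$ version) have exactly the same structure, so I would treat them in parallel; the only difference is whether we write $L_I$ or $L$ throughout. The key algebraic fact I will exploit is that residuation factors through catenation: for any word $w=a w'$ with $a\in\Sigma$ and $w'\in\Sigma^*$, we have $(aw')^{-1}(L)=(w')^{-1}(a^{-1}(L))$, which mirrors the recursive clause $\frac{\partial}{\partial_{aw'}}(E)=\bigcup_{(E',X)\in\frac{\partial}{\partial_a}(E)}\frac{\partial}{\partial_{w'}}(E')$ in the definition of the word derivative.

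First I would establish the base case, $w=a\in\Sigma$ (so $w'=\varepsilon$): here $\frac{\partial}{\partial_a}(E)$ is the symbol derivative and the claim is precisely Theorem~\ref{thm egal quot deriv part}, for both the $L_I$ and the $L$ versions. For the inductive step, suppose $w=a w'$ with $w'\in\Sigma^+$ and that the theorem holds for the strictly shorter word $w'$. I would compute, for the first statement,
\begin{align*}
w^{-1}(L_I(E)) &= (w')^{-1}\bigl(a^{-1}(L_I(E))\bigr)\\
&= (w')^{-1}\Bigl(\bigcup_{(E',X)\in\frac{\partial}{\partial_a}(E)} L_I(E')\Bigr)\\
&= \bigcup_{(E',X)\in\frac{\partial}{\partial_a}(E)} (w')^{-1}\bigl(L_I(E')\bigr)\\
&= \bigcup_{(E',X)\in\frac{\partial}{\partial_a}(E)}\ \bigcup_{(E'',X')\in\frac{\partial}{\partial_{w'}}(E')} L_I(E'')\\
&= \bigcup_{(E'',X')\in\frac{\partial}{\partial_{w}}(E)} L_I(E''),
\end{align*}
where the second equality is Theorem~\ref{thm egal quot deriv part}, the third uses that residuation commutes with arbitrary unions, the fourth is the induction hypothesis applied to each derived expression $E'$, and the last is the definition of $\frac{\partial}{\partial_w}(E)$. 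The second statement follows by the identical chain of equalities with $L$ in place of $L_I$, invoking the second item of Theorem~\ref{thm egal quot deriv part} instead of the first.

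The main thing to be careful about is the passage from the symbol-level result to the word-level one through the union: the derived expressions $E'$ appearing in $\frac{\partial}{\partial_a}(E)$ carry assumption sets $X$, and the word derivative discards these intermediate sets while accumulating the final ones. Since Theorem~\ref{thm egal quot deriv part} already states the equality purely in terms of the $I$-languages (respectively the languages) of the first components $E'$, with the assumption sets $X$ not appearing in the language identity itself, the induction goes through cleanly: I never need to track how $X$ and $X'$ combine, only that the union ranges over all pairs produced by the (already proved) symbol derivative and then by the shorter word derivative. The one subtlety worth noting is that the hypothesis $w\in\Sigma^+$ (rather than $\Sigma^*$) is essential, because Theorem~\ref{thm egal quot deriv part} handles a single symbol and the empty-word case is governed by the separate empty-word membership test discussed in the following section; restricting to nonempty words keeps the recursion well-founded and ensures the base case is a genuine symbol derivative.
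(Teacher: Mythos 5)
Your proof is correct, and for item 1 it is exactly the paper's argument: induction (recurrence) on the length of $w$, with the base case $w=a\in\Sigma$ given by Theorem~\ref{thm egal quot deriv part}, and the inductive step obtained by factoring the residual as $w^{-1}(L_I(E))=(w')^{-1}(a^{-1}(L_I(E)))$, applying the symbol-level theorem, commuting the residual with the union, invoking the induction hypothesis on each derived expression $E'$, and collapsing the double union via the definition of $\frac{\partial}{\partial_w}$. The only place you diverge is item 2: you rerun the same induction with $L$ in place of $L_I$, using the second item of Theorem~\ref{thm egal quot deriv part}, whereas the paper never inducts for item 2 at all; it derives it from item 1 by unwinding $L(E)=\bigcup_{I\in \mathrm{Int}(\mathcal{E})}L_I(E)$: a word $u$ lies in $w^{-1}(L(E))$ iff $u\in w^{-1}(L_I(E))$ for some interpretation $I$, iff (by item 1) $u\in\bigcup_{(E',X)\in\frac{\partial}{\partial_w}(E)}L_I(E')$ for some $I$, iff $u\in\bigcup_{(E',X)\in\frac{\partial}{\partial_w}(E)}L(E')$. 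Both routes are sound: the paper's does one induction and then a cheap union-swapping argument, while yours carries two parallel induction hypotheses but needs no separate argument for item 2. Your closing observations, that the language identity never needs to track how the assumption sets $X$ and $X'$ combine (the word derivative simply discards the intermediate ones), and that the restriction $w\in\Sigma^+$ is what anchors the recursion at a genuine symbol derivative, are accurate and consistent with the paper's treatment.
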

  \begin{proof}
    \ 
    \begin{enumerate}
      \item By recurrence over the length of $w$. If $w\in\Sigma$, the condition is satisfied according to Theorem~\ref{thm egal quot deriv part}. Let $w=aw'$ with $a\in\Sigma$ and $w'\in\Sigma^+$. Then $w^{-1}(L_{I}(E))=w'^{-1}(a^{-1}(L_{I}(E)))$. According to Theorem~\ref{thm egal quot deriv part}, it holds that $w'^{-1}(a^{-1}(L_{I}(E)))$ $=$ $w'^{-1}(\bigcup_{(E',X)\in\frac{\partial}{\partial_a}(E)} L_I(E'))$ that equals $\bigcup_{(E',X)\in\frac{\partial}{\partial_a}(E)} w'^{-1}(L_I(E'))$. By recurrence hypothesis, $w'^{-1}(L_I(E'))=\bigcup_{(E'',X')\in\frac{\partial}{\partial_{w'}}(E')} L_I(E'')$.       
      Consequently $\bigcup_{(E',X)\in\frac{\partial}{\partial_a}(E)} w'^{-1}(L_I(E'))= \bigcup_{(E',X)\in\frac{\partial}{\partial_a}(E)} \bigcup_{(E'',X')\in\frac{\partial}{\partial_{w'}}(E')} L_I(E'')$. Since by definition, $\{(E'',X')\in\frac{\partial}{\partial_{w'}}(E')\mid (E',X)\in\frac{\partial}{\partial_a}(E)\}$ is equal to $\{(E'',X')\in\frac{\partial}{\partial_{aw'}}(E)\}$, then 
      \begin{align*}
        \bigcup_{
         \subalign{
          (E',X)&\in\frac{\partial}{\partial_a}(E)\\
          (E'',X')&\in\frac{\partial}{\partial_{w'}}(E')
         }
        } L_I(E'')=\bigcup_{(E',X)\in\frac{\partial}{\partial_w}(E)} L_I(E')
      \end{align*}
      \item Let $u$ be a word in $w^{-1}(L(E))$. By definition, it is equivalent to the fact that there exists an interpretation $I$ such that $u\in w^{-1}(L_I(E))$. We have shown that there exists an interpretation $I$ such that $u\in w^{-1}(L_I(E))$ if and only if there exists an interpretation $I$ such that $u\in \bigcup_{(E',X)\in\frac{\partial}{\partial_w}(E)} L_I(E')$, which is equivalent by definition of $L(E')$ to $u\in \bigcup_{(E',X)\in\frac{\partial}{\partial_w}(E)} L(E')$.
    \end{enumerate}
    \qed
  \end{proof}
  
  \begin{corollary}
    Let $\mathcal{E}=(\Sigma,\Gamma,\mathcal{P},\mathcal{F})$ be an expression environment and let $E$ be a constrained expression over $\mathcal{E}$. Let $w$ be a word in $\Sigma^+$. Then:
        \begin{align*}
          w \in L_{I}(E) &\Leftrightarrow \exists (E',X)\in\frac{\partial}{\partial_w}(E),\ \varepsilon\in L_I(E')\\
          w \in L(E) &\Leftrightarrow \exists (E',X)\in\frac{\partial}{\partial_w}(E),\ \varepsilon\in L(E')
        \end{align*}
  \end{corollary}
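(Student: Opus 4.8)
The plan is to reduce the membership test for an arbitrary nonempty word $w$ to a membership test for $\varepsilon$, using the classical residual characterization together with the theorem immediately preceding this corollary. The essential observation is the identity relating membership and residuals: for any language $L$ over $\Sigma$, one has $w\in L$ if and only if $\varepsilon\in w^{-1}(L)$, since $w^{-1}(L)=\{w'\in\Sigma^*\mid ww'\in L\}$ and taking $w'=\varepsilon$ yields $w\varepsilon=w\in L$. First I would state this equivalence explicitly for both $L=L_{I}(E)$ and $L=L(E)$.

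Next I would invoke the preceding theorem, which gives $w^{-1}(L_{I}(E))=\bigcup_{(E',X)\in\frac{\partial}{\partial_w}(E)}L_I(E')$ together with the analogous equality for $L(E)$. Substituting this into the residual characterization, the condition $\varepsilon\in w^{-1}(L_{I}(E))$ becomes the condition $\varepsilon\in\bigcup_{(E',X)\in\frac{\partial}{\partial_w}(E)}L_I(E')$, and similarly for $L(E)$.

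Finally, since an element belongs to a union of languages exactly when it belongs to at least one member of that union, this last condition is equivalent to the existence of a tuple $(E',X)\in\frac{\partial}{\partial_w}(E)$ with $\varepsilon\in L_I(E')$. Chaining the three equivalences establishes the first stated equivalence, and repeating the argument verbatim with $L(E')$ in place of $L_I(E')$ yields the second.

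As for difficulty, there is essentially no obstacle here: all the substantive work, namely validating the derivation and establishing the residual formula, is already carried out in the preceding theorem, so this corollary is a routine unwinding of definitions. The only point deserving the slightest care is that the statement is restricted to $w\in\Sigma^+$ rather than $w\in\Sigma^*$; this restriction is precisely what guarantees that $\frac{\partial}{\partial_w}(E)$ is well defined through the word-derivative recursion, which bottoms out at a single-symbol derivative, so no separate treatment of the empty word is required and the residual theorem applies directly.
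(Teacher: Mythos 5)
Your proposal is correct and follows exactly the route the paper intends: the corollary is stated without a separate proof precisely because it is the immediate combination of the residual identity $w\in L \Leftrightarrow \varepsilon\in w^{-1}(L)$ with the preceding theorem's equalities $w^{-1}(L_I(E))=\bigcup_{(E',X)\in\frac{\partial}{\partial_w}(E)}L_I(E')$ and $w^{-1}(L(E))=\bigcup_{(E',X)\in\frac{\partial}{\partial_w}(E)}L(E')$, plus the triviality that membership in a union means membership in some member. Your closing remark about $w\in\Sigma^+$ being required for the word derivative to be defined is also accurate.
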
 
  
  \begin{example}\label{ex derivativ}
    Let us consider the expression $E_1=xb^*y\mid \sim(\mathrm{f}(x),\mathrm{f}(y))$ from Example~\ref{ex exp cons lang}. Let us compute the constrained derivative of $E_1$ w.r.t. the symbol $a$. Since the expression starts with a variable, an assumption has to be made and the process can be expressed as follows: 
    \begin{enumerate}
      \item Maybe the variable $x$ starts with an $a$. In this case, we replace all the occurrences of $x$ by $ax$ except the first one, where the symbol $a$ is erased by the derivation. Hence we get the tuple $(F_1,\{(x,ax)\})$ with $F_1=xb^*y\mid \sim(\mathrm{f}(ax),\mathrm{f}(y))$.
      \item Otherwise the variable $x$ can be replaced by $\varepsilon$, and we try to derive the obtained expression $\varepsilon b^*y\mid \sim(\mathrm{f}(\varepsilon),\mathrm{f}(y))$. The catenation $\varepsilon b^*y$ implies that we have to make the assumption that $y$ starts with the symbol $a$. In this case, we replace all the occurrences of $y$ by $ay$ except the first one, where the symbol $a$ is erased by the derivation. Hence we get the tuple $(F_2,\{(x,\varepsilon),(y,ay)\})$ with $F_2=y\mid \sim(\mathrm{f}(\varepsilon),\mathrm{f}(ay))$.
    \end{enumerate}    
    Hence:
        \begin{align*}
        \frac{\partial}{\partial_a}(E_1)=
      \begin{cases}
          (xb^*y\mid \sim(\mathrm{f}(ax),\mathrm{f}(y)),\{(x,ax)\}),\\
          (y\mid \sim(\mathrm{f}(\varepsilon),\mathrm{f}(ay)),\{(x,\varepsilon),(y,ay)\})\\
        \end{cases}
        \end{align*}
        The constrained derivative of $E_1$ w.r.t. the word $ab$ is obtained by computing the constrained derivative of $F_1$ and $F_2$ w.r.t. $b$:   
        \begin{align*}
        \frac{\partial}{\partial_b}(F_1)=
      \begin{cases}
          (xb^*y\mid \sim(\mathrm{f}(abx),\mathrm{f}(y)),\{(x,bx)\}),\\
          (\varepsilon b^*y\mid \sim(\mathrm{f}(a),\mathrm{f}(y)),\{(x,\varepsilon)\}),\\
          (y\mid \sim(\mathrm{f}(a),\mathrm{f}(by)),\{(x,\varepsilon),(y,by)\})
        \end{cases}\\
        \frac{\partial}{\partial_b}(F_2)=
      \begin{cases}
          (y\mid \sim(\mathrm{f}(\varepsilon),\mathrm{f}(aby)),\{(y,by)\})\\
        \end{cases}
        \end{align*}
        Hence
        \begin{align*}
        \frac{\partial}{\partial_{ab}}(E_1)=
      \begin{cases}
          (xb^*y\mid \sim(\mathrm{f}(abx),\mathrm{f}(y)),\{(x,bx)\}),\\
          (\varepsilon b^*y\mid \sim(\mathrm{f}(a),\mathrm{f}(y)),\{(x,\varepsilon)\}),\\
          (y\mid \sim(\mathrm{f}(a),\mathrm{f}(by)),\{(x,\varepsilon),(y,by)\}),\\
          (y\mid \sim(\mathrm{f}(\varepsilon),\mathrm{f}(aby)),\{(y,by)\})
        \end{cases}
        \end{align*}
    \qed
  \end{example}
  
  \begin{example}
    %ici
    Let us continue Example~\ref{ex deriv anbncn abc}.
    Let us consider the expression $E$ of Example~\ref{ex anbncn} and its derived term 
    \begin{align*}
    E'''&=(((\varepsilon\dashv E_a)\cdot (\varepsilon\dashv E_b) \cdot E_c)\mid \sim(ax,by)\wedge \sim (by,cz)
    \end{align*}
    w.r.t. $abc$.    
      Let us consider an expression interpretation $J=(\Sigma^*,\mathfrak{G})$ of Example~\ref{ex anbncn} that satisfies
      \begin{align*}
        \mathfrak{G}(\sim)&=\{(w_1,w_2)\in\Sigma^*\mid |w_1|=|w_2|\}
      \end{align*}
      By considering a realization $\mathrm{r}$ that associates $x$, $y$ and $z$ with $\varepsilon$, one can check that
      \begin{align*}
        L_{J,\mathrm{r}}(E_a) &= L_{J,\mathrm{r}}(\varepsilon \dashv a^*)\\
        = L_{J,\mathrm{r}}(E_b) &= L_{J,\mathrm{r}}(\varepsilon \dashv b^*)\\
        = L_{J,\mathrm{r}}(E_c) &= L_{J,\mathrm{r}}(\varepsilon \dashv c^*)\\
        &=\{\varepsilon\}
      \end{align*}
      Then
      \begin{align*}
        L_{J,\mathrm{r}}((\varepsilon\dashv E_a)\cdot (\varepsilon\dashv E_b) \cdot E_c)) &= \{\varepsilon\}
      \end{align*}  
       Furthermore
      \begin{align*}
        \mathrm{eval}_{J,\mathrm{r}}(\sim(ax,by)\wedge \sim (by,cz))&= \mathrm{eval}_{J,\mathrm{r}}(\sim(a,b)\wedge \sim (b,c))\\
        &= (|a|==|b| \ \mathrm{ And }\ |b|==|c|)\\
        &= 1
      \end{align*} 
      Therefore $\varepsilon\in L_{J,\mathrm{r}}(E''')$ and consequently $abc\in L_{J,\mathrm{r}}(E)\subset L_{J}(E)\subset L(E)$.       
  \end{example}
  
  \section{Membership Test of $\varepsilon$ for Constrained Expressions}\label{sec:eps membership test}
  
  In this section, we consider the membership test for the empty word $\varepsilon$. We first consider the case where both the interpretation and the realization are fixed, which is a case where this test is decidable. Then we consider the two other cases and show that they are equivalent to a satisfiability problem.
  
  \subsection{$(I,r)$-Language and $\varepsilon$}
  
  Corollary~\ref{cor i r lang rat} asserts that the $(I,r)$-language denoted by a constrained expression is regular. Consequently, any membership test can be performed \emph{via} the regularization. However, this transformation may be avoided by directly and inductively computing the classical predicate function $\mathrm{Null}$ and embedding the regularization in its computation.
  
  \begin{definition}[$\mathrm{Null}_{I,\mathrm{r}}$ Predicate]
    Let $\mathcal{E}=(\Sigma,\Gamma,\mathcal{P},\mathcal{F})$ be an expression environment and let $E$ be a constrained expression over $\mathcal{E}$. Let $I$ be  an expression interpretation over $\mathcal{E}$ and $\mathrm{r}$ be a $\Gamma$-realization over $I$. The boolean $\mathrm{Null}_{I,\mathrm{r}}(E)$ is defined by:
        \begin{align*}
        \mathrm{Null}_{I,\mathrm{r}}(E)=(\varepsilon\in L_{I,\mathrm{r}}(E))
        \end{align*}
  \end{definition}
  
  \begin{proposition}
    Let $\mathcal{E}=(\Sigma,\Gamma,\mathcal{P},\mathcal{F})$ be an expression environment and let $E$ be a constrained expression over $\mathcal{E}$. Let $I$ be  an expression interpretation over $\mathcal{E}$ and $\mathrm{r}$ be a $\Gamma$-realization over $I$. Then the boolean $\mathrm{Null}_{I,\mathrm{r}}(E)$ is inductively computed as follows:
        \begin{align*}
          \mathrm{Null}_{I,\mathrm{r}}(\alpha)&=(\mathrm{r}(\alpha)==\varepsilon),\\ 
      \mathrm{Null}_{I,\mathrm{r}}(\emptyset)&=0,\\
      \mathrm{Null}_{I,\mathrm{r}}(\alpha\dashv E_1)& =(\mathrm{r}(\alpha)==\varepsilon)\wedge \mathrm{Null}_{I,\mathrm{r}}(E_1),\\
      \mathrm{Null}_{I,\mathrm{r}}(\mathrm{o}(E_1,\ldots,E_k))&=\mathrm{o}'(\mathrm{Null}_{I,\mathrm{r}}(E_1),\ldots,\mathrm{Null}_{I,\mathrm{r}}(E_k)),\\
      \mathrm{Null}_{I,\mathrm{r}}(F\cdot G)&=\mathrm{Null}_{I,\mathrm{r}}(F)\wedge \mathrm{Null}_{I,\mathrm{r}}(G),\\
      \mathrm{Null}_{I,\mathrm{r}}(F^*)&=1,\\
      \mathrm{Null}_{I,\mathrm{r}}(E_1\mid \phi)&=\mathrm{Null}_{I,\mathrm{r}}(E_1)\wedge \mathrm{eval}_{I,\mathrm{r}}(\phi),
        \end{align*}
        where $k$ is any integer, $\mathrm{o}$ is any $k$-ary boolean operator, $\mathrm{o}'$ is the boolean operator associated with $\mathrm{o}$, $E_1,\ldots,E_k$ are any $k$ constrained expression over $\mathcal{E}$, $\alpha$ is any word in $(\Sigma\cup \Gamma)^*$ and $\phi$ is any boolean formula in $\mathcal{P}(\mathcal{F}(\Gamma))$.
  \end{proposition}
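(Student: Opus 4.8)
The plan is to prove the stated equalities by structural induction on the constrained expression $E$, establishing in each case that the raw definition $\mathrm{Null}_{I,\mathrm{r}}(E) = (\varepsilon \in L_{I,\mathrm{r}}(E))$ coincides with the indicated boolean expression. Throughout, I would unfold $L_{I,\mathrm{r}}$ according to Definition~\ref{def i r lang}, test membership of $\varepsilon$ directly, and apply the induction hypothesis $\mathrm{Null}_{I,\mathrm{r}}(E_i) = (\varepsilon \in L_{I,\mathrm{r}}(E_i))$ to each immediate subexpression. The two genuinely basic cases are immediate: for $E = \alpha$ the language $L_{I,\mathrm{r}}(\alpha) = \{\mathrm{r}(\alpha)\}$ is a singleton, so $\varepsilon$ belongs to it exactly when $\mathrm{r}(\alpha) = \varepsilon$; and for $E = \emptyset$ the language is empty, yielding the constant $0$.

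The classical operator cases reduce to elementary facts about $\varepsilon$ in $\Sigma^*$. For catenation, since $\varepsilon$ factors only as $\varepsilon \cdot \varepsilon$, we have $\varepsilon \in L_{I,\mathrm{r}}(F) \cdot L_{I,\mathrm{r}}(G)$ if and only if $\varepsilon$ lies in both factors, and the induction hypothesis turns this into $\mathrm{Null}_{I,\mathrm{r}}(F) \wedge \mathrm{Null}_{I,\mathrm{r}}(G)$. For the star, $\varepsilon$ is always the empty product (the $k=0$ instance of the Kleene-star definition), so $\varepsilon \in (L_{I,\mathrm{r}}(F))^*$ holds unconditionally, giving the constant $1$. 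For the guard $E_1 \mid \phi$, the language equals $L_{I,\mathrm{r}}(E_1)$ when $\mathrm{eval}_{I,\mathrm{r}}(\phi)$ holds and $\emptyset$ otherwise, so $\varepsilon$ belongs precisely when the formula evaluates to true and $\varepsilon \in L_{I,\mathrm{r}}(E_1)$; the induction hypothesis rewrites this as $\mathrm{Null}_{I,\mathrm{r}}(E_1) \wedge \mathrm{eval}_{I,\mathrm{r}}(\phi)$. For the match operator $\alpha \dashv E_1$, whose language is $\{\mathrm{r}(\alpha) \mid \mathrm{r}(\alpha) \in L_{I,\mathrm{r}}(E_1)\}$, the word $\varepsilon$ belongs if and only if $\mathrm{r}(\alpha) = \varepsilon$ and this common value lies in $L_{I,\mathrm{r}}(E_1)$, which under $\mathrm{r}(\alpha) = \varepsilon$ is just $(\mathrm{r}(\alpha) == \varepsilon) \wedge \mathrm{Null}_{I,\mathrm{r}}(E_1)$.

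The one case requiring care, and the main point of the argument, is the boolean-operator node $\mathrm{o}(E_1,\ldots,E_k)$. Here I would invoke the defining property of the language operator associated with $\mathrm{o}$: for every fixed word $w$, membership commutes with the operator, that is $w \in \mathrm{o}'(L_1,\ldots,L_k)$ if and only if $\mathrm{o}'((w \in L_1),\ldots,(w \in L_k))$, where on the left $\mathrm{o}'$ is the lifting to languages and on the right it is the underlying mapping from $\{0,1\}^k$ to $\{0,1\}$. Instantiating at $w = \varepsilon$ with $L_i = L_{I,\mathrm{r}}(E_i)$ and applying the induction hypothesis to each component yields exactly $\mathrm{o}'(\mathrm{Null}_{I,\mathrm{r}}(E_1),\ldots,\mathrm{Null}_{I,\mathrm{r}}(E_k))$. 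As the sum case is the instance $\mathrm{o}' = \vee$ of this identity (union commutes with membership of $\varepsilon$), this single case subsumes the disjunctive operator as well and closes the induction. The only nontrivial ingredient is thus the pointwise-membership property of the language lifting $\mathrm{o}'$, which holds by its very definition as the set operator induced pointwise by the boolean mapping $\mathrm{o}'$.
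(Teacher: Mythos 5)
Your proof is correct and takes essentially the same route as the paper's own: a structural induction that unfolds $L_{I,\mathrm{r}}$ case by case, applies the induction hypothesis to the subexpressions, and in the boolean-operator case relies on the fact that the language operator $\mathrm{o}'$ acts pointwise on membership (a property the paper's proof uses implicitly and you make explicit). There is nothing to add or repair.
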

  \begin{proof}
    By induction over the structure of constrained expressions.
      \begin{multicols}{2}
        \begin{align*}
          \mathrm{Null}_{I,\mathrm{r}}(\alpha)&=(\mathrm{r}(\alpha)==\varepsilon)\\
          & =(\varepsilon\in \{\mathrm{r}(\alpha)\})\\
          &=(\varepsilon\in L_{I,\mathrm{r}}(\alpha))\\
          \mathrm{Null}_{I,\mathrm{r}}(\emptyset)&=0\\
          &=(\varepsilon\in \emptyset)\\
          &=(\varepsilon\in L_{I,\mathrm{r}}(\emptyset))\\
          \mathrm{Null}_{I,\mathrm{r}}(\alpha\dashv E_1) & =(\mathrm{r}(\alpha)==\varepsilon)\wedge \mathrm{Null}_{I,\mathrm{r}}(E_1)\\
          &  =(\mathrm{r}(\alpha)==\varepsilon)\wedge (\varepsilon\in L_{I,\mathrm{r}}(E_1))\\
          &  =(\varepsilon\in\{\mathrm{r}(\alpha)\})\wedge (\varepsilon\in L_{I,\mathrm{r}}(E_1))\\
          & =(\varepsilon\in L_{I,\mathrm{r}}(\alpha\dashv E_1))\\
          \mathrm{Null}_{I,\mathrm{r}}(\mathrm{o}(E_1,\ldots,E_k)) & =\mathrm{o}'(\mathrm{Null}_{I,\mathrm{r}}(E_1),\ldots,\mathrm{Null}_{I,\mathrm{r}}(E_k))\\
      & =\mathrm{o}'(\varepsilon\in L_{I,\mathrm{r}}(E_1),\ldots,\varepsilon\in L_{I,\mathrm{r}}(E_k))\\
      & =(\varepsilon \in L_{I,\mathrm{r}}(\mathrm{o}(E_1,\ldots,E_k)) )\\
      \mathrm{Null}_{I,\mathrm{r}}(F\cdot G) & =\mathrm{Null}_{I,\mathrm{r}}(F)\wedge \mathrm{Null}_{I,\mathrm{r}}(G)\\
      & =(\varepsilon\in L_{I,\mathrm{r}}(F))\wedge (\varepsilon\in L_{I,\mathrm{r}}(G))\\
      & =(\varepsilon\in L_{I,\mathrm{r}}(F\cdot G))\\
      \mathrm{Null}_{I,\mathrm{r}}(F^*)&=1\\
      &=(\varepsilon\in L_{I,\mathrm{r}}(F^*))\\
      \mathrm{Null}_{I,\mathrm{r}}(E_1\mid \phi) & =\mathrm{Null}_{I,\mathrm{r}}(E_1)\wedge \mathrm{eval}_{I,\mathrm{r}}(\phi)\\ 
      & =\varepsilon\in L_{I,\mathrm{r}}(E_1) \wedge \mathrm{eval}_{I,\mathrm{r}}(\phi)\\
      & =\varepsilon\in L_{I,\mathrm{r}}(E_1\mid\phi)
        \end{align*}
      \end{multicols}
    \qed
  \end{proof}
  
  \subsection{General Cases}
  
  As was the case for derivating, the computation of the $\mathrm{Null}$ predicate  needs assumptions to be made. However, we only need here to determine which variable symbols have to be transformed into the empty word. Since we need to "erase" several symbols at the same time, we define several notations to perform the corresponding substitutions.
  
  Let $\mathcal{E}=(\Sigma,\Gamma,\mathcal{P},\mathcal{F})$ be an expression environment.
  %We denote by $\mathrm{Exp}(\mathcal{E})$ the set of the constrained expressions over $\mathcal{E}$. 
  We denote by $\mathrm{Sub}(\Gamma,\Sigma)$ the set of the functions from $\Gamma$ to $(\Sigma\cup\Gamma)^*$. 
  Let $X\subset \Gamma$. We denote by $\mathrm{S}_{X\leftarrow\varepsilon}$ the substitution defined by 
  $\mathrm{S}_{X\leftarrow\varepsilon}(y)=
    \begin{cases}
        \varepsilon & \text{ if } y\in X,\\
        y & \text{ otherwise.}
      \end{cases}$

  Let $\alpha$ be a word in $\Gamma^*$. We denote by $\Gamma_\alpha$ the subset $\{y\in\Gamma\mid \exists u,v\in\Gamma^*, \alpha=uyv\}$ of $\Gamma$.  
  We denote by $\top$ (resp. $\bot$) the $0$-ary boolean operator True (resp. False).  
      Given a formula $\phi$, we denote by $\phi_{X\leftarrow\varepsilon}$ the formula inductively computed by:
        \begin{align*}
          \phi_{X\leftarrow \varepsilon}=
        \begin{cases}
            \phi & \text{ if }X=\emptyset,\\
            (\phi_{x\leftarrow \varepsilon})_{X'\leftarrow \varepsilon} & \text{ if } X=X'\cup\{x\}.\\
          \end{cases}
        \end{align*}
    
  The general computation of the $\mathrm{Null}$ predicate  takes into
account two aspects of an expression: first, it needs to consider the expression itself, in order to determine if $\varepsilon$ may appear; but secondly, it has to consider the fact that several formulae that appear in the expression have to be satisfied, otherwise the language may be empty. Hence we first compute a particular indicator set, made of 
%couples that represent both the set of variable symbols that need to be erased (replaced by $\varepsilon$) and the formulae that need to be satisfied.
tuples composed of a set of variable symbols that need to be erased and a formula that needs to be satisfied.
  
  \begin{definition}[$\mathcal{S}^{\varepsilon}(E)$]
    Let $\mathcal{E}=(\Sigma,\Gamma,\mathcal{P},\mathcal{F})$ be an expression environment and let $E$ be a constrained expression over $\mathcal{E}$. We denote by $\mathcal{S}^{\varepsilon}(E)$ the subset of $2^\Gamma\times \mathcal{P}(\mathcal{F}(\Gamma))$ inductively defined by:
    \begin{multicols}{2}
        \begin{align*}
          \mathcal{S}^{\varepsilon}(\alpha)& =
        \begin{cases}
            \emptyset & \text{ if } \alpha\notin \Gamma^*,\\
             \{(\Gamma_\alpha,\top)\} & \text{otherwise},
          \end{cases}\\
        \mathcal{S}^{\varepsilon}(\emptyset)&=\emptyset,\\
        \mathcal{S}^{\varepsilon}(\alpha\dashv E_1)&=
        \begin{cases}
            \emptyset & \text{ if } \alpha\notin \Gamma^*,\\
            \{(\Gamma_\alpha,\top)\} \otimes \mathcal{S}^{\varepsilon}(E_1) & \text{otherwise},\\
          \end{cases}\\
        \mathcal{S}^{\varepsilon}(E_1+E_2)&=\mathcal{S}^{\varepsilon}(E_1) \cup \mathcal{S}^{\varepsilon}(E_2)\\
        \mathcal{S}^{\varepsilon}(E_1\cdot E_2)&=
       \mathcal{S}^{\varepsilon}(E_1)\otimes \mathcal{S}^{\varepsilon}(E_2)\\
       \mathcal{S}^{\varepsilon}(E_1^*)&=\{(\emptyset,\top)\}\\
       \mathcal{S}^{\varepsilon}(E_1\mid \phi)&=
      \bigcup_{(X,\psi)\in \mathcal{S}^{\varepsilon}(E_1)} \{(X,(\phi\wedge\psi)_{X\leftarrow\varepsilon})\}
        \end{align*}
      \end{multicols}
        where for any two subsets $\mathcal{S}_1$, $\mathcal{S}_2$ of $2^\Gamma\times \mathcal{P}(\mathcal{F}(\Gamma))$, $\mathcal{S}_1\otimes\mathcal{S}_2$ $=
          \bigcup_{
           \subalign{
            (X_1,\phi_1)&\in\mathcal{S}_1\\
            (X_2,\phi_2)&\in\mathcal{S}_2
           }
          }
          \{(X_1\cup X_2,(\phi_1\wedge \phi_2)_{(X_1\cup X_2)\leftarrow\varepsilon})\}$.
  \end{definition}
  
  Using this previous indicator set, it can be shown that the computation of the different $\mathrm{Null}$ predicates is equivalent to different satisfiability problems. 
  
  \begin{theorem}\label{thm caract null i r}
    Let $\mathcal{E}=(\Sigma,\Gamma,\mathcal{P},\mathcal{F})$ be an expression environment and let $E$ be a constrained expression over $\mathcal{E}$. Let $I$ be  an expression interpretation over $\mathcal{E}$. Let $\mathrm{r}$ be a realisation in $\mathrm{Real}_\Gamma(I)$. Then the  following two conditions are equivalent:
    \begin{itemize}
        \item $\mathrm{Null}_{I,\mathrm{r}}(E)$
        \item there exists $(X,\phi)$ in $\mathcal{S}^{\varepsilon}(E)$ such that the  following two conditions are satisfied:
          \begin{itemize}
            \item $\forall x\in\Gamma$, $x\in X$ $\Rightarrow$ $\mathrm{r}(x)=\varepsilon$,
            \item $\mathrm{eval}_{(I,\mathrm{r})}(\phi)=1$.
          \end{itemize}
     \end{itemize}
  \end{theorem}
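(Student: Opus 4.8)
The plan is to argue by structural induction on $E$, confronting the inductive definition of $\mathcal{S}^{\varepsilon}(E)$ with the inductive computation of $\mathrm{Null}_{I,\mathrm{r}}(E)$ established in the preceding proposition. Throughout, I would abbreviate by $C(X,\phi)$ the conjunction of the two conditions of the statement, namely that $\mathrm{r}(x)=\varepsilon$ for every $x\in X$ and that $\mathrm{eval}_{(I,\mathrm{r})}(\phi)=1$, so that the target equivalence reads $\mathrm{Null}_{I,\mathrm{r}}(E)\Leftrightarrow \exists (X,\phi)\in\mathcal{S}^{\varepsilon}(E),\ C(X,\phi)$.

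Before starting the induction I would isolate one auxiliary fact about the erasing substitution $\chi_{X\leftarrow\varepsilon}$: if $\mathrm{r}(x)=\varepsilon$ for every $x\in X$, then $\mathrm{eval}_{I,\mathrm{r}}(\chi_{X\leftarrow\varepsilon})=\mathrm{eval}_{I,\mathrm{r}}(\chi)$ for every formula $\chi$. This follows from a routine induction, first on terms and then on formulae; the only substantive point is the variable case, where $x_{X\leftarrow\varepsilon}$ is the $0$-ary function symbol $\varepsilon$, which by condition~(2) of an expression interpretation satisfies $\mathfrak{F}(\varepsilon)=\{\varepsilon\}$ and hence evaluates to the empty word, matching $\mathrm{eval}_{I,\mathrm{r}}(x)=\mathrm{r}(x)=\varepsilon$. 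The function, predicate and operator cases are plain congruences.

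For the induction itself, the base and regular cases are direct: if $E=\emptyset$ both sides are false, and if $E=\alpha$ one observes that $\mathrm{r}(\alpha)=\varepsilon$ holds exactly when $\alpha\in\Gamma^*$ and $\mathrm{r}(x)=\varepsilon$ for all $x\in\Gamma_\alpha$, which is precisely $C(\Gamma_\alpha,\top)$ for the unique element of $\mathcal{S}^{\varepsilon}(\alpha)$ (that set being empty when $\alpha\notin\Gamma^*$). The sum case follows by distributing the existential over $\mathcal{S}^{\varepsilon}(E_1+E_2)=\mathcal{S}^{\varepsilon}(E_1)\cup\mathcal{S}^{\varepsilon}(E_2)$, and the star case from $\mathcal{S}^{\varepsilon}(E_1^*)=\{(\emptyset,\top)\}$ with $C(\emptyset,\top)$ vacuously true, matching $\mathrm{Null}_{I,\mathrm{r}}(E_1^*)=1$. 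The cases built on the product $\otimes$, that is $E_1\cdot E_2$, $\alpha\dashv E_1$ and $E_1\mid\phi$, are where the auxiliary fact does the work. For $E_1\cdot E_2$ I would use the induction hypothesis to rewrite $\mathrm{Null}_{I,\mathrm{r}}(E_1)\wedge\mathrm{Null}_{I,\mathrm{r}}(E_2)$ as the existence of $(X_1,\psi_1)\in\mathcal{S}^{\varepsilon}(E_1)$ and $(X_2,\psi_2)\in\mathcal{S}^{\varepsilon}(E_2)$ with $\mathrm{r}$ erasing $X_1\cup X_2$ and $\mathrm{eval}_{(I,\mathrm{r})}(\psi_1)=\mathrm{eval}_{(I,\mathrm{r})}(\psi_2)=1$; since all of $X_1\cup X_2$ is erased, the auxiliary fact gives $\mathrm{eval}_{(I,\mathrm{r})}((\psi_1\wedge\psi_2)_{(X_1\cup X_2)\leftarrow\varepsilon})=\mathrm{eval}_{(I,\mathrm{r})}(\psi_1)\wedge\mathrm{eval}_{(I,\mathrm{r})}(\psi_2)$, which is exactly $C$ evaluated on the generic element $(X_1\cup X_2,(\psi_1\wedge\psi_2)_{(X_1\cup X_2)\leftarrow\varepsilon})$ of $\mathcal{S}^{\varepsilon}(E_1)\otimes\mathcal{S}^{\varepsilon}(E_2)$. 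The case $\alpha\dashv E_1$ is the same computation with the fixed factor $(\Gamma_\alpha,\top)$ supplying the extra conjunct $\mathrm{r}(\alpha)=\varepsilon$, and the case $E_1\mid\phi$ uses the auxiliary fact once more to turn $\mathrm{eval}_{(I,\mathrm{r})}((\phi\wedge\psi)_{X\leftarrow\varepsilon})$ into $\mathrm{eval}_{(I,\mathrm{r})}(\phi)\wedge\mathrm{eval}_{(I,\mathrm{r})}(\psi)$ and then factor $\mathrm{eval}_{(I,\mathrm{r})}(\phi)$ out of the existential.

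I expect the main obstacle to be precisely the bookkeeping around this erasing substitution: the formula component of a tuple in $\mathcal{S}^{\varepsilon}$ is a substituted conjunction, whereas the $\mathrm{Null}$ recursion produces an unsubstituted one, and the two can be reconciled only under the side condition that $\mathrm{r}$ erases $X$, which is the first clause of $C$. The delicate verification is that the existential over tuples distributes correctly across $\otimes$, so that a single witnessing tuple for $E_1\cdot E_2$ corresponds to a pair of witnessing tuples for $E_1$ and $E_2$, and that the erasing condition for the union $X_1\cup X_2$ splits into the two separate conditions; this splitting is sound because \enquote{$\forall x\in X_1\cup X_2$} is the conjunction of \enquote{$\forall x\in X_1$} and \enquote{$\forall x\in X_2$}, and it is exactly here that one must be sure the auxiliary fact applies with the full union as its erasing set.
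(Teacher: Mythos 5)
Your proof is correct and takes essentially the same route as the paper's: a structural induction on $E$ that confronts the inductive definition of $\mathcal{S}^{\varepsilon}(E)$ with the inductive computation of $\mathrm{Null}_{I,\mathrm{r}}(E)$, using exactly the paper's condition $\mathbb{C}$ (your $C(X,\phi)$), the same treatment of the base, sum and star cases, and the same splitting of a witnessing tuple across $\otimes$ for the cases $E_1\cdot E_2$, $\alpha\dashv E_1$ and $E_1\mid\phi$. The only difference is presentational: you isolate and prove the auxiliary fact that $\mathrm{eval}_{I,\mathrm{r}}(\chi_{X\leftarrow\varepsilon})=\mathrm{eval}_{I,\mathrm{r}}(\chi)$ whenever $\mathrm{r}$ maps every element of $X$ to $\varepsilon$, a fact the paper uses tacitly inside its chains of equivalences.
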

  \begin{proof}
    By induction over the structure of $E$.
    
    Let us say that a tuple $(X,\phi)$ in $2^\Gamma\times \mathcal{P}(\mathcal{F}(\Gamma))$ satisfies the condition $\mathbb{C}$ if the  following two conditions are satisfied:
          \begin{itemize}
            \item $\forall x\in\Gamma$, $x\in X$ $\Rightarrow$ $\mathrm{r}(x)=\varepsilon$,
            \item $\mathrm{eval}_{(I,\mathrm{r})}(\phi)=1$.
          \end{itemize}
          
    Hence the second condition of the equivalence can be rephrased as "there exists a tuple $(X,\phi)$ in $\mathcal{S}^{\varepsilon}(E)$ that satisfies $\mathbb{C}$", formally denoted by $\exists(X,\phi)\in \mathcal{S}^{\varepsilon}(E)\mid\mathbb{C}$.
        \begin{align*}
        \mathrm{Null}_{I,\mathrm{r}}(\alpha) & \Longleftrightarrow \mathrm{r}(\alpha)=\varepsilon \\
        & \Longleftrightarrow \alpha\in\Gamma^*\ \wedge\ x\in\Gamma_\alpha \Rightarrow \mathrm{r}(x)=\varepsilon\\
        & \Longleftrightarrow(\Gamma_\alpha,\top)\in \mathcal{S}^{\varepsilon}(\alpha) \mid \mathbb{C}\\
        & \Longleftrightarrow \exists (X,\phi)\in \mathcal{S}^{\varepsilon}(\alpha) \mid \mathbb{C}\\
        \mathrm{Null}_{I,\mathrm{r}}(\emptyset)&=0\text{ and }\mathcal{S}^{\varepsilon}(\emptyset)=\emptyset\\
        \mathrm{Null}_{I,\mathrm{r}}(\alpha\dashv E_1) & \Longleftrightarrow \mathrm{r}(\alpha)=\varepsilon\ \wedge\ \varepsilon\in L_{I,\mathrm{r}}(E_1)\\
        & \Longleftrightarrow \mathrm{r}(\alpha)=\varepsilon\ \wedge\ \exists(X,\phi)\in\mathcal{S}^{\varepsilon}(E_1)\mid\mathbb{C}\\
        & \Longleftrightarrow \alpha\in\Gamma^*\ \wedge\ x\in\Gamma_\alpha\Rightarrow\mathrm{r}(x)=\varepsilon\ \wedge\ \exists(X,\phi)\in\mathcal{S}^{\varepsilon}(E_1)\mid\mathbb{C}\\
        & \Longleftrightarrow (\Gamma_\alpha,\top)\mid\mathbb{C}\ \wedge\ \exists(X,\phi)\in\mathcal{S}^{\varepsilon}(E_1)\mid\mathbb{C}\\
        & \Longleftrightarrow (\Gamma_\alpha\cup X,(\top\ \wedge\ \phi)_{X\leftarrow\varepsilon})\in \{(\Gamma_\alpha,\top)\}\otimes \mathcal{S}^{\varepsilon}(E_1)\mid\mathbb{C}\\
        & \Longleftrightarrow \exists(X,\phi)\in\mathcal{S}^{\varepsilon}(\alpha\dashv E_1)\mid\mathbb{C}\\
        \mathrm{Null}_{I,\mathrm{r}}(E_1+E_2) & \Longleftrightarrow \varepsilon\in L_{I,\mathrm{r}}(E_1)\ \vee\ \varepsilon\in L_{I,\mathrm{r}}(E_2)\\
        & \Longleftrightarrow \exists(X,\phi)\in\mathcal{S}^{\varepsilon}(E_1)\mid\mathbb{C} \vee \exists(X,\phi)\in\mathcal{S}^{\varepsilon}(E_2)\mid\mathbb{C}\\
        & \Longleftrightarrow \exists(X,\phi)\in\mathcal{S}^{\varepsilon}(E_1)\cup\mathcal{S}^{\varepsilon}(E_2)\mid\mathbb{C}\\
        & \Longleftrightarrow \exists(X,\phi)\in\mathcal{S}^{\varepsilon}(E_1+E_2)\mid\mathbb{C}\\
        \mathrm{Null}_{I,\mathrm{r}}(E_1\cdot E_2) & \Longleftrightarrow \varepsilon\in L_{I,\mathrm{r}}(E_1)\ \wedge\ \varepsilon\in L_{I,\mathrm{r}}(E_2)\\
        & \Longleftrightarrow \exists(X_1,\phi_1)\in\mathcal{S}^{\varepsilon}(E_1)\mid\mathbb{C} \wedge \exists(X_2,\phi_2)\in\mathcal{S}^{\varepsilon}(E_2)\mid\mathbb{C}\\
        & \Longleftrightarrow \exists(X_1,\phi_1)\in\mathcal{S}^{\varepsilon}(E_1), \exists(X_2,\phi_2)\in\mathcal{S}^{\varepsilon}(E_2)\mid\\
        & \ \ \ \ \ \ \ \ (X_1\cup X_2,(\phi_1\wedge\phi_2)_{X_1\cup X_2\leftarrow\varepsilon})\mid\mathbb{C}\\
        & \Longleftrightarrow\exists(X,\phi)\in\mathcal{S}^{\varepsilon}(E_1)\otimes \mathcal{S}^{\varepsilon}(E_2)\mid\mathbb{C}\\
      & \Longleftrightarrow \exists(X,\phi)\in\mathcal{S}^{\varepsilon}(E_1\cdot E_2)\mid\mathbb{C}\\
      \mathrm{Null}_{I,\mathrm{r}}(E_1^*)&=1\text{ and }(\emptyset,\top)\in\mathcal{S}^{\varepsilon}(E_1^*)\mid \mathbb{C}\\
      \mathrm{Null}_{I,\mathrm{r}}(E_1\mid\phi) & \Longleftrightarrow\varepsilon\in L_{I,\mathrm{r}}(E_1)\ \wedge\ \mathrm{eval}_{I,\mathrm{r}}(\phi)\\
      & \Longleftrightarrow \exists(X',\phi')\in\mathcal{S}^{\varepsilon}(E_1)\mid\mathbb{C}\ \wedge\ \mathrm{eval}_{I,\mathrm{r}}(\phi)\\
      & \Longleftrightarrow \exists(X',\phi')\in\mathcal{S}^{\varepsilon}(E_1)\mid\mathbb{C}\ \wedge\ \mathrm{eval}_{I,\mathrm{r}}((\phi'\wedge\phi)_{X'\leftarrow\varepsilon})\\
      & \Longleftrightarrow \exists(X',\phi')\in\mathcal{S}^{\varepsilon}(E_1\mid\phi)\mid\mathbb{C}\\
        \end{align*}    
    \qed
  \end{proof}

  \begin{definition}[$\mathrm{Null}_{I}$ Predicate]
    Let $\mathcal{E}=(\Sigma,\Gamma,\mathcal{P},\mathcal{F})$ be an expression environment and let $E$ be a constrained expression over $\mathcal{E}$. Let $I$ be  an expression interpretation over $\mathcal{E}$. The boolean $\mathrm{Null}_{I}(E)$ is defined by:
        \begin{align*}
       \mathrm{Null}_{I}(E)&=(\varepsilon\in L_{I}(E)).
        \end{align*}
  \end{definition}
  
  \begin{corollary}\label{cor nullIE lien sat}[of Theorem~\ref{thm caract null i r}]
    Let $\mathcal{E}=(\Sigma,\Gamma,\mathcal{P},\mathcal{F})$ be an expression environment and let $E$ be a constrained expression over $\mathcal{E}$. Let $I$ be  an expression interpretation over $\mathcal{E}$. Then the  following two conditions are equivalent :
    \begin{itemize}
        \item $\mathrm{Null}_{I}(E)=1$,
        \item there exists $(X,\phi)$ in $\mathcal{S}^{\varepsilon}(E)$ and $\mathrm{r}$ in $\mathrm{Real}_\Gamma(I)$ such that $\mathrm{eval}_{(I,\mathrm{r})}(\phi)=1$.
     \end{itemize}
  \end{corollary}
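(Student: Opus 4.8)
The plan is to reduce the statement directly to Theorem~\ref{thm caract null i r} by unfolding the definition of $\mathrm{Null}_I$ and then to eliminate the side condition on $X$ that the theorem carries. First I would expand $\mathrm{Null}_I(E) = (\varepsilon \in L_I(E))$ using $L_I(E) = \bigcup_{\mathrm{r} \in \mathrm{Real}_\Gamma(I)} L_{I,\mathrm{r}}(E)$, so that $\mathrm{Null}_I(E) = 1$ is equivalent to the existence of some $\mathrm{r} \in \mathrm{Real}_\Gamma(I)$ with $\mathrm{Null}_{I,\mathrm{r}}(E) = 1$. Applying Theorem~\ref{thm caract null i r} to this fixed $\mathrm{r}$, this becomes: there exist $\mathrm{r} \in \mathrm{Real}_\Gamma(I)$ and $(X,\phi) \in \mathcal{S}^{\varepsilon}(E)$ such that $\mathrm{r}(x)=\varepsilon$ for every $x \in X$ and $\mathrm{eval}_{(I,\mathrm{r})}(\phi)=1$. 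It remains to show that this is equivalent to the weaker statement of the corollary, which drops the requirement $\mathrm{r}(x)=\varepsilon$ on $X$.

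One direction is immediate: any witness $(\mathrm{r},(X,\phi))$ for the theorem's condition is in particular a witness for the corollary, since we simply forget the constraint on $X$. For the converse, given $(X,\phi) \in \mathcal{S}^{\varepsilon}(E)$ and a realization $\mathrm{r}$ with $\mathrm{eval}_{(I,\mathrm{r})}(\phi)=1$, I would modify $\mathrm{r}$ into a realization $\mathrm{r}'$ that sends every $x \in X$ to $\varepsilon$ and agrees with $\mathrm{r}$ on $\Gamma \setminus X$. This is where I expect the crux to lie: the construction only works because of a structural invariant, namely that for every $(X,\phi) \in \mathcal{S}^{\varepsilon}(E)$ the formula $\phi$ contains no occurrence of any variable of $X$.

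I would establish this invariant by induction on the structure of $E$, following the definition of $\mathcal{S}^{\varepsilon}$. The base cases $(\Gamma_\alpha,\top)$ and $(\emptyset,\top)$ carry the variable-free formula $\top$; the sum clause merely takes the union of the sub-results and so inherits the invariant directly; and in the clauses for $\alpha\dashv E_1$, $E_1\cdot E_2$ and $E_1\mid\phi$ the formula component of every produced tuple is obtained by applying the substitution ${}_{X\leftarrow\varepsilon}$ to a conjunction whose associated variable set is exactly the $X$ of that tuple (this is visible in the operator $\otimes$ and in the clause for $E_1\mid\phi$). Since ${}_{X\leftarrow\varepsilon}$ replaces precisely the occurrences of the variables of $X$ by $\varepsilon$, the resulting formula is free of those variables. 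Granting this invariant, $\mathrm{eval}_{(I,\mathrm{r})}(\phi)$ does not depend on the values taken by $\mathrm{r}$ on $X$, so $\mathrm{eval}_{(I,\mathrm{r}')}(\phi) = \mathrm{eval}_{(I,\mathrm{r})}(\phi) = 1$ while $\mathrm{r}'(x)=\varepsilon$ for all $x \in X$, which exhibits exactly the witness required by the theorem's condition.

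The only delicate bookkeeping is the inductive step for $\otimes$: although $\phi_1$ may mention variables of $X_2$ and, symmetrically, $\phi_2$ may mention variables of $X_1$, the outer substitution ${}_{(X_1\cup X_2)\leftarrow\varepsilon}$ erases all of them, so the produced formula is free of $X_1\cup X_2$ and the invariant propagates. Once this is settled, the equivalence of the two conditions follows formally, and the corollary emerges as a restatement of Theorem~\ref{thm caract null i r} in which the behaviour of the realization on $X$ has been rendered irrelevant.
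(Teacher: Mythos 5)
Your proof is correct and follows exactly the route the paper intends: the paper states this result without any written proof, as an immediate consequence of Theorem~\ref{thm caract null i r}, and your argument supplies precisely the missing derivation (unfold $\mathrm{Null}_I(E)$ over $\mathrm{Real}_\Gamma(I)$, apply the theorem, note the forward direction is pure weakening, and repair an arbitrary witness $\mathrm{r}$ into one sending $X$ to $\varepsilon$). The invariant you isolate --- that for every $(X,\phi)\in\mathcal{S}^{\varepsilon}(E)$ the formula $\phi$ contains no variable of $X$, thanks to the ${}_{X\leftarrow\varepsilon}$ substitutions built into $\otimes$ and into the clause for $E_1\mid\phi$ --- is exactly the fact that makes the theorem's side condition $\mathrm{r}(x)=\varepsilon$ on $X$ harmless, so the corollary follows as claimed.
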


  \begin{definition}[$\mathrm{Null}$ Predicate]
    Let $\mathcal{E}=(\Sigma,\Gamma,\mathcal{P},\mathcal{F})$ be an expression environment and let $E$ be a constrained expression over $\mathcal{E}$. Let $I$ be  an expression interpretation over $\mathcal{E}$. The boolean $\mathrm{Null}(E)$ is defined by:
        \begin{align*}
       \mathrm{Null}(E)=(\varepsilon\in L(E)).
        \end{align*}
  \end{definition}
  
  \begin{corollary}\label{cor nullE lien sat}[of Theorem~\ref{thm caract null i r}]
    Let $\mathcal{E}=(\Sigma,\Gamma,\mathcal{P},\mathcal{F})$ be an expression environment and let $E$ be a constrained expression over $\mathcal{E}$. Then the  following two conditions are equivalent :
    \begin{itemize}
        \item $\mathrm{Null}(E)=1$
        \item there exists $(X,\phi)$ in $\mathcal{S}^{\varepsilon}(E)$, $I$ in $\mathrm{Int}(\mathcal{E})$ and $\mathrm{r}$ in $\mathrm{Real}_\Gamma(I)$ such that $\mathrm{eval}_{(I,\mathrm{r})}(\phi)=1$.
     \end{itemize}
  \end{corollary}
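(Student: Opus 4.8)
The plan is to derive this corollary directly from Corollary~\ref{cor nullIE lien sat} (itself a consequence of Theorem~\ref{thm caract null i r}) by peeling off the outermost existential quantifier over interpretations. First I would unfold the definitions on the left-hand side. By definition $\mathrm{Null}(E)=1$ means $\varepsilon\in L(E)$, and since $L(E)=\bigcup_{I\in\mathrm{Int}(\mathcal{E})}L_I(E)$, this is equivalent to the existence of an interpretation $I\in\mathrm{Int}(\mathcal{E})$ with $\varepsilon\in L_I(E)$, that is, with $\mathrm{Null}_I(E)=1$. So the whole question reduces to $\exists I\in\mathrm{Int}(\mathcal{E})$ such that $\mathrm{Null}_I(E)=1$.

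Next I would apply Corollary~\ref{cor nullIE lien sat} for each such $I$: the condition $\mathrm{Null}_I(E)=1$ is equivalent to the existence of a tuple $(X,\phi)\in\mathcal{S}^{\varepsilon}(E)$ and a realization $\mathrm{r}\in\mathrm{Real}_\Gamma(I)$ with $\mathrm{eval}_{(I,\mathrm{r})}(\phi)=1$. Substituting this equivalence under the existential quantifier over $I$ yields
\begin{align*}
\mathrm{Null}(E)=1 \Longleftrightarrow \exists I\in\mathrm{Int}(\mathcal{E}),\ \exists (X,\phi)\in\mathcal{S}^{\varepsilon}(E),\ \exists \mathrm{r}\in\mathrm{Real}_\Gamma(I),\ \mathrm{eval}_{(I,\mathrm{r})}(\phi)=1.
\end{align*}
It then remains only to reorder the existential quantifiers so that $(X,\phi)$ is quoted first, matching the statement of the corollary.

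The reordering step is what makes the argument clean, and the observation underpinning it is that $\mathcal{S}^{\varepsilon}(E)$ is a purely syntactic object built from $E$ alone, independent of any interpretation or realization. Hence the quantifier $\exists (X,\phi)\in\mathcal{S}^{\varepsilon}(E)$ commutes freely with $\exists I\in\mathrm{Int}(\mathcal{E})$, and the three existentials can be listed in any order without affecting the truth value; this produces exactly the second condition of the statement. I would also note, in passing, that since every expression interpretation has domain $\Sigma^*$, the set $\mathrm{Real}_\Gamma(I)$ does not in fact vary with $I$, so no compatibility issue arises between the chosen $I$ and its realizations.

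There is essentially no genuine obstacle here: the proof is a matter of unfolding two definitions ($L(E)$ and $\mathrm{Null}$) and invoking the already-established $I$-fixed corollary. The only point requiring a word of care is the bookkeeping of the nested existentials, and in particular justifying that the interpretation-independence of $\mathcal{S}^{\varepsilon}(E)$ licenses moving $\exists (X,\phi)$ outside $\exists I$; once that is remarked, the equivalence follows immediately.
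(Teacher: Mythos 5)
Your proof is correct and is exactly the derivation the paper intends: the paper states this result as an immediate corollary of Theorem~\ref{thm caract null i r} (via Corollary~\ref{cor nullIE lien sat}) with no further argument, and your unfolding of $\mathrm{Null}(E)$ through $L(E)=\bigcup_{I}L_I(E)$, followed by an application of the $I$-fixed corollary and a commutation of existential quantifiers justified by the interpretation-independence of $\mathcal{S}^{\varepsilon}(E)$, is precisely that implicit argument made explicit. Nothing is missing.
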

  
  \begin{example}
    Let us consider the expression $E_1=xb^*y\mid \sim(\mathrm{f}(x),\mathrm{f}(y))$ and its constrained derivative w.r.t. $ab$ (Example~\ref{ex derivativ}):
        \begin{align*}
    \frac{\partial}{\partial_{ab}}(E_1)=
    \begin{cases}
          (xb^*y\mid \sim(\mathrm{f}(abx),\mathrm{f}(y)),\{(x,bx)\}),\\
          (\varepsilon b^*y\mid \sim(\mathrm{f}(a),\mathrm{f}(y)),\{(x,\varepsilon)\}),\\
          (y\mid \sim(\mathrm{f}(a),\mathrm{f}(by)),\{(x,\varepsilon),(y,by)\}),\\
          (y\mid \sim(\mathrm{f}(\varepsilon),\mathrm{f}(aby)),\{(y,by)\})\\
        \end{cases}
        \end{align*}      
      In order to decide whether $ab$ belongs to $L(E_1)$, let us test whether there is an expression $E'$ in $\frac{\partial}{\partial_{ab}}(E_1)$ such that $\mathrm{Null}(E')=1$.
      Let us consider the expression $E'_1=xb^*y\mid \sim(\mathrm{f}(abx),\mathrm{f}(y))$. 
      %In order to $\varepsilon$ be in $L(E'_1)$, both $x$ and $y$ has to be realized as $\varepsilon$,  since $xb^*y$ has to be nullable. 
      Since epsilon should be matched, $xb*y$ has to be nullable and both $x$ and $y$ have to be realized as epsilon. 
      Consequently, the boolean formula, which has to be satisfied, is transformed into $\sim(\mathrm{f}(ab),\mathrm{f}(\varepsilon))$. This step is exactly what the set $\mathcal{S}^{\varepsilon}(E'_1)$ computes:
        \begin{align*}
        \mathcal{S}^{\varepsilon}(xb^*y) & =\mathcal{S}^{\varepsilon}(x)\otimes \mathcal{S}^{\varepsilon}(b^*) \otimes \mathcal{S}^{\varepsilon}(y)\\
          & =\{(\{x\},\top)\}\otimes \{(\emptyset,\top)\} \otimes \{(\{y\},\top)\}\\
          & =\{(\{x,y,\top)\}\\
          \mathcal{S}^{\varepsilon}(E'_1) & =\{(\{x,y\},\sim(\mathrm{f}(ab),\mathrm{f}(\varepsilon)))\}
        \end{align*}
        If there exists an interpretation $\mathrm{I}$ and a realization $\mathrm{r}$ associating $x$ and $y$ with $\varepsilon$ such that $\mathrm{eval}_{(I,\mathrm{r})}(\sim(\mathrm{f}(ab),\mathrm{f}(\varepsilon)))=1$, then $\varepsilon$ belongs to $L(E)$. Finally, considering the same interpretation and a realization $\mathrm{r}'$ associating $ab$ with $x$ and $\varepsilon$ with $y$, $\mathrm{eval}_{(I,\mathrm{r}')}(\sim(\mathrm{f}(x),\mathrm{f}(y)))=\mathrm{eval}_{(I,\mathrm{r})}(\sim(\mathrm{f}(ab),\mathrm{f}(\varepsilon)))=1$ and then $ab$ belongs to the $(\mathrm{I},\mathrm{r}')$-language denoted by $E_1$, \emph{i.e.} $[E_1]_{\mathrm{I},\mathrm{r}'}=abb^*\varepsilon\mid \sim(\mathrm{f}(ab),\mathrm{f}(\varepsilon))$.
      \qed
  \end{example}
  
  \section{Decidability Considerations}\label{sec:decidab}
  
  The previous section (Corollary~\ref{cor nullIE lien sat} and Corollary~\ref{cor nullE lien sat}) shows that the membership test is equivalent to a classical satisfiability problem. Moreover, it is well-known that such a problem can be undecidable when the interpretation is fixed but there is no realization.
  
   \begin{theorem}
     Let $\mathcal{E}=(\Sigma,\Gamma,\mathcal{P},\mathcal{F})$ be an expression environment and $\phi$ be a boolean formula in $\mathcal{P}(\mathcal{F}(\Gamma))$. Then there exists an interpretation $I$ in $\mathrm{Int}(\mathcal{E})$ such that :
     \begin{align*}
      \text{to determine whether or not there exists a realization $\mathrm{r}$ in $\mathrm{Real}_\Gamma(I)$ satisfying $\mathrm{eval}_{(I,\mathrm{r})}(\phi)=1$ is undecidable.}
     \end{align*}
   \end{theorem}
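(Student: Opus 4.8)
The plan is to reduce from Hilbert's tenth problem, whose undecidability (the DPRM theorem) guarantees that no algorithm decides whether a Diophantine equation over $\mathbb{N}$ admits a solution. The reading of the statement that I shall establish is that the \emph{input} of the decision problem is the formula $\phi$, while the interpretation $I$ is fixed once and for all; the goal is then to fix $I$ so that the family of satisfiable formulas $\{\phi\mid \exists\,\mathrm{r}\in\mathrm{Real}_\Gamma(I),\ \mathrm{eval}_{(I,\mathrm{r})}(\phi)=1\}$ is not recursive. The essential observation is that, by the definition of an expression interpretation, the domain is forced to be $\mathfrak{D}=\Sigma^*$, so a realization is nothing but an assignment of words to the variables of $\Gamma$, and the length of such a word supplies a natural number ranging over all of $\mathbb{N}$.

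First I would fix a letter $a\in\Sigma$ and read the \emph{length} of an assigned word as its numeric content, so that each integer $n$ is represented (among others) by $a^n$. Assuming the environment provides, besides the mandatory $\cdot\in\mathcal{F}_2$, a binary function symbol to play the role of multiplication and a binary predicate symbol to play the role of equality, I would define $I=(\Sigma^*,\mathfrak{F})$ so that concatenation realizes addition on lengths, the chosen function symbol realizes multiplication by sending any pair $(u,v)$ to $a^{|u|\cdot|v|}$, and the chosen predicate holds exactly on pairs of equal words. One must then check that $\mathfrak{F}$ is a legal interpretation function: the required conditions $\mathfrak{F}(\alpha)=\{\alpha\}$ for $\alpha\in\Sigma\cup\{\varepsilon\}$ and $\mathfrak{F}(\cdot)=\{(u,v,uv)\}$ hold by construction, and every length-based definition is a total single-valued function, so the functionality and totality requirements of an interpretation are met.

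Next, given a Diophantine equation $p(x_1,\ldots,x_k)=q(x_1,\ldots,x_k)$ with natural coefficients, I would build, by a purely syntactic and hence computable translation, a formula $\phi$ over the chosen symbols whose two subterms mirror $p$ and $q$ (constants $a^{c}$ being obtained by iterated concatenation) and whose top-level connective is the equality predicate. The correctness claim is the equivalence
\begin{align*}
\exists\,\mathrm{r}\in\mathrm{Real}_\Gamma(I),\ \mathrm{eval}_{(I,\mathrm{r})}(\phi)=1 \Longleftrightarrow \exists\,(n_1,\ldots,n_k)\in\mathbb{N}^k,\ p(n_1,\ldots,n_k)=q(n_1,\ldots,n_k),
\end{align*}
which holds because under $I$ every term evaluates to a word whose length is the value of the corresponding polynomial at the lengths $(|\mathrm{r}(x_1)|,\ldots,|\mathrm{r}(x_k)|)$, and the equality predicate then reduces exactly to equality of these two lengths; conversely any solution $(n_1,\ldots,n_k)$ is witnessed by the realization $x_j\mapsto a^{n_j}$. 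Since the right-hand side is undecidable, so is satisfiability of $\phi$ under $I$, which proves the theorem.

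The main obstacle I anticipate is twofold. First, one must make the interpretation total while ensuring that its \emph{junk} behaviour on words of unintended shape cannot manufacture spurious solutions; reading every word through its length removes this difficulty cleanly, since each natural number is represented by many words but always faithfully, so no assignment can cheat. Second, and more delicate at the level of the statement itself, is the requirement that the fixed environment actually contain enough symbols (a binary multiplication symbol and a binary equality predicate, addition being supplied for free by concatenation) to express arithmetic: concatenation alone yields only word equations and length equations of Presburger type, both of which are decidable, so genuine undecidability forces a multiplication-like operation to be present. This is where the whole reduction lives, and it is the point at which one must either restrict to, or exhibit, an environment whose families $\mathcal{F}$ and $\mathcal{P}$ are rich enough for the DPRM encoding to go through.
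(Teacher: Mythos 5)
Your proposal is correct, but it runs the reduction in the opposite direction from the paper, and the difference is substantive. The paper keeps the formula fixed, namely $\phi=P(x_1,\ldots,x_k)$, and buries the Diophantine system $\mathbb{S}$ inside the \emph{interpretation}, setting $\mathfrak{F}(P)=\{(w_1,\ldots,w_k)\mid (|w_1|,\ldots,|w_k|)\text{ is a solution of }\mathbb{S}\}$; satisfiability under $I_{\mathbb{S}}$ is then equivalent to solvability of $\mathbb{S}$, and Matiyasevich's theorem is invoked. You instead fix a single arithmetic interpretation once and for all (concatenation as addition on lengths, a spare binary function symbol as multiplication on lengths, a binary predicate as word equality) and vary the \emph{formula}, computably translating each equation $p=q$ into $\mathrm{Eq}(t_p,t_q)$; your correctness argument is sound, since equality of words forces equality of lengths (so junk assignments cannot manufacture spurious solutions), and conversely solutions are witnessed by realizations into $a^*$. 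What each approach buys: yours makes the quantifier structure of the theorem honest --- for a single fixed pair $(\phi,I)$ the question is one yes/no instance and hence trivially decidable, so undecidability can only mean that the formula is the input while $I$ stays fixed, which is exactly what you prove and what matches the paper's surrounding claim that satisfiability ``can be undecidable when the interpretation is fixed''; the price is the richness assumption you flag (a multiplication-like function symbol and an equality-like predicate must exist in $\mathcal{E}$, which the theorem's hypotheses do not guarantee). The paper's construction needs only one $k$-ary predicate symbol and $k$ variables and a one-line interpretation, but what it literally establishes is undecidability of a problem whose input is the interpretation (equivalently the system $\mathbb{S}$), which sits awkwardly with the statement ``there exists an interpretation $I$ such that \ldots is undecidable.'' One further remark: under your reading, the DPRM machinery is not even necessary --- interpreting a single unary predicate symbol $P$ as an undecidable subset $S$ of $a^*$ already makes the ground instances $P(\mathrm{Term}(a^n))$ an undecidable family of formulas --- so your reduction is more elaborate than needed, though it has the merit of using only ``natural'' arithmetic relations rather than an oracle-like predicate.
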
   
   \begin{proof}
      Let $k$ be an integer. Let $\mathcal{E}=(\Sigma,\{x_1,\ldots,x_k\},\mathcal{P},\mathcal{F})$. Let $\mathbb{S}$ be a system of diophantine equations with $k$ variables and $P$ be a symbol in $\mathcal{P}_k$.      
     Let us consider the expression interpretation $I=(\Sigma^*,\mathfrak{F})$ such that $\mathfrak{F}(P)=\{(w_1,\ldots,w_k)\mid (|w_1|,\ldots,|w_k|)$ $\text{ is a solution of } \mathbb{S}\}$.
     Let $\phi= P(x_1,\ldots,x_k)$. 
     Then there exists a solution $(n_1,\ldots,n_k)$ for $\mathbb{S}$ if and only if there exists a realization $\mathrm{r}$ that associates for any integer $j$ in $\{1,\ldots,k\}$ the variable $x_j$ with a word $w_j$ of length $n_j$ such that $\mathrm{eval}_{I,\mathrm{r}}(\phi)=1$.
     The solvability of diophantine systems (a.k.a. the tenth Hilbert problem) has been proved to be undecidable by Matiyasevich~\cite{My93}. 
     Hence to determine whether or not "there exists a realization $\mathrm{r}$ in $\mathrm{Real}_\Gamma(I)$ such that $\mathrm{eval}_{(I,\mathrm{r})}(\phi)=1$" is undecidable.
     \qed
   \end{proof}
  
  However, given a boolean formula $\phi$ in $\mathcal{P}(\mathcal{F}(\Gamma))$, to determine whether or not "there exists an interpretation $I$ in $\mathrm{Int}(\mathcal{E})$ and a realization $\mathrm{r}$ in $\mathrm{Real}_\Gamma(I)$ such that $\mathrm{eval}_{(I,\mathrm{r})}(\phi)=1$" is decidable, only using propositional logic.
  
  \begin{theorem}\label{thm sat int exp avc rien est deci}
     Let $\mathcal{E}=(\Sigma,\Gamma,\mathcal{P},\mathcal{F})$ be an expression environment. Given a boolean formula $\phi$ in $\mathcal{P}(\mathcal{F}(\Gamma))$, to determine whether or not there exists an interpretation $I$ in $\mathrm{Int}(\mathcal{E})$ and a realization $\mathrm{r}$ in $\mathrm{Real}_\Gamma(I)$ such that $\mathrm{eval}_{(I,\mathrm{r})}(\phi)=1$ is decidable.
   \end{theorem}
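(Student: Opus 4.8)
The plan is to reduce the problem to propositional satisfiability, exploiting that here \emph{both} the interpretation and the realization are existentially quantified, so we enjoy maximal freedom in assigning truth values to the atomic subformulae of $\phi$. First I would collect the finite set $\{A_1,\ldots,A_n\}$ of atomic subformulae of $\phi$ (each of the form $P(t_1,\ldots,t_k)$) and replace each $A_i$ by a fresh propositional variable $p_i$, obtaining the Boolean skeleton $\widehat\phi$. The key observation is that, since every predicate symbol is freely interpretable (the interpretation function sends $P\in\mathcal P_k$ to an arbitrary subset of $\mathfrak D^k=(\Sigma^*)^k$, with no functionality constraint bearing on predicates), the truth value of $P(t_1,\ldots,t_k)$ under $(I,\mathrm r)$ depends only on the tuple $(\mathrm{eval}_{I,\mathrm r}(t_1),\ldots,\mathrm{eval}_{I,\mathrm r}(t_k))$. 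Hence the only obstruction to assigning truth values independently is that two atoms with the same predicate may be \emph{forced} to carry equal argument tuples under every $(I,\mathrm r)$. I therefore define an equivalence $\approx$ on $\{A_1,\ldots,A_n\}$ by $A_i\approx A_j$ iff they share the same predicate symbol and their argument tuples coincide under \emph{every} expression interpretation and realization; necessarily $\mathrm{eval}_{I,\mathrm r}(A_i)=\mathrm{eval}_{I,\mathrm r}(A_j)$ whenever $A_i\approx A_j$.

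The central lemma I would prove is that $\approx$ is decidable and is induced by a single \emph{generic} pair $(I_0,\mathrm r_0)$. Because the domain is forced to be $\Sigma^*$, the symbols of $\Sigma\cup\{\varepsilon\}$ are fixed constants and $\cdot$ is forced to be concatenation, whereas every other function symbol is freely interpretable, two terms are forced equal precisely when they become equal after flattening the fixed monoid structure (associativity of $\cdot$ and the identity $\varepsilon$) while treating the variables of $\Gamma$ and the maximal free-function-headed subterms as opaque generators. For $\Card{\Sigma}\ge 2$ one realises these generators freely by coding them with a prefix code over $\Sigma$, so that concatenations never merge accidentally and forced equality coincides with equality of the flattened strings; for $\Card{\Sigma}\le 1$ the monoid $\Sigma^*$ is commutative and forced equality reduces to equality of length-forms — in every case a decidable comparison on the finitely many terms occurring in $\phi$. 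The free function symbols are interpreted as total injections on $(\Sigma^*)^k$ with pairwise disjoint images, disjoint as well from the fixed constant and catenation values; such injective encodings exist since $\Sigma^*$ is infinite (the degenerate case $\Sigma=\emptyset$, where the domain is a single point, being trivially decidable), so under $(I_0,\mathrm r_0)$ two argument tuples coincide iff they are forced equal.

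Conversely, for any Boolean valuation $\nu$ of $p_1,\ldots,p_n$ respecting $\approx$, I would realise $\nu$ with the generic pair $(I_0,\mathrm r_0)$ together with the predicate interpretation putting into $\mathfrak F(P)$ exactly the evaluated argument tuples of the atoms $A_i=P(\ldots)$ with $\nu(p_i)=1$; the hypothesis that $\nu$ respects $\approx$ guarantees that no tuple is simultaneously required inside and outside $\mathfrak F(P)$, so each $\mathfrak F(P)$ is well defined. Combining the two directions, there exists $(I,\mathrm r)$ with $\mathrm{eval}_{(I,\mathrm r)}(\phi)=1$ iff the skeleton $\widehat\phi$ admits a satisfying valuation that respects $\approx$; as $\approx$ is computable, the atoms are finitely many and propositional satisfiability is decidable, the whole test is decidable. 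The main obstacle I anticipate is precisely the generic-model construction: showing that the finitely many relevant free-function applications and variable values can be encoded as genuine \emph{total and functional} maps on the fixed domain $\Sigma^*$ without creating accidental collisions, so that $(I_0,\mathrm r_0)$ induces exactly the forced equalities $\approx$ and nothing more.
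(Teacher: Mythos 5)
Your proposal follows the same overall strategy as the paper's proof: propositionalize $\phi$, observe that the only obstruction to transferring a satisfying propositional valuation back to a pair $(I,\mathrm{r})$ is that distinct atoms can be \emph{forced} to be equi-evaluated (the paper's own example is $\phi=P((x\cdot y)\cdot z)\wedge \neg P(x\cdot(y\cdot z))$), and then construct a single generic pair under which exactly the forced equalities hold, so that satisfiability of the skeleton transfers back. The packaging differs, but only superficially: the paper first normalizes terms modulo associativity of $\cdot$ and the identity $\varepsilon$, so that forced-equal atoms become syntactically identical and are merged into one propositional symbol $P_{(t'_1,\ldots,t'_k)}$, and then proves that the finitely many normalized terms admit an injective evaluation into $\Sigma^*$ (Lemma~\ref{lem exist mot separ} and Proposition~\ref{prop exist inj cas binaire}, built from separator words $ab^pa$ and iterated substitution); you instead keep all atoms, define the semantic forced-equality relation $\approx$, and restrict propositional valuations to those respecting $\approx$, realizing such a valuation through injective encodings with disjoint images. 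These are equivalent formulations — your version shifts the bookkeeping from syntax (normal forms) to semantics (the relation $\approx$) — and in both cases the technical heart is the generic injective model, which you correctly single out as the main obstacle and sketch by a different but comparable device (prefix codes rather than $ab^pa$ separators).

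There is, however, one point where you claim more than can be supported. You assert that for $\Card{\Sigma}\le 1$ forced equality "reduces to equality of length-forms" and is decidable, and you treat that case on a par with $\Card{\Sigma}\ge 2$. The paper does not prove this: its injection result explicitly requires $\Card{\Sigma}\ge 2$, it exhibits the set $\{x\cdot y,\ y\cdot x\}$ as a counterexample showing that no injection can exist over a unary alphabet, and its conclusion leaves the unary case as an open conjecture (one would have to redo the normalization modulo commutativity and re-prove the separation lemma with words $a^p$, checking in particular that sums of generator values never collide). So the unary clause of your argument is a genuine gap — plausible, but it is precisely the part of the problem the authors themselves could not close, and a one-clause assertion does not discharge it. For $\Card{\Sigma}\ge 2$ your proposal matches the paper's proof in substance.
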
 
   
   The next subsections are devoted to proving Theorem~\ref{thm sat int exp avc rien est deci}. We first show that any boolean formula can be transformed into a propositional formula (which is a boolean formula with only $0$-ary predicate symbols). Then we show that any formula is equisatisfiable to its propositionnal form whenever there exists an evaluation which is an injection. We finally show that any formula admits an equivalent formula such that an injection exists.
   
  \subsection{Propositionalisation}
  
  The propositionalisation of a boolean formula is performed by replacing any predicate by a unique symbol; in fact, any predicate appearing in the formula is considered as a new symbol.
  
  \begin{definition}[Propositionalisation]\label{def propositionalisation}
    Let $\mathcal{E}=(\Sigma,\Gamma,\mathcal{P},\mathcal{F})$ be an expression environment. The \emph{propositionalisation} of a boolean formula $\phi$ in $\mathcal{P}(\mathcal{F}(\Gamma))$ is the transformation $T$ inductively defined as follows:
        \begin{align*}
      T(o(\phi_1,\ldots,\phi_k))&=o(T(\phi_1),\ldots,T(\phi_k))\\
      T(P(t_1,\ldots,t_k))&=P_{(t_1,\ldots,t_k)},
        \end{align*}
        where $k$ is any integer, $P$ is any predicate symbol in $\mathcal{P}_k$, $t_1,\ldots,t_k$ are any $k$ elements in $\mathcal{F}(\Gamma)$, $\mathrm{o}$ is any $k$-ary boolean operator associated with a mapping $\mathrm{o}'$ from $\{0,1\}^k$ to $\{0,1\}$ and $\phi_1,\ldots,\phi_k$ are any $k$ boolean formulae in $\mathcal{P}(\mathcal{F}(\Gamma))$. The symbol $P_{(t_1,\ldots,t_k)}$ is the \emph{propositional predicate symbol associated with} the term $P(t_1,\ldots,t_k)$.
  \end{definition}

  \begin{definition}[Propositional Alphabet]\label{def propositional alphabet}
    Let $\mathcal{E}=(\Sigma,\Gamma,\mathcal{P},\mathcal{F})$ be an expression environment. The \emph{propositional alphabet} of a boolean formula $\phi$ in $\mathcal{P}(\mathcal{F}(\Gamma))$ is the set $\mathcal{P}'(\phi)$ inductively defined as follows:
        \begin{align*}
        \mathcal{P}'(o(\phi_1,\ldots,\phi_k))&=\mathcal{P}'(\phi_1)\cup\cdots\cup \mathcal{P}'(\phi_k),\\
        \mathcal{P}'(P(t_1,\ldots,t_k))&=\{P_{(t_1,\ldots,t_k)}\},
        \end{align*}
        where $k$ is any integer, $P$ is any predicate symbol in $\mathcal{P}_k$, $t_1,\ldots,t_k$ are any $k$ elements in $\mathcal{F}(\Gamma)$, $\mathrm{o}$ is any $k$-ary boolean operator associated with a mapping $\mathrm{o}'$ from $\{0,1\}^k$ to $\{0,1\}$ and $\phi_1,\ldots,\phi_k$ are any $k$ boolean formulae in $\mathcal{P}(\mathcal{F}(\Gamma))$.
  \end{definition}
  
  \begin{proposition}
    Let $\mathcal{E}=(\Sigma,\Gamma,\mathcal{P},\mathcal{F})$ be an expression environment. Let $\phi$ be a boolean formula in $\mathcal{P}(\mathcal{F}(\Gamma))$. Then:
        \begin{align*}
        T(\phi)\text{ is a boolean formula in }(\mathcal{P}'(\phi))(\emptyset).
        \end{align*}    
    Furthermore, $\mathcal{P}'(\phi)$ is a finite set.
  \end{proposition}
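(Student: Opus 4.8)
The plan is to proceed by structural induction on $\phi$, establishing both assertions simultaneously, after first making precise the reading of $(\mathcal{P}'(\phi))(\emptyset)$: since every symbol produced by $T$ is a propositional (that is, $0$-ary) predicate symbol, I regard $\mathcal{P}'(\phi)$ as the $\mathbb{N}$-indexed family whose $0$-ary component is $\mathcal{P}'(\phi)$ and whose $k$-ary components for $k\geq 1$ are empty. The set of boolean formulae $(\mathcal{P}'(\phi))(\emptyset)$ is then the set of formulae built from these $0$-ary symbols and the empty set of variables, as in the definition of $\mathcal{P}(\mathcal{F}(X))$ from the preliminaries.

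For the base case $\phi=P(t_1,\ldots,t_k)$ with $P\in\mathcal{P}_k$, Definition~\ref{def propositionalisation} gives $T(\phi)=P_{(t_1,\ldots,t_k)}$ and Definition~\ref{def propositional alphabet} gives $\mathcal{P}'(\phi)=\{P_{(t_1,\ldots,t_k)}\}$. The symbol $P_{(t_1,\ldots,t_k)}$ is a $0$-ary predicate symbol of $\mathcal{P}'(\phi)$, hence, applied to no terms, it is an atomic boolean formula in $(\mathcal{P}'(\phi))(\emptyset)$. Moreover $\mathcal{P}'(\phi)$ is a singleton, so it is finite.

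For the inductive step $\phi=\mathrm{o}(\phi_1,\ldots,\phi_k)$, Definition~\ref{def propositionalisation} gives $T(\phi)=\mathrm{o}(T(\phi_1),\ldots,T(\phi_k))$ and Definition~\ref{def propositional alphabet} gives $\mathcal{P}'(\phi)=\mathcal{P}'(\phi_1)\cup\cdots\cup\mathcal{P}'(\phi_k)$. By the induction hypothesis, each $T(\phi_i)$ is a boolean formula in $(\mathcal{P}'(\phi_i))(\emptyset)$; since $\mathcal{P}'(\phi_i)\subseteq\mathcal{P}'(\phi)$, each $T(\phi_i)$ is a fortiori a boolean formula in $(\mathcal{P}'(\phi))(\emptyset)$, because enlarging the predicate family only enlarges the set of available formulae. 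Applying the same $k$-ary boolean operator $\mathrm{o}$ to boolean formulae over a common predicate family again yields a boolean formula over that family, so $T(\phi)\in(\mathcal{P}'(\phi))(\emptyset)$. Finiteness follows since each $\mathcal{P}'(\phi_i)$ is finite by the induction hypothesis and a finite union of finite sets is finite.

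No genuine obstacle arises: the statement is a routine verification that $T$ preserves the boolean-formula structure while merely renaming each atom as a fresh $0$-ary symbol. The one point requiring care is the bookkeeping just mentioned, namely checking that the arities match so that the empty set of variables suffices (the propositional atoms are genuinely $0$-ary), and that $\mathcal{P}'$ is monotone under the union appearing in the inductive clause, which is exactly what guarantees that the subformulae remain well-formed over the enlarged propositional alphabet.
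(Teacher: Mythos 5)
Your proof is correct and takes essentially the same approach as the paper: the paper's own proof is the one-line remark that the claim is ``inductively deduced'' from Definition~\ref{def propositionalisation} and Definition~\ref{def propositional alphabet}, and your argument is precisely that structural induction written out in full. The extra bookkeeping you supply (reading $\mathcal{P}'(\phi)$ as a family of $0$-ary predicate symbols and noting monotonicity of the formula set under enlargement of the predicate alphabet) is exactly what the paper leaves implicit.
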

  \begin{proof}
    Inductively deduced from Definition~\ref{def propositionalisation} and from Definition~\ref{def propositional alphabet}.
    \qed
  \end{proof}
  
  Two of the main interests of these propositional formulae are that \textbf{(I)} they do not need realization to be evaluated (since there is no variable symbols nor terms) and \textbf{(II)} their satisfiability is decidable, using truth tables for example.
  
  Let us now show that the Propositionalisation may produce an equisatisfiable formula.
  
  \subsection{Equisatisfiability of the Propositionalisation}
  
  Once the propositionalisation has been applied over a formula, it can be determined if the obtained formula is satisfiable. This leads to two cases corresponding to the following propositions.
  
   \begin{proposition}\label{prop si tphi contra phi aussi}
    Let $\mathcal{E}=(\Sigma,\Gamma,\mathcal{P},\mathcal{F})$ be an expression environment. Let $\phi$ be a boolean formula in $\mathcal{P}(\mathcal{F}(\Gamma))$. Let $I$ be an interpretation in $\mathrm{Int}(\mathcal{E})$ and $\mathrm{r}$ be a realization in $\mathrm{Real}_\Gamma(I)$. Let $I'=(\Sigma^*,\mathfrak{F}')$ be the expression interpretation over $\mathcal{E}'$ such that for any symbol $P_{(t_1,\ldots,t_k)}$ in $\mathcal{P}'(\phi)$, $\mathrm{eval}_{I'}(P_{(t_1,\ldots,t_k)})=\mathrm{eval}_{I,r}(P(t_1,\ldots,t_k))$.
    Then:
        \begin{align*}
        \mathrm{eval}_{I,r}(\phi)=\mathrm{eval}_{I'}(T(\phi)).
        \end{align*}
  \end{proposition}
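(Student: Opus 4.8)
The plan is to proceed by structural induction on the boolean formula $\phi$, mirroring exactly the inductive definition of the transformation $T$ in Definition~\ref{def propositionalisation} together with the inductive definition of formula evaluation. A preliminary observation guiding the whole argument is that $T(\phi)$ is a formula over $(\mathcal{P}'(\phi))(\emptyset)$, that is, it contains only $0$-ary predicate symbols and no variable symbols; consequently its $I'$-evaluation needs no realization, which is precisely why the right-hand side is written $\mathrm{eval}_{I'}(T(\phi))$ rather than carrying a realization subscript.

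First I would treat the base case $\phi=P(t_1,\ldots,t_k)$, where $P$ is a predicate symbol in $\mathcal{P}_k$ and $t_1,\ldots,t_k$ are terms in $\mathcal{F}(\Gamma)$. Here $T(\phi)=P_{(t_1,\ldots,t_k)}$ by Definition~\ref{def propositionalisation}, and the hypothesis defining $I'$ states exactly that $\mathrm{eval}_{I'}(P_{(t_1,\ldots,t_k)})=\mathrm{eval}_{I,\mathrm{r}}(P(t_1,\ldots,t_k))$. Hence $\mathrm{eval}_{I,\mathrm{r}}(\phi)=\mathrm{eval}_{I'}(T(\phi))$ follows immediately. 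This is the case where the construction of $I'$ carries all the content: the fresh propositional symbol is simply assigned the truth value that the atomic formula already has under $(I,\mathrm{r})$.

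Next I would handle the inductive step $\phi=\mathrm{o}(\phi_1,\ldots,\phi_k)$, where $\mathrm{o}$ is a $k$-ary boolean operator with associated mapping $\mathrm{o}'$. By Definition~\ref{def propositionalisation} one has $T(\phi)=\mathrm{o}(T(\phi_1),\ldots,T(\phi_k))$, and applying the definition of formula evaluation to each side yields
\begin{align*}
\mathrm{eval}_{I,\mathrm{r}}(\phi)&=\mathrm{o}'(\mathrm{eval}_{I,\mathrm{r}}(\phi_1),\ldots,\mathrm{eval}_{I,\mathrm{r}}(\phi_k)),\\
\mathrm{eval}_{I'}(T(\phi))&=\mathrm{o}'(\mathrm{eval}_{I'}(T(\phi_1)),\ldots,\mathrm{eval}_{I'}(T(\phi_k))).
\end{align*}
By the induction hypothesis, $\mathrm{eval}_{I,\mathrm{r}}(\phi_j)=\mathrm{eval}_{I'}(T(\phi_j))$ for every $j$ in $\{1,\ldots,k\}$, so the two tuples fed to $\mathrm{o}'$ agree componentwise and the outputs coincide, closing the induction.

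I do not expect any genuine obstacle: the statement amounts to the fact that propositionalisation and the interpretation $I'$ are set up so as to commute with evaluation atom by atom, while the boolean operators are interpreted identically on both sides. The only point I would state with some care is that $I'$ is indeed a well-defined expression interpretation. Its defining condition pins down $\mathfrak{F}'$ only on the finite set $\mathcal{P}'(\phi)$ of $0$-ary symbols actually occurring in $T(\phi)$; any completion of $\mathfrak{F}'$ on the remaining symbols of the propositional environment $\mathcal{E}'$ is harmless, since those symbols never appear in $T(\phi)$ and thus cannot affect $\mathrm{eval}_{I'}(T(\phi))$.
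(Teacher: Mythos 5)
Your proof is correct and follows essentially the same route as the paper's: structural induction on $\phi$, with the base case handled directly by the defining property of $I'$ and the inductive step by commuting evaluation with the boolean operator $\mathrm{o}'$. Your added remarks on the well-definedness of $I'$ and on $T(\phi)$ requiring no realization are sound but not needed beyond what the paper does.
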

  \begin{proof}
    By induction over the structure of $\phi$.
    
    If $\phi=P(t_1,\ldots,t_k)$, then $\mathrm{eval}_{I,r}(P(t_1,\ldots,t_k))=\mathrm{eval}_{I'}(P_{(t_1,\ldots,t_k)})$.
    
    If $\phi=o(\phi_1,\ldots,\phi_k)$, then 
        \begin{align*}
        \mathrm{eval}_{I,r}(o(\phi_1,\ldots,\phi_k)) & = o'(\mathrm{eval}_{I,r}(\phi_1),\ldots,\mathrm{eval}_{I,r}(\phi_k))\\
      & = o'(\mathrm{eval}_{I'}(T(\phi_1)),\ldots,\mathrm{eval}_{I'}(T(\phi_k)))\\
      & = \mathrm{eval}_{I'}(o(T(\phi_1),\ldots,T(\phi_k))\\
      & = \mathrm{eval}_{I'}(T(\phi)).
        \end{align*}
    \qed
  \end{proof}
  
  \begin{corollary}\label{cor tphi contra phi aussi}
    Let $\mathcal{E}=(\Sigma,\Gamma,\mathcal{P},\mathcal{F})$ be an expression environment. Let $\phi$ be a boolean formula in $\mathcal{P}(\mathcal{F}(\Gamma))$. Then:
    \begin{itemize}
      \item If $T(\phi)$ is a contradiction, so is $\phi$.
      \item If $T(\phi)$ is a tautology, so is $\phi$.
    \end{itemize}
  \end{corollary}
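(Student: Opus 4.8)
The plan is to read off both implications directly from Proposition~\ref{prop si tphi contra phi aussi}, which already records the crucial fact that every concrete evaluation of $\phi$ is matched by an evaluation of its propositionalisation $T(\phi)$. First I would make explicit the definitions against which I am working: $\phi$ is a \emph{contradiction} when $\mathrm{eval}_{I,\mathrm{r}}(\phi)=0$ for every interpretation $I$ in $\mathrm{Int}(\mathcal{E})$ and every realization $\mathrm{r}$ in $\mathrm{Real}_\Gamma(I)$, and a \emph{tautology} when this evaluation is always $1$. Since $T(\phi)$ is a formula in $(\mathcal{P}'(\phi))(\emptyset)$ involving no variables, $T(\phi)$ being a contradiction (resp. tautology) simply means $\mathrm{eval}_{I'}(T(\phi))=0$ (resp. $=1$) for every interpretation $I'$ of its finite propositional alphabet $\mathcal{P}'(\phi)$.

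For the first item, I would assume $T(\phi)$ is a contradiction and fix an arbitrary pair $(I,\mathrm{r})$. Proposition~\ref{prop si tphi contra phi aussi} builds from $(I,\mathrm{r})$ a propositional interpretation $I'$ with $\mathrm{eval}_{I,\mathrm{r}}(\phi)=\mathrm{eval}_{I'}(T(\phi))$; since $T(\phi)$ is a contradiction the right-hand side is $0$, hence $\mathrm{eval}_{I,\mathrm{r}}(\phi)=0$. As $(I,\mathrm{r})$ was arbitrary, $\phi$ is a contradiction. The tautology case is identical, replacing the value $0$ by $1$ throughout.

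The only point to check carefully is the alignment of quantifiers: Proposition~\ref{prop si tphi contra phi aussi} supplies, for each fixed $(I,\mathrm{r})$, one particular witnessing interpretation $I'$, whereas the hypothesis that $T(\phi)$ is a contradiction or a tautology is a statement quantified over all propositional interpretations. These fit together precisely because the universal hypothesis on $T(\phi)$ covers the single $I'$ produced by the proposition, so no extra argument is required. I therefore expect no genuine obstacle here; the content of the corollary is entirely carried by the preceding proposition, and the proof is a two-line specialization of it. One could equivalently argue contrapositively, using that $T$ commutes with the boolean operators so that $T(\neg\phi)=\neg T(\phi)$ and deducing the tautology case from the contradiction case applied to $\neg\phi$, but the direct reading above is shorter.
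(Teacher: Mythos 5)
Your proof is correct and is essentially the paper's own argument: the paper states this as an immediate corollary of Proposition~\ref{prop si tphi contra phi aussi}, and your two-line specialization (fix an arbitrary $(I,\mathrm{r})$, invoke the witnessing propositional interpretation $I'$ with $\mathrm{eval}_{I,\mathrm{r}}(\phi)=\mathrm{eval}_{I'}(T(\phi))$, and apply the universal hypothesis on $T(\phi)$ to that particular $I'$) is exactly the intended reasoning. Your explicit remark on the quantifier alignment is a worthwhile clarification of what the paper leaves implicit.
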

  
  However, the satisfiability of $T(\phi)$, when it is not a tautology, is not sufficient to conclude over the satisfiability of $\phi$. 
  Indeed, 
  it can happen that two distinct predicates in $T(\phi)$ have to be evaluated differently while the associated predicates cannot be in $\phi$. 
  As an example, consider the formulae $\phi=P((x\cdot y)\cdot z)\wedge \neg P(x\cdot(y\cdot z))$ and $T(\phi)=P_{((x\cdot y)\cdot z)}\wedge \neg P_{(x\cdot(y\cdot z))}$. The formula $T(\phi)$ is satisfiable when $P_{((x\cdot y)\cdot z)}$ is true and  $P_{(x\cdot(y\cdot z))}$ is not. However, whatever the realization considered, $P((x\cdot y)\cdot z)$ and $P(x\cdot(y\cdot z))$ will always be equi-evaluated. Let us formally define the notion of injection that separates two distinct terms while evaluating.  
    
  Let $\mathcal{E}=(\Sigma,\Gamma,\mathcal{P},\mathcal{F})$ be an expression environment. Let $\phi$ be a boolean formula in $\mathcal{P}(\mathcal{F}(\Gamma))$. The set of the terms of $\phi$ is the set $\mathrm{Term}(\phi)$ inductively defined by:
        \begin{align*}
          \mathrm{Term}(o(\phi_1,\ldots,\phi_k))&=\mathrm{Term}(\phi_1)\cup\cdots\cup \mathrm{Term}(\phi_k),\\
          \mathrm{Term}(P(t_1,\ldots,t_k))&=\{t_1,\ldots,t_k\},
        \end{align*}
        where $k$ is any integer, $P$ is any predicate symbol in $\mathcal{P}_k$, $t_1,\ldots,t_k$ are any $k$ elements in $\mathcal{F}(\Gamma)$, $\mathrm{o}$ is any $k$-ary boolean operator associated with a mapping $\mathrm{o}'$ from $\{0,1\}^k$ to $\{0,1\}$ and $\phi_1,\ldots,\phi_k$ are any $k$ boolean formulae in $\mathcal{P}(\mathcal{F}(\Gamma))$.
    
  \begin{definition}[Injection]
    Let $\mathcal{E}=(\Sigma,\Gamma,\mathcal{P},\mathcal{F})$ be an expression environment. Let $T$ be a subset of $\mathcal{F}(\Gamma)$. Let $I$ be an interpretation in $\mathrm{Int}(\mathcal{E})$ and $\mathrm{r}$ be a realization in $\mathrm{Real}_\Gamma(I)$. The function $\mathrm{eval}_{I,r}$ is said to be an injection of $T$ in $\Sigma^*$ if:
        \begin{align*}
          \text{for any two terms $t_1$ and $t_2$ in  $T$, $\mathrm{eval}_{I,r}(t_1)\neq \mathrm{eval}_{I,r}(t_2)$}.
        \end{align*}
  \end{definition}
  
  Given that such an evaluation exists, let us show that the propositionalisation preserves the satisfiability.
  
  \begin{proposition}\label{prop si tphi sat phi aussi}
    Let $\mathcal{E}=(\Sigma,\Gamma,\mathcal{P},\mathcal{F})$ be an expression environment. 
    Let $\phi$ be a boolean formula in $\mathcal{P}(\mathcal{F}(\Gamma))$. 
    Let $I=(\Sigma^*,\mathfrak{F})$ be an interpretation in $\mathrm{Int}(\mathcal{E})$ and $\mathrm{r}$ be a realization in $\mathrm{Real}_\Gamma(I)$ such that $\mathrm{eval}_{I,r}$ is an injection of $\mathrm{Term}(\phi)$ in $\Sigma^*$. 
    Let $\mathcal{E}'=(\Sigma,\Gamma,\mathcal{P}'(\phi),\emptyset)$. 
    Let $I'=(\Sigma^*,\mathfrak{F}')$ be an expression interpretation over $\mathcal{E}'$.    
    Let $I''=(\Sigma^*,\mathfrak{F}'')$ be an expression interpretation over $\mathcal{E}$ satisfying the  following two conditions:
      \begin{itemize}
        \item for any function symbol $f$ in $F_k$, $\mathfrak{F}''(f)= \mathfrak{F}(f)$,
        \item for any predicate symbol $P_{(t_1,\ldots,t_k)}$ in $\mathcal{P}'(\phi)$, $\mathrm{eval}_{I'}(P_{(t_1,\ldots,t_k)})=1 \Leftrightarrow (\mathrm{eval}_{\mathrm{I}'',r}(t_1),$ $\ldots,$ $\mathrm{eval}_{\mathrm{I}'',r}(t_k))\in \mathfrak{F}''(P)$.
      \end{itemize}
    
    Then:
        \begin{align*}
        \text{$\mathrm{eval}_{I'',r}$ is an injection of $\mathrm{Term}(\phi)$ in $\Sigma^*$ such that $\mathrm{eval}_{I'',r}(\phi)=\mathrm{eval}_{I'}(T(\phi))$.} 
        \end{align*}   
  \end{proposition}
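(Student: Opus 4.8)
The plan is to reduce everything to one observation: since $\mathfrak{F}''$ agrees with $\mathfrak{F}$ on every function symbol, the term evaluations under $I''$ and under $I$ coincide. Concretely, I would first prove by structural induction on $t\in\mathcal{F}(\Gamma)$ that $\mathrm{eval}_{I'',r}(t)=\mathrm{eval}_{I,r}(t)$. The base case $t=x\in\Gamma$ holds because both evaluations return $\mathrm{r}(x)$; the case $t=f(t_1,\ldots,t_k)$ holds because the defining tuple lies in $\mathfrak{F}''(f)=\mathfrak{F}(f)$, so the (unique) output value is the same once the induction hypothesis is applied to the arguments. Here it is worth remarking that term evaluation never consults the predicate part of an interpretation, so there is no circularity in the specification of $I''$: the condition defining $\mathfrak{F}''(P)$ mentions $\mathrm{eval}_{I'',r}(t_i)$, but those domain values are already fixed by the function part $\mathfrak{F}''=\mathfrak{F}$.

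Given this coincidence, the first half of the conclusion is immediate. If $t_1,t_2\in\mathrm{Term}(\phi)$ satisfy $\mathrm{eval}_{I'',r}(t_1)=\mathrm{eval}_{I'',r}(t_2)$, then $\mathrm{eval}_{I,r}(t_1)=\mathrm{eval}_{I,r}(t_2)$ by the coincidence lemma, hence $t_1=t_2$ because $\mathrm{eval}_{I,r}$ is assumed to be an injection of $\mathrm{Term}(\phi)$ in $\Sigma^*$. Thus $\mathrm{eval}_{I'',r}$ is an injection of $\mathrm{Term}(\phi)$ in $\Sigma^*$ as well.

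For the equality $\mathrm{eval}_{I'',r}(\phi)=\mathrm{eval}_{I'}(T(\phi))$ I would argue by structural induction on $\phi$, mirroring the proof of Proposition~\ref{prop si tphi contra phi aussi}. In the atomic case $\phi=P(t_1,\ldots,t_k)$ we have $T(\phi)=P_{(t_1,\ldots,t_k)}$, and then $\mathrm{eval}_{I'',r}(\phi)=1$ iff $(\mathrm{eval}_{I'',r}(t_1),\ldots,\mathrm{eval}_{I'',r}(t_k))\in\mathfrak{F}''(P)$, which by the second defining condition on $I''$ is exactly $\mathrm{eval}_{I'}(P_{(t_1,\ldots,t_k)})=\mathrm{eval}_{I'}(T(\phi))=1$; note that this case uses the defining property of $\mathfrak{F}''(P)$ directly rather than the injectivity. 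The boolean-operator case $\phi=o(\phi_1,\ldots,\phi_k)$ is routine, since both $\mathrm{eval}$ and $T$ push $o$ through to the subformulae, so applying the induction hypothesis to each $\phi_i$ closes the argument.

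The delicate point, and the true reason the injection hypothesis is present, is the \emph{consistency} of the predicate part of $\mathfrak{F}''$: the defining conditions constrain membership in $\mathfrak{F}''(P)$ only on tuples of the form $(\mathrm{eval}_{I,r}(t_1),\ldots,\mathrm{eval}_{I,r}(t_k))$ coming from applications $P(t_1,\ldots,t_k)$ occurring in $\phi$, and two syntactically distinct argument lists could a priori yield the same tuple of domain values while their propositional symbols receive opposite truth values under $I'$. I expect this to be the main obstacle to make rigorous, and it is resolved precisely by injectivity: if $(\mathrm{eval}_{I,r}(t_1),\ldots)=(\mathrm{eval}_{I,r}(s_1),\ldots)$ then $\mathrm{eval}_{I,r}(t_i)=\mathrm{eval}_{I,r}(s_i)$ for each $i$, whence $t_i=s_i$, so $P_{(t_1,\ldots,t_k)}$ and $P_{(s_1,\ldots,s_k)}$ are the same symbol and cannot be assigned conflicting values. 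Hence the conditions on $\mathfrak{F}''$ are satisfiable, $I''$ is a legitimate expression interpretation, and the two claims above hold.
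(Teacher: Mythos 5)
Your proof is correct and follows essentially the same route as the paper's: the same coincidence lemma $\mathrm{eval}_{I'',r}(t)=\mathrm{eval}_{I,r}(t)$ proved by structural induction on terms (using $\mathfrak{F}''(f)=\mathfrak{F}(f)$), the same transfer of injectivity from $\mathrm{eval}_{I,r}$ to $\mathrm{eval}_{I'',r}$, and the same structural induction on $\phi$ with the atomic case discharged by the defining condition relating $\mathfrak{F}''(P)$ to $\mathrm{eval}_{I'}(P_{(t_1,\ldots,t_k)})$. Your closing paragraph on the consistency of the constraints defining $\mathfrak{F}''(P)$ --- resolved by injectivity --- addresses a point the paper leaves implicit (its statement merely posits that such an $I''$ exists), and it is indeed where the injection hypothesis does real work for the subsequent equisatisfiability corollary.
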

  \begin{proof} 
  \begin{enumerate}
    \item Let us show that $\mathrm{eval}_{I'',r}$ is an injection of $\mathrm{Term}(\phi)$ in $\Sigma^*$. Let $t$ be a term in $\mathcal{F}(\Gamma)$. 
    \begin{enumerate}
      \item\label{it a prop si tphi} Let us show by induction over the structure of $t$ that $\mathrm{eval}_{I'',r}(t)=\mathrm{eval}_{I,r}(t)$.
      \begin{enumerate}
        \item If $t=x$ in $\Gamma$, then $\mathrm{eval}_{I'',r}(x)=r(x)=\mathrm{eval}_{I,r}(x)$.
        \item Let us suppose that $t=f(t_1,\ldots,t_k)$ with $f$ any $k$-ary function symbol in $\mathcal{F}_k$ and $t_1,\ldots,t_k$ any $k$ terms in $\mathcal{F}(\Gamma)$. Then:
        \begin{align*}
        \mathrm{eval}_{I'',r}(f(t_1,\ldots,t_k))= x_{k+1} & \Leftrightarrow (\mathrm{eval}_{I'',r}(t_1),\ldots,\mathrm{eval}_{I'',r}(t_k),x_{k+1})\in \mathfrak{F}''(f)\\
        & \Leftrightarrow (\mathrm{eval}_{I'',r}(t_1),\ldots,\mathrm{eval}_{I'',r}(t_k),x_{k+1})\in \mathfrak{F}(f) & \text{(by definition of $\mathfrak{F}''$)}\\ 
        & \Leftrightarrow (\mathrm{eval}_{I,r}(t_1),\ldots,\mathrm{eval}_{I,r}(t_k),x_{k+1})\in \mathfrak{F}(f) &\text{(induction hypothesis)}\\
        & \Leftrightarrow \mathrm{eval}_{I,r}(f(t_1,\ldots,t_k))= x_{k+1}
        \end{align*}  
      \end{enumerate}  
      \item As a direct consequence of Item~\ref{it a prop si tphi}, since $\mathrm{eval}_{I,r}$ is an injection of $\mathrm{Term}(\phi)$ in $\Sigma^*$, so is $\mathrm{eval}_{I'',r}$.
    \end{enumerate}
    \item  Let us show by induction over $\phi$ that $\mathrm{eval}_{I'',r}(\phi)=\mathrm{eval}_{I'}(T(\phi))$.
    \begin{enumerate}
      \item If $\phi=P(t_1,\ldots,t_k)$ with $P$ a $k$-ary predicate symbol in $\mathcal{P}_k$, then 
      \begin{align*}
        \mathrm{eval}_{I'',r}(P(t_1,\ldots,t_k))&=(\mathrm{eval}_{\mathrm{I}'',r}(t_1),\ldots,\mathrm{eval}_{\mathrm{I}'',r}(t_k))\in \mathfrak{F}''(P)\\
        &=\mathrm{eval}_{I'}(P_{(t_1,\ldots,t_k)})
      \end{align*}
      \item Let us consider that $\phi=o(\phi_1,\ldots,\phi_k)$. Then:
        \begin{align*}
          \mathrm{eval}_{I'',r}(o(\phi_1,\ldots,\phi_k))& =o'(\mathrm{eval}_{I'',r}(\phi_1),\ldots,\mathrm{eval}_{I'',r}(\phi_k))\\
          & =o'(\mathrm{eval}_{I'}(T(\phi_1)),\ldots,\mathrm{eval}_{I'}(T(\phi_k))) & \text{(induction hypothesis)}\\
          & =\mathrm{eval}_{I'}(o(T(\phi_1),\ldots,T(\phi_k)))\\
          & =\mathrm{eval}_{I'}(T(\phi)).
        \end{align*}
    \end{enumerate} 
    \end{enumerate}
    \qed
  \end{proof}
  
  \begin{corollary}\label{cor tphi sat phi aussi}
    Let $\mathcal{E}=(\Sigma,\Gamma,\mathcal{P},\mathcal{F})$ be an expression environment. Let $\phi$ be a boolean formula in $\mathcal{P}(\mathcal{F}(\Gamma))$ such that there exists an injection of $\mathrm{Term}(\phi)$ in $\Sigma^*$. Then:
        \begin{align*}
      \text{$T(\phi)$ is satisfiable if and only if $\phi$ is.}
        \end{align*}
  \end{corollary}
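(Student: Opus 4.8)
The plan is to derive the biconditional as a direct combination of the two preceding propositions, treating the two implications separately. First I would observe that only one direction actually consumes the injection hypothesis, so it is worth isolating them.

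For the implication ``if $\phi$ is satisfiable then $T(\phi)$ is'', I would not need the injection assumption at all. Suppose there exist $I\in\mathrm{Int}(\mathcal{E})$ and $\mathrm{r}\in\mathrm{Real}_\Gamma(I)$ with $\mathrm{eval}_{I,\mathrm{r}}(\phi)=1$. Applying Proposition~\ref{prop si tphi contra phi aussi} to this pair $(I,\mathrm{r})$ yields an expression interpretation $I'$ over $\mathcal{E}'$ for which $\mathrm{eval}_{I,\mathrm{r}}(\phi)=\mathrm{eval}_{I'}(T(\phi))$, so $\mathrm{eval}_{I'}(T(\phi))=1$ and $T(\phi)$ is satisfiable.

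For the converse, ``if $T(\phi)$ is satisfiable then $\phi$ is'', the injection hypothesis is exactly what makes things work, and here I would lean on Proposition~\ref{prop si tphi sat phi aussi}. Assume $T(\phi)$ is satisfiable, witnessed by an expression interpretation $I'$ over $\mathcal{E}'=(\Sigma,\Gamma,\mathcal{P}'(\phi),\emptyset)$ with $\mathrm{eval}_{I'}(T(\phi))=1$. By the hypothesis of the corollary there are $I\in\mathrm{Int}(\mathcal{E})$ and $\mathrm{r}\in\mathrm{Real}_\Gamma(I)$ such that $\mathrm{eval}_{I,\mathrm{r}}$ is an injection of $\mathrm{Term}(\phi)$ in $\Sigma^*$. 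Feeding $I$, $\mathrm{r}$ and $I'$ into Proposition~\ref{prop si tphi sat phi aussi} produces an expression interpretation $I''$ over $\mathcal{E}$, agreeing with $I$ on every function symbol, with $\mathrm{eval}_{I'',\mathrm{r}}(\phi)=\mathrm{eval}_{I'}(T(\phi))=1$, so $\phi$ is satisfiable. Combining the two implications gives the stated equivalence.

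The only genuinely delicate point — and the reason the hypothesis cannot be dropped — lies inside the construction of the predicate part of $I''$ performed by Proposition~\ref{prop si tphi sat phi aussi}: one wants to declare a tuple to belong to $\mathfrak{F}''(P)$ precisely when the associated propositional symbol $P_{(t_1,\ldots,t_k)}$ is true under $I'$. This is well defined exactly because distinct terms of $\phi$ receive distinct values under the injection $\mathrm{eval}_{I,\mathrm{r}}$, so two syntactically different argument tuples can never collapse to the same point of $\Sigma^*$ and thereby impose conflicting membership demands on $\mathfrak{F}''(P)$; the formula $P((x\cdot y)\cdot z)\wedge\neg P(x\cdot(y\cdot z))$ exhibits precisely the failure that occurs without injectivity. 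Since that obstacle is already resolved within the cited proposition, the corollary itself is nothing more than the bookkeeping assembly of the two implications above, and I would expect no further calculation to be required.
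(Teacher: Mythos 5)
Your proof is correct and takes essentially the same route as the paper, which states this corollary without further argument precisely because it is the bookkeeping combination you describe: Proposition~\ref{prop si tphi contra phi aussi} gives the direction that needs no injection, and Proposition~\ref{prop si tphi sat phi aussi} gives the converse. Your closing remark correctly pinpoints the one non-trivial point, namely that injectivity of $\mathrm{eval}_{I,\mathrm{r}}$ on $\mathrm{Term}(\phi)$ is what makes the predicate part $\mathfrak{F}''(P)$ of the interpretation $I''$ (which the proposition takes as a hypothesis) constructible without conflicting membership requirements.
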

  
  \begin{example}\label{ex inj form}
    Let $\mathcal{E}=(\Sigma,\Gamma,\mathcal{P},\mathcal{F})$ be the expression environment defined by:
    \begin{itemize}
      \item $\Sigma=\{a,b,c\}$,
      \item $\Gamma=\{x,y,z\}$,
      \item $\mathcal{P}=\mathcal{P}_2=\{\lessdot,\sim\}$,
      \item $\mathcal{F}_0=\Sigma\cup\{\varepsilon\}$, $\mathcal{F}_2=\{\mathrm{g},\cdot\}$.
    \end{itemize}
    Let us consider the two boolean formulae defined by:
        \begin{align*}
          \phi_1&=\lessdot(\mathrm{g}(ab,x),abx)\wedge \neg (\sim(abx,\mathrm{g}(a,bx))),\\
          \phi_2&=\lessdot(\cdot(ab,x),abx)\wedge \neg (\lessdot(abx,\cdot(a,bx))).
        \end{align*}
        The terms that appear in these two formulae are:
        \begin{align*}
          \mathrm{Term}(\phi_1)&=\{\mathrm{g}(ab,x),abc,\mathrm{g}(a,bx)\},\\
          \mathrm{Term}(\phi_2)&=\{\cdot(ab,c),abc,\cdot(a,bc)\},
        \end{align*}
        Let $\mathrm{I}=(\Sigma^*,\mathfrak{F})$ be the expression interpretation defined by:
    \begin{itemize}
      \item $\mathfrak{F}(\lessdot)=\mathfrak{F}(\sim)=\{(u,v)\mid |u|\leq |v| \}$,
      \item $\mathfrak{F}(\alpha)=\{\alpha\}$, for any $\alpha$ in $\mathcal{F}_0$, 
      \item $\mathfrak{F}(\mathrm{g})=\{(u,v,vu)\}$
      \item $\mathfrak{F}(\cdot)=\{(u,v,u\cdot v)\}$.
    \end{itemize} 
    Finally, let us consider a realization $\mathrm{r}$ that associates $c$ with $x$.  Then
        \begin{align*}
        \mathrm{eval}(I,\mathrm{r})(\mathrm{g}(ab,x))&=cab,\\
        \mathrm{eval}(I,\mathrm{r})(abx)&=abc,\\
        \mathrm{eval}(I,\mathrm{r})(\mathrm{g}(a,bx))&=bca,\\
        \mathrm{eval}(I,\mathrm{r})(\cdot(ab,x))&=\mathrm{eval}(I,\mathrm{r})(abx)\\
        &=\mathrm{eval}(I,\mathrm{r})(\cdot(a,bx)=abc,
        \end{align*}
        Consequently, $\mathrm{eval}(I,\mathrm{r})$ is an injection of $\mathrm{Term}(\phi_1)$ in $\Sigma^*$ but it is not an injection of $\mathrm{Term}(\phi_2)$ in $\Sigma^*$. Furthermore, it holds that
        \begin{align*}
        \mathrm{eval}(I,\mathrm{r})(\phi_1) & =\mathrm{eval}(I,\mathrm{r})(\lessdot(cab,abc)\wedge \neg (\sim(abc,bca)))\\
      & =\mathrm{eval}(I,\mathrm{r})(1\wedge \neg (1))\\
      & =0\\
      \mathrm{eval}(I,\mathrm{r})(\phi_2) & =\mathrm{eval}(I,\mathrm{r})(\lessdot(abc,abc)\wedge \neg (\lessdot(abc,abc)))\\
      & =0\\
        \end{align*}
        Notice that $\phi_2$ is what can we call an \emph{expression contradiction}, since it is a contradiction whenever the function $\cdot$ is interpreted as the catenation function, because of its associativity property. Consequently, for any expression interpretation $\mathrm{I}'$ and any realization $\mathrm{r}'$, $\mathrm{eval}_{\mathrm{I}',\mathrm{r}'}(\phi_2)=0$.
  
  \noindent It is not the case for $\phi_1$, since there exists an injection of its terms in $\Sigma^*$. Let us show that $\phi_1$ is satisfiable.
  
  \noindent First, we need to compute the formula $T(\phi_1)=\lessdot_{\mathrm{g}(ab,x),abx}$ $\wedge \neg (\sim_{abx,\mathrm{g}(a,bx)})$ associated with $\phi_1$. It contains two predicate symbols, $\lessdot_{\mathrm{g}(ab,x),abx}$ and $\sim_{abx,\mathrm{g}(a,bx)}$. Then, let us consider an interpretation $\mathrm{I}'=(\Sigma^*,\mathfrak{F}')$ such that $\mathrm{eval}_{\mathrm{I}'}(\lessdot_{\mathrm{g}(ab,x),abx})=1$ and $\mathrm{eval}_{\mathrm{I}'}(\sim_{abx,\mathrm{g}(a,bx)})=0$. Consequently $\mathrm{eval}_{\mathrm{I}'}(T(\phi_1))=1$. From this interpretation, we can construct the interpretation $\mathrm{I}''=(\Sigma^*,\mathfrak{F}'')$ defined by:
    \begin{itemize}
      \item $\mathfrak{F}''(\alpha)=\{\alpha\}$, for any $\alpha$ in $\mathcal{F}_0$, 
      \item $\mathfrak{F}''(\mathrm{g})=\{(u,v,vu)\}$,
      \item $\mathfrak{F}''(\cdot)=\{(u,v,u\cdot v)\}$,      
      \item $\mathfrak{F}''(\lessdot)=\{(cab,abc)\}$,
      \item $\mathfrak{F}''(\sim)=\emptyset$.
    \end{itemize} 
    Then:
        \begin{align*}
        \mathrm{eval}_{\mathrm{I}'',\mathrm{r}}(\phi_1) & =\mathrm{eval}_{\mathrm{I}'',\mathrm{r}}(\lessdot(\mathrm{g}(ab,x),abx)\wedge \neg (\sim(abx,\mathrm{g}(a,bx))))\\
        & =\mathrm{eval}_{\mathrm{I}'',\mathrm{r}}(\lessdot(cab,abc)\wedge \neg (\sim(abc,cab)))\\
        & =\mathrm{eval}_{\mathrm{I}'',\mathrm{r}}(1\wedge \neg (0))\\
        & =1\\
        \end{align*}
        The existence of the injection allowed us to show that $\phi_1$ was satisfiable \emph{via} the satisfiability of its propositionalised form. Notice that $T(\phi_2)=\lessdot_{\cdot(ab,x),abx}\wedge \neg (\lessdot_{abx,\cdot(a,bx)})$ is satisfiable too, since for any interpretation $\mathrm{I}$ satisfying $\mathrm{eval}_{\mathrm{I}}(\lessdot_{\cdot(ab,x),abx})=1$ and $\mathrm{eval}_{\mathrm{I}}(\lessdot_{abx,\cdot(a,bx)})=0$, $\mathrm{eval}_{\mathrm{I}}(T(\phi_2))=1$. However, since there is no injection due to the associativity of $\cdot$ in any expression interpretation, the satisfiability of $T(\phi_2)$ does not allow us to conclude about the satisfiability of the formula $\phi_2$ (see the notion of normalization in the next subsection).  
  \qed
  \end{example}

  \subsection{Injections for non-Unary Alphabets \emph{via} the Normalization}\label{ssec normalization alph non unaire}
  
  In this subsection, we show that any formula can be transformed into an equivalent one where the set of terms can be evaluated by an injection. In fact, we compute a normal form that takes into account the associativity of the catenation and the identity element $\varepsilon$. Notice that we do not consider unary alphabets where the catenation is also commutative.
  
  \begin{definition}[Normalized Term]
    Let $\mathcal{E}=(\Sigma,\Gamma,P,F)$ be an expression environment. Let $t$ be a term in $F(\Gamma)$. The term $t$ is said to be \emph{normalized} if the  following two conditions are satisfied:
    \begin{itemize}
      \item any child of a concatenation node is not equal to $\varepsilon$;
      \item the root of the left child of any concatenation node in $t$ is not a concatenation node. 
    \end{itemize}
  \end{definition}   
  
  \begin{definition}[Normalization]
    Let $\mathcal{E}=(\Sigma,\Gamma,P,F)$ be an expression environment. The \emph{normalization} of a term $t$ in $F(\Gamma)$ is the transformation $'$ inductively defined as follows:
        \begin{align*}
          x'&=x\\
          (f(t_1,\ldots,t_k))'&=f(t'_1,\ldots,t'_k)\\
          (t_1\cdot t_2)'&=
        \begin{cases}
            (t_2)' & \text{ if } t_1=\varepsilon,\\
            (t_1)' & \text{ if } t_2=\varepsilon,\\
            (t_1)'\cdot(t_2)' & \text{ if } (t_1=f(r_1,\ldots,r_k)\ \vee t_1=x)\ \wedge t_2\neq\varepsilon,\\
            (t_3\cdot(t_4\cdot t_2))' & \text{ if } t_1=(t_3\cdot t_4)\ \wedge t_2\neq\varepsilon,\\
          \end{cases}
        \end{align*}
        where $x$ is any symbol in $\Gamma$, $f$ is any symbol in $F_k\setminus\{\cdot\}$ and $t_1$, $\ldots$, $t_k$ are any $k$ terms in $F(\Gamma)$.   
  \end{definition}
  
  \begin{definition}[Left-Dot Level]
    Let $\mathcal{E}=(\Sigma,\Gamma,P,F)$ be an expression environment. Let $t$ be a term in $F(\Gamma)$. The \emph{left-dot level} $\mathrm{ldl}(t)$ is the integer inductively computed as follows: 
        \begin{align*}
        \mathrm{ldl}(t)&=
        \begin{cases}
            0 & \text{ if }t=x\in\Gamma,\\
            0 & \text{ if } t=f(t_1,\ldots,t_k)\ \wedge f\in F_k\setminus\{\cdot\},\\
            1+\mathrm{ldl}(t_1) & \text{ if } t=t_1\cdot t_2,\\
          \end{cases}
        \end{align*}
  \end{definition}

  \begin{proposition}\label{prop t prim norm}
    Let $\mathcal{E}=(\Sigma,\Gamma,P,F)$ be an expression environment. Let $t$ be a term in $F(\Gamma)$. Then:
        \begin{align*}
        \text{$t'$ is a normalized term.} 
        \end{align*}   
    
    Furthermore, whenever $t$ is a normalized term, then $t=t'$.
  \end{proposition}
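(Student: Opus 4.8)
The plan is to prove both claims by well-founded induction, the crux being that the normalization is \emph{not} a plain structural recursion: in its last clause the call $(t_3\cdot(t_4\cdot t_2))'$ is made on a re-associated term that has the \emph{same} number of symbols as $(t_3\cdot t_4)\cdot t_2$. Hence structural induction alone does not terminate, and I would instead induct on the lexicographic measure $\mu(t)=(|t|,\mathrm{ldl}(t))$, where $|t|$ is the number of nodes of $t$. The first three clauses and the $f(t_1,\ldots,t_k)$-clause recurse on strict subterms, so they strictly decrease $|t|$; the re-association clause keeps $|t|$ fixed but strictly decreases the left-dot level, since
\begin{align*}
\mathrm{ldl}(t_3\cdot(t_4\cdot t_2))=1+\mathrm{ldl}(t_3)<2+\mathrm{ldl}(t_3)=\mathrm{ldl}((t_3\cdot t_4)\cdot t_2).
\end{align*}
This makes $\mu$ strictly decreasing along every recursive call, so $'$ is total and the induction is legitimate.

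For the first claim I would then check, case by case, that the output satisfies the two conditions defining a normalized term. The base case $t=x$ and the clause $t=f(t_1,\ldots,t_k)$ with $f\neq\cdot$ are immediate from the induction hypothesis, since the concatenation nodes of $f(t_1',\ldots,t_k')$ are exactly those of the $t_i'$. The two $\varepsilon$-clauses return the normal form of a strict subterm, which is normalized by hypothesis. The delicate clause is $t_1'\cdot t_2'$ (the case $t_1=x$ or $t_1=f(\ldots)$ with $f\neq\cdot$, and $t_2\neq\varepsilon$): here I must verify that the newly created root concatenation node is well-formed. Its left subterm $t_1'$ has a non-concatenation root because $t_1$ is a variable or a non-$\cdot$ function symbol and normalization preserves that root, and by the induction hypothesis both $t_1'$ and $t_2'$ are internally normalized. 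The remaining point is that neither child equals $\varepsilon$; for this I would carry, as part of the same induction, the auxiliary fact that $s'=\varepsilon$ holds exactly when $s$ reduces to $\varepsilon$ (has no genuine, non-$\varepsilon$ leaf), which lets me discharge the ``child $\neq\varepsilon$'' requirement and absorb any residual $\varepsilon$ through the two $\varepsilon$-clauses.

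The second claim — that $t=t'$ whenever $t$ is already normalized — I would prove by ordinary structural induction, using first the observation that every subterm of a normalized term is again normalized (the two conditions are stated for \emph{all} concatenation nodes occurring in the term). If $t=x$ or $t=f(t_1,\ldots,t_k)$ with $f\neq\cdot$, the induction hypothesis gives $t_i'=t_i$ and hence $t'=t$. If $t=t_1\cdot t_2$ is a concatenation node, then its normalization can trigger neither the $\varepsilon$-clauses (both children are non-$\varepsilon$ by the first condition) nor the re-association clause (the left child $t_1$ has a non-concatenation root by the second condition), so only the clause $t_1'\cdot t_2'$ applies, and the induction hypothesis on the normalized subterms $t_1,t_2$ yields $t'=t_1\cdot t_2=t$.

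I expect the main obstacle to be the termination and well-foundedness of the recursion through the re-association clause: one must pin down the right measure and verify it strictly decreases, and then keep the $\varepsilon$-collapse under control so that the flattened, right-leaning output never leaves an $\varepsilon$ as a child of a concatenation node. Everything else reduces to a routine case inspection against the two defining conditions.
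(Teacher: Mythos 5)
Your proposal is correct and takes essentially the same route as the paper: the paper also proceeds by a case-by-case induction and handles the re-association clause $(t_3\cdot(t_4\cdot t_2))'$ by a nested recurrence on the left-dot level (observing $\mathrm{ldl}$ drops by one while the size is unchanged), which is exactly your lexicographic measure $(|t|,\mathrm{ldl}(t))$ in different packaging, and your separate structural induction for the idempotence claim mirrors the paper's joint treatment. The only divergence is the $\varepsilon$-corner you flag: the paper's proof silently asserts that $t_1'\cdot t_2'$ is normalized without checking $t_2'\neq\varepsilon$, and your auxiliary invariant would in fact expose that this can fail (e.g.\ $t_2=\varepsilon\cdot\varepsilon$ gives $t_2'=\varepsilon$, and the definition's $\varepsilon$-clauses, which test $t_1,t_2$ syntactically rather than their normal forms, cannot absorb it), so on this point you are more scrupulous than the paper, even though your proposed discharge, like the paper's omission, does not fully close this degenerate case.
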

  \begin{proof}
    By induction over the structure of $t'$.
    \begin{enumerate}
      \item If $t=x\in\Gamma$, then $t$ is normalized, $x=x'$ and then $t=t'$.
    
      \item If $t=f(t_1,\ldots,t_k)$ with $f$ any symbol in $F_k\setminus\{\cdot\}$, by induction hypothesis it holds that for any integer $j$ in $\{1,\ldots,k\}$, $t'_j$ is normalized and if $t_j$ is normalized, then $t_j=t'_j$. As a direct consequence, $t'$ is normalized and if $t$ is normalized, since it implies that for any integer $j$ in $\{1,\ldots,k\}$, $t_j$ is normalized, then $t=t'$.
    
      \item Suppose that $t=t_1\cdot t_2$. 
      \begin{enumerate}
        \item If $t_1=\varepsilon$ (resp. $t_2=\varepsilon$), then $t'=t'_2$ (resp. $t'=t'_1$). By induction hypothesis it holds that $t'_2$ (resp. $t'_1$) is normalized. As a consequence, $t'$ is normalized. Notice that in this case, $t$ is not normalized.
        \item Suppose that $t_1=x$ with $x\in\Gamma$. Hence, $t'=x\cdot t'_2$. According to induction hypothesis, $t'_2$ is normalized and if $t_2$ is normalized, then $t_2=t'_2$. Since $x'=x$, then $t'=x\cdot t'_2$ is normalized and if $t=x\cdot t_2$ is normalized, then $t'=t$.
        \item Suppose that $t_1=f(r_1,\ldots,r_k)$ with $f$ any symbol in $F_k$ and that $t_2\neq\varepsilon$. By recurrence over $\mathrm{ldl}(t_1)$. 
        \begin{enumerate}
          \item If $\mathrm{ldl}(t_1)=0$, then $t_1=f(r_1,\ldots,r_k)$ with $f\neq\{\cdot\}$. Hence $t'=(t_1)'\cdot(t_2)'$. According to induction hypothesis, for any integer $j$ in $\{1,2\}$, $t'_j$ is normalized and if $t_j$ is normalized, then $t_j=t'_j$. Since $t'_1=f(r'_1,\ldots,r'_k)$, then $t'$ is normalized (since the left child of its concatenation root is not a concatenation node). Furthermore, if $t$ is normalized, since it implies that both $t_1$ and $t_2$ are normalized and that $t'_1=t_1$ and $t'_2=t_2$, it holds that $t=t'$. 
          \item Suppose that $\mathrm{ldl}(t_1)=m$ with $m>0$. Then $t_1=(t_3\cdot t_4)$. As a consequence, $t'=(t_3\cdot(t_4\cdot t_2))'$. Let us notice that $\mathrm{ldl}(t')=\mathrm{ldl}(t)-1$. According to recurrence hypothesis, $(t_3\cdot(t_4\cdot t_2))'$ is normalized. Notice that in this case, $t$ is not normalized.
        \end{enumerate}
      \end{enumerate}
    \end{enumerate}
    \qed
  \end{proof}
  
  Let us show now that the normalization preserves the evaluation.
  
  \begin{proposition}\label{prop t tprim mem eval}
    Let $\mathcal{E}=(\Sigma,\Gamma,P,F)$ be an expression environment. Let $t$ be a term in $F(\Gamma)$. Let $I$ be an interpretation in $\mathrm{Int}(\mathcal{E})$ and $\mathrm{r}$ be a realization in $\mathrm{Real}_\Gamma(I)$. Then:
        \begin{align*}
      \mathrm{eval}_{(I,\mathrm{r})}(t)=\mathrm{eval}_{(I,\mathrm{r})}(t').
        \end{align*}
  \end{proposition}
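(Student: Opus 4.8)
The plan is to argue by induction on the structure of $t$, mirroring exactly the case analysis used in the definition of the normalization and in Proposition~\ref{prop t prim norm}, and to exploit the two algebraic properties that an expression interpretation forces on the concatenation symbol. Since $I$ is an expression interpretation we have $\mathfrak{F}(\cdot)=\{(u,v,w)\in(\Sigma^*)^3\mid w=uv\}$ and $\mathfrak{F}(\varepsilon)=\{\varepsilon\}$, so that $\mathrm{eval}_{(I,\mathrm{r})}(t_1\cdot t_2)=\mathrm{eval}_{(I,\mathrm{r})}(t_1)\cdot\mathrm{eval}_{(I,\mathrm{r})}(t_2)$ in $\Sigma^*$, with $\varepsilon$ neutral and, since $\Sigma^*$ is the free monoid, catenation associative. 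These are the only semantic facts required; everything else is bookkeeping that follows the recursive clauses of $'$. For readability I omit the subscript $(I,\mathrm{r})$ and simply write $\mathrm{eval}$ below.

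First I would dispatch the two non-concatenation cases. If $t=x\in\Gamma$, then $t'=x=t$ and there is nothing to prove. If $t=f(t_1,\dots,t_k)$ with $f$ any symbol in $F_k\setminus\{\cdot\}$, then $t'=f(t'_1,\dots,t'_k)$; by the induction hypothesis $\mathrm{eval}(t_j)=\mathrm{eval}(t'_j)$ for each $j$, and since $\mathfrak{F}(f)$ is functional in its first $k$ arguments (Definition~\ref{def interpretation}) the unique element $x_{k+1}$ with $(\mathrm{eval}(t_1),\dots,\mathrm{eval}(t_k),x_{k+1})\in\mathfrak{F}(f)$ is unchanged, giving $\mathrm{eval}(t)=\mathrm{eval}(t')$.

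The heart of the argument is the concatenation case $t=t_1\cdot t_2$, which splits into the same four subcases as the definition of $'$. When $t_1=\varepsilon$ we have $t'=t_2'$ and $\mathrm{eval}(t)=\varepsilon\cdot\mathrm{eval}(t_2)=\mathrm{eval}(t_2)=\mathrm{eval}(t'_2)=\mathrm{eval}(t')$ by neutrality of $\varepsilon$ and the induction hypothesis; the case $t_2=\varepsilon$ is symmetric. When $t_1$ is a variable or a non-dot function application and $t_2\neq\varepsilon$, then $t'=t'_1\cdot t'_2$ and the claim follows directly from the induction hypothesis applied to $t_1$ and $t_2$ together with the multiplicativity of $\mathrm{eval}$. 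The only delicate subcase is $t_1=t_3\cdot t_4$ with $t_2\neq\varepsilon$, where $t'=(t_3\cdot(t_4\cdot t_2))'$: here I would first rewrite $\mathrm{eval}(t)=(\mathrm{eval}(t_3)\cdot\mathrm{eval}(t_4))\cdot\mathrm{eval}(t_2)$ and invoke associativity of catenation in $\Sigma^*$ to obtain $\mathrm{eval}(t_3)\cdot(\mathrm{eval}(t_4)\cdot\mathrm{eval}(t_2))=\mathrm{eval}(t_3\cdot(t_4\cdot t_2))$.

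The main obstacle is precisely this last subcase: the term $t_3\cdot(t_4\cdot t_2)$ fed back into $'$ is not a structural subterm of $t$, so plain structural induction does not close the argument. I would resolve this exactly as in the proof of Proposition~\ref{prop t prim norm}, by nesting a recurrence on the left-dot level $\mathrm{ldl}(t_1)$ inside the structural induction. Since $\mathrm{ldl}(t_3\cdot(t_4\cdot t_2))=\mathrm{ldl}(t)-1$, the recurrence hypothesis applies to $t_3\cdot(t_4\cdot t_2)$ and yields $\mathrm{eval}(t_3\cdot(t_4\cdot t_2))=\mathrm{eval}((t_3\cdot(t_4\cdot t_2))')=\mathrm{eval}(t')$. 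Chaining this with the associativity rewriting above gives $\mathrm{eval}(t)=\mathrm{eval}(t')$ and completes the induction.
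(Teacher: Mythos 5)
Your proof is correct and follows essentially the same route as the paper's: structural induction on $t$, with the variable and non-dot function cases handled via functionality of $\mathfrak{F}(f)$, the $\varepsilon$ and simple concatenation cases by neutrality and multiplicativity of $\mathrm{eval}_{(I,\mathrm{r})}$, and the left-nested concatenation case closed by a nested recurrence on $\mathrm{ldl}(t_1)$ combined with associativity of catenation in $\Sigma^*$. Your write-up is in fact cleaner than the paper's in that last subcase (where the paper's displayed identity contains typos), and merging the variable and $\mathrm{ldl}=0$ subcases is an inessential reorganization.
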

  \begin{proof}
    By induction over the structure of $t'$.
    \begin{enumerate}
      \item If $t=x\in\Gamma$, then $t'=x=t$. Hence $\mathrm{eval}_{(I,\mathrm{r})}(t)=\mathrm{eval}_{(I,\mathrm{r})}(t')$.    
      \item If $t=f(t_1,\ldots,t_k)$ with $f$ any symbol in $F_k\setminus\{\cdot\}$, then $t'=f(t'_1,\ldots,t'_k)$. By induction hypothesis, it holds that for any integer $j$ in $\{1,\ldots,k\}$, $\mathrm{eval}_{(I,\mathrm{r})}(t_j)=\mathrm{eval}_{(I,\mathrm{r})}(t'_j)$. Hence:
        \begin{align*}
          \mathrm{eval}_{(I,\mathrm{r})}(t)& =x_{k+1} & \text{ with }(\mathrm{eval}_{(I,\mathrm{r})}(t_1),\ldots,\mathrm{eval}_{(I,\mathrm{r})}(t_k),x_{k+1})\in\mathfrak{F}(f)\\
          \mathrm{eval}_{(I,\mathrm{r})}(t')&=x'_{k+1} & \text{ with }(\mathrm{eval}_{(I,\mathrm{r})}(t'_1),\ldots,\mathrm{eval}_{(I,\mathrm{r})}(t'_k),x_{k+1})\in\mathfrak{F}(f)
        \end{align*}    
    Finally, using Definition~\ref{def interpretation}, it holds that $x_{k+1}=x'_{k+1}$.
    
    \item Suppose that $t=t_1\cdot t_2$.     
    \begin{enumerate}
      \item If $t_1=\varepsilon$ (resp. $t_2=\varepsilon$), then $t'=t'_2$ (resp. $t'=t'_1$). By induction hypothesis, $\mathrm{eval}_{(I,\mathrm{r})}(t_2)=\mathrm{eval}_{(I,\mathrm{r})}(t'_2)$ (resp. $\mathrm{eval}_{(I,\mathrm{r})}(t_1)=\mathrm{eval}_{(I,\mathrm{r})}(t'_1)$). Then:
        \begin{align*}
          \mathrm{eval}_{(I,\mathrm{r})}(t) &=\mathrm{eval}_{(I,\mathrm{r})}(t_1)\cdot \mathrm{eval}_{(I,\mathrm{r})}(t_2)\\
          & =\varepsilon \cdot \mathrm{eval}_{(I,\mathrm{r})}(t'_2)\\
          \text{ (resp.} & =\mathrm{eval}_{(I,\mathrm{r})}(t'_1) \cdot \varepsilon)\\
          &= \mathrm{eval}_{(I,\mathrm{r})}(t')
        \end{align*}     
    \item  Suppose that $t_1=x$ with $x$ in $\Gamma$. Then $t'=x\cdot t'_2$. By induction hypothesis, $\mathrm{eval}_{(I,\mathrm{r})}(t_2)=\mathrm{eval}_{(I,\mathrm{r})}(t'_2)$. Then:
        \begin{align*}
          \mathrm{eval}_{(I,\mathrm{r})}(t)&=\mathrm{eval}_{(I,\mathrm{r})}(x)\cdot \mathrm{eval}_{(I,\mathrm{r})}(t_2)\\
          &=\mathrm{r}(x) \cdot \mathrm{eval}_{(I,\mathrm{r})}(t'_2)\\
          & =\mathrm{eval}_{(I,\mathrm{r})}(x'\cdot t'_2)\\
          &=\mathrm{eval}_{(I,\mathrm{r})}(t')
        \end{align*}    
    \item  Suppose that $t_1=f(r_1,\ldots,r_k)$ with $f$ any symbol in $F_k$ and that $t_2\neq\varepsilon$. By recurrence over $\mathrm{ldl}(t_1)$. 
    \begin{enumerate}
      \item  If $\mathrm{ldl}(t_1)=0$, then $t_1=f(r_1,\ldots,r_k)$ with $f\neq\{\cdot\}$. Hence $t'=(t_1)'\cdot(t_2)'$.
     According to induction hypothesis, for any integer $j$ in $\{1,2\}$, $\mathrm{eval}_{(I,\mathrm{r})}(t_j)=\mathrm{eval}_{(I,\mathrm{r})}(t'_j)$.
     As a consequence, 
        \begin{align*}
          \mathrm{eval}_{(I,\mathrm{r})}(t) &=\mathrm{eval}_{(I,\mathrm{r})}(t_1)\cdot \mathrm{eval}_{(I,\mathrm{r})}(t_2)\\
          & =\mathrm{eval}_{(I,\mathrm{r})}(t'_1) \cdot \mathrm{eval}_{(I,\mathrm{r})}(t'_2)\\
          & = \mathrm{eval}_{(I,\mathrm{r})}(t')
        \end{align*}      
    \item Suppose that $\mathrm{ldl}(t_1)=m$ with $m>0$. Then $t_1=(t_3\cdot t_4)$. As a consequence, $t'=(t_3\cdot(t_4\cdot t_2))'$. Notice that $\mathrm{ldl}(t')=\mathrm{ldl}(t)-1$. According to recurrence hypothesis,
    $\mathrm{eval}_{(I,\mathrm{r})}(t_3\cdot(t_4\cdot t_2))=\mathrm{eval}_{(I,\mathrm{r})}(t_1)\cdot \mathrm{eval}_{(I,\mathrm{r})}((t_3\cdot(t_4\cdot t_2))')$. Hence
    $\mathrm{eval}_{(I,\mathrm{r})}(t)=\mathrm{eval}_{(I,\mathrm{r})}(t_1)\cdot \mathrm{eval}_{(I,\mathrm{r})}(t')$.
    \end{enumerate}
    \end{enumerate}
    \end{enumerate}    
    \qed
  \end{proof}
  
  Let $\mathcal{E}=(\Sigma,\Gamma,P,F)$ be an expression environment. We denote by $F(\Gamma)'$ the set of normalized terms in $F(\Gamma)$.   
  
  \begin{definition}[Normalized Formula]
    Let $\mathcal{E}=(\Sigma,\Gamma,P,F)$ be an expression environment. Let $\phi$ be a boolean formula in $P(F(\Gamma))$. The formula $\phi$ is said to be \emph{normalized} if any of the terms appearing in it are normalized (\emph{i.e.} if it belongs to $P(F(\Gamma)')$).
  \end{definition}

  \begin{definition}[Formula Normalization]
    Let $\mathcal{E}=(\Sigma,\Gamma,P,F)$ be an expression environment. The \emph{normalization} of a boolean formula $\phi$ in $P(F(\Gamma))$ is the transformation $'$ inductively defined as follows:
        \begin{align*}
        (o(\phi_1,\ldots,\phi_k))'&=o(\phi'_1,\ldots,\phi'_k)\\
        (P(t_1,\ldots,t_k))'&=P(t'_1,\ldots,t'_k)
        \end{align*}
        where $k$ is any integer, $P$ is any predicate symbol in $\mathcal{P}_k$, $t_1,\ldots,t_k$ are any $k$ elements in $\mathcal{F}(X)$, $\mathrm{o}$ is any $k$-ary boolean operator associated with a mapping $\mathrm{o}'$ from $\{0,1\}^k$ to $\{0,1\}$ and $\phi_1,\ldots,\phi_k$ are any $k$ boolean formulae over $(\mathcal{P},X)$.
  \end{definition}
  
  \begin{proposition}
    Let $\mathcal{E}=(\Sigma,\Gamma,P,F)$ be an expression environment. Let $\phi$ be a boolean formula in $P(F(\Gamma))$. Then:
        \begin{align*}
          \text{$\phi'$ is a normalized formula.}   
        \end{align*} 
  \end{proposition}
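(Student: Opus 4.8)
The plan is to proceed by a straightforward structural induction on $\phi$, leaning entirely on Proposition~\ref{prop t prim norm}, which already guarantees that the normalization $t'$ of any term $t$ is a normalized term. The key observation is that the definition of formula normalization leaves the boolean-operator structure of $\phi$ untouched and acts only on the terms occurring at the predicate atoms; consequently, once I verify that every term appearing in $\phi'$ is the normalization of some term appearing in $\phi$, the claim follows immediately. Recall that $\phi$ is by definition normalized exactly when every term in $\mathrm{Term}(\phi)$ is a normalized term, i.e. when $\phi \in P(F(\Gamma)')$.

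For the base case I would take $\phi = P(t_1,\ldots,t_k)$, with $P$ a $k$-ary predicate symbol in $P_k$ and $t_1,\ldots,t_k$ terms in $F(\Gamma)$. By definition of the formula normalization, $\phi' = P(t'_1,\ldots,t'_k)$, so that $\mathrm{Term}(\phi') = \{t'_1,\ldots,t'_k\}$. By Proposition~\ref{prop t prim norm}, each $t'_j$ is a normalized term, and hence every element of $\mathrm{Term}(\phi')$ is normalized. Thus $\phi' \in P(F(\Gamma)')$, that is, $\phi'$ is a normalized formula.

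For the inductive step I would take $\phi = o(\phi_1,\ldots,\phi_k)$, where $o$ is any $k$-ary boolean operator and $\phi_1,\ldots,\phi_k$ are boolean formulae in $P(F(\Gamma))$. By definition, $\phi' = o(\phi'_1,\ldots,\phi'_k)$, and therefore $\mathrm{Term}(\phi') = \mathrm{Term}(\phi'_1) \cup \cdots \cup \mathrm{Term}(\phi'_k)$. The induction hypothesis asserts that each $\phi'_j$ is a normalized formula, so every term in each $\mathrm{Term}(\phi'_j)$ is normalized; hence every term in the union $\mathrm{Term}(\phi')$ is normalized, and $\phi'$ is a normalized formula.

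I do not anticipate a genuine obstacle here: all the substantive work has been discharged in Proposition~\ref{prop t prim norm} at the level of terms, and the remaining argument is purely the bookkeeping needed to confirm that formula normalization commutes with the propositional structure and dispatches each predicate atom to the term-level result. The only point meriting explicit mention is that $\mathrm{Term}(\phi')$ is built from the normalizations of the terms of the atoms, which is exactly what the inductive definitions of both normalization and $\mathrm{Term}$ make transparent.
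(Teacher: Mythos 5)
Your proof is correct and follows exactly the same route as the paper: the paper's own proof is the one-line remark that the claim follows ``by induction over the structure of $\phi$'' as a direct corollary of Proposition~\ref{prop t prim norm}, which is precisely the argument you have written out in full. Nothing is missing; you have simply made explicit the base case at predicate atoms and the trivial propagation through boolean operators that the paper leaves implicit.
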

  \begin{proof}
    By induction over the structure of $\phi$, this is a direct corollary of Proposition~\ref{prop t prim norm} as an inductive extension of the normalization.
    \qed
  \end{proof}
  
  \begin{proposition}\label{prop normalization preserv eval}
    Let $\mathcal{E}=(\Sigma,\Gamma,P,F)$ be an expression environment. Let $\phi$ be a boolean formula in $P(F(\Gamma))$. Let $I$ be an interpretation in $\mathrm{Int}(\mathcal{E})$ and $\mathrm{r}$ be a realization in $\mathrm{Real}_\Gamma(I)$. Then:
        \begin{align*}
          \mathrm{eval}_{(I,\mathrm{r})}(\phi)=\mathrm{eval}_{(I,\mathrm{r})}(\phi')
        \end{align*}
  \end{proposition}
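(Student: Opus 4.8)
The plan is to proceed by induction over the structure of $\phi$, lifting the term-level statement of Proposition~\ref{prop t tprim mem eval} through the two syntactic constructors of a boolean formula. Since the formula normalization only rewrites the terms occurring inside atomic predicates while leaving the boolean operator structure untouched, essentially all the content of the proof is already contained in Proposition~\ref{prop t tprim mem eval}; the present statement is its inductive extension from terms to formulae.

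For the base case I would take $\phi=P(t_1,\ldots,t_k)$, so that by definition $\phi'=P(t'_1,\ldots,t'_k)$. By Proposition~\ref{prop t tprim mem eval}, for each $j$ in $\{1,\ldots,k\}$ it holds that $\mathrm{eval}_{(I,\mathrm{r})}(t_j)=\mathrm{eval}_{(I,\mathrm{r})}(t'_j)$, hence the two argument tuples coincide:
\begin{align*}
  (\mathrm{eval}_{(I,\mathrm{r})}(t_1),\ldots,\mathrm{eval}_{(I,\mathrm{r})}(t_k))=(\mathrm{eval}_{(I,\mathrm{r})}(t'_1),\ldots,\mathrm{eval}_{(I,\mathrm{r})}(t'_k)).
\end{align*}
Applying the Formula Evaluation definition to both sides, membership of this common tuple in $\mathfrak{F}(P)$ is literally the same condition for $\phi$ and for $\phi'$, so $\mathrm{eval}_{(I,\mathrm{r})}(\phi)=\mathrm{eval}_{(I,\mathrm{r})}(\phi')$.

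For the inductive step, suppose $\phi=\mathrm{o}(\phi_1,\ldots,\phi_k)$, so that $\phi'=\mathrm{o}(\phi'_1,\ldots,\phi'_k)$. Using the Formula Evaluation definition, then the induction hypothesis applied to each $\phi_j$, and finally the same definition in reverse, I would chain
\begin{align*}
  \mathrm{eval}_{(I,\mathrm{r})}(\mathrm{o}(\phi_1,\ldots,\phi_k)) &=\mathrm{o}'(\mathrm{eval}_{(I,\mathrm{r})}(\phi_1),\ldots,\mathrm{eval}_{(I,\mathrm{r})}(\phi_k))\\
  &=\mathrm{o}'(\mathrm{eval}_{(I,\mathrm{r})}(\phi'_1),\ldots,\mathrm{eval}_{(I,\mathrm{r})}(\phi'_k))\\
  &=\mathrm{eval}_{(I,\mathrm{r})}(\mathrm{o}(\phi'_1,\ldots,\phi'_k))=\mathrm{eval}_{(I,\mathrm{r})}(\phi'),
\end{align*}
which closes the induction. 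I expect no genuine obstacle: the only delicate reasoning, namely the handling of the associativity of $\cdot$ and of the $\varepsilon$-deletions that define term normalization, has already been discharged once and for all in Proposition~\ref{prop t tprim mem eval}, so what remains is the routine propagation of an equality of evaluations through a boolean operator, which is immediate from $\mathrm{o}'$ being a fixed function of its boolean arguments.
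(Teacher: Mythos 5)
Your proof is correct and follows exactly the route the paper takes: the paper's own proof is the one-line remark that the statement is ``a direct corollary of Proposition~\ref{prop t tprim mem eval} as an inductive extension of the normalization,'' and your argument simply writes out that structural induction in full (term-level equality for the atomic case, propagation through $\mathrm{o}'$ for the operator case). No gaps; you have just made explicit what the paper leaves implicit.
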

  \begin{proof}
    By induction over the structure of $\phi$, this is a direct corollary of Proposition~\ref{prop t tprim mem eval} as an inductive extension of the normalization.
    \qed
  \end{proof}
  
  \begin{example}
    Let us consider the formula $\phi_2=\lessdot(\cdot(ab,x),abx)\wedge \neg (\lessdot(abx,\cdot(a,bx)))$ of Example~\ref{ex inj form}. Considering the catenation as right-associative, its normalized form is the formula $\phi_2'=\lessdot(abx,abx)\wedge \neg (\lessdot(abx,abx))$, that is, a classical contradiction.
    \qed
  \end{example}
  
  Let us now show how to compute an injection from a set of normalized terms.
  
  \begin{definition}[Left, Right and Middle Word]
        A word is a \emph{left word} (resp. \emph{right word}, \emph{middle word}) of a term $t$ in $F(\Gamma)$ if it belongs to the set $\mathrm{LeftWord}(t)$ (resp. $\mathrm{RightWord}(t)$, $\mathrm{MiddleWords}(t)$) computed as follows:
        \begin{align*}
          \mathrm{LeftWord}(x)& =\{x\mid x\in\Sigma\cup\{\varepsilon\}\}\\
          \mathrm{LeftWord}(f(t_1,\ldots,t_k))&=\emptyset\\
          \mathrm{LeftWord}(\cdot(t_1,t_2))&=
        \begin{cases}
            \mathrm{LeftWord}(t_1) & \text{ if } t_1\notin (\{\cdot,\varepsilon\}\cup\Sigma)(\emptyset)\\
            & \quad \vee \mathrm{LeftWord}(t_2)=\emptyset,\\
            \mathrm{LeftWord}(t_1)\cdot \mathrm{LeftWord}(t_2) & \text{ otherwise,}
          \end{cases}\\
          \mathrm{RightWord}(x)&=\{x\mid x\in\Sigma\cup\{\varepsilon\}\}\\
          \mathrm{RightWord}(f(t_1,\ldots,t_k))&=\emptyset\\
          \mathrm{RightWord}(\cdot(t_1,t_2))&=
        \begin{cases}
            \mathrm{RightWord}(t_2) & \text{ if } t_2\notin (\{\cdot,\varepsilon\}\cup\Sigma)(\emptyset)\\
            &\quad \vee \mathrm{RightWord}(t_1)=\emptyset,\\
            \mathrm{RightWord}(t_1)\cdot \mathrm{RightWord}(t_2) & \text{ otherwise,}
          \end{cases}\\
          \mathrm{MiddleWords}(x)&=\{x\mid x\in\Sigma\cup\{\varepsilon\}\}\\
          \mathrm{MiddleWords}(f(t_1,\ldots,t_k))&=\bigcup_{j\in\{1,\ldots,k\}} \mathrm{MiddleWords}(t_j)\\
          \mathrm{MiddleWords}(\cdot(t_1,t_2))&=
        \begin{cases}
            \mathrm{RightWord}(t_1)\cdot \mathrm{LeftWord}(t_2) \\
            \quad\text{ if } t_1,t_2\in (\{\cdot,\varepsilon\}\cup\Sigma)(\emptyset)\\
            \mathrm{RightWord}(t_1)\cdot \mathrm{LeftWord}(t_2) \cup \mathrm{MiddleWords}(t_1)\\
            \quad\text{ if } t_1\notin (\{\cdot,\varepsilon\}\cup\Sigma)(\emptyset)\ \wedge\ t_2\in (\{\cdot,\varepsilon\}\cup\Sigma)(\emptyset)\\
            \mathrm{RightWord}(t_1)\cdot \mathrm{LeftWord}(t_2) \cup \mathrm{MiddleWords}(t_2)\\
            \quad\text{ if } t_1\in (\{\cdot,\varepsilon\}\cup\Sigma)(\emptyset)\ \wedge\ t_2\notin (\{\cdot,\varepsilon\}\cup\Sigma)(\emptyset)\\
            \mathrm{RightWord}(t_1)\cdot \mathrm{LeftWord}(t_2) \cup \mathrm{MiddleWords}(t_1)\cup \mathrm{MiddleWords}(t_2)\\
            \quad\text{ if } t_1,t_2\notin (\{\cdot,\varepsilon\}\cup\Sigma)(\emptyset)\\
          \end{cases}
        \end{align*}
        where $x$ is an element in $F_0\cup \Gamma$, $f$ is a function symbol in $F_k\setminus\{\cdot\}$ and $t_1,\ldots,t_k$ are any $k$ terms in $F(\Gamma)$.
    
    A word $u$ is a \emph{factor} of the term $t$ if it is a factor of $v$ where $v\in\mathrm{MiddleWords}(t)$. 
  \end{definition}
  
  \begin{example}\label{ex factor term}
    Let us illustrate the notion of factor:
      \begin{itemize}
         \item The factors of $f(a,g(a,baxc))$ are $\{a,b,ba,c\}$.    
         \item The factors of $\cdot(a,g(a,baxc))$ are $\{a,b,ba,c\}$.
         \item The factors of $f(a,\cdot(a,baxc))$ are $\{a,ab,aba,b,ba,c\}$. 
     \end{itemize}
    \qed
  \end{example}

  \begin{definition}[$\mathrm{root}$ Function]
     Let $\mathrm{root}$ be the function from $F(\Gamma)$ to $F\cup\Gamma$ defined for any term $t$ as follows: 
        \begin{align*}
        \mathrm{root}(t)=
        \begin{cases}
            t & \text{ if } t\in \Gamma\cup F_0,\\
            f & \text{ if }t=f(t_1,\ldots,t_k)\text{ with }f\in F_k.\\
          \end{cases}
        \end{align*}
  \end{definition}
  
  \begin{lemma}\label{lem exist mot separ}
    Let $\mathcal{E}=(\Sigma,\Gamma,P,F)$ be an expression environment such that $\mathrm{Card}(\Sigma)\geq 2$. Let $T$ be a finite subset of $F(\Gamma)'$. Then there exists a word $w$ in $\Sigma^*$ such that for any term $t$ in $F(\Gamma)'$, for any two distinct terms $t_1$ and $t_2$ in $F(\Gamma)'$, it holds:
        \begin{align*}
          ({t_1}_{t\leftarrow \mathrm{Term}(w)})'\neq({t_2}_{t\leftarrow \mathrm{Term}(w)})'
        \end{align*}
  \end{lemma}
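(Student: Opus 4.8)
The plan is to exhibit a single word $w$ that is \emph{fresh} with respect to all the concrete material occurring in the terms of $T$, and then to prove that substituting a subterm by $\mathrm{Term}(w)$ and renormalizing is a reversible operation on normalized terms; reversibility immediately forbids it from identifying two distinct terms.

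First I would collect the finite set $\mathcal{W}$ of all factors of all terms of $T$. Each $t\in T$ yields a finite set $\mathrm{MiddleWords}(t)$ of finite words, hence finitely many factors, and since $T$ is finite, $\mathcal{W}$ is finite; let $N$ be the length of a longest word in $\mathcal{W}$. Here the hypothesis $\Card{\Sigma}\geq 2$ is essential: fixing two distinct letters $a,b\in\Sigma$, I take $w=ab^{N+1}a$, which is unbordered and of length strictly greater than $N$, so that $w\notin\mathcal{W}$. Since $w\neq\varepsilon$, the term $\mathrm{Term}(w)$ is a non-empty, already normalized, right-combed catenation of letters, so the map $\sigma\colon s\mapsto ((s)_{t\leftarrow\mathrm{Term}(w)})'$ never deletes material and introduces the factor $w$ only through the blocks it inserts.

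Next I would establish the key occurrence property: in $\sigma(t_i)$ the word $w$ occurs as a factor exactly at the images of the substituted sites. By Proposition~\ref{prop t prim norm} and Proposition~\ref{prop normalization preserv eval}, normalization only reshapes catenation nodes while preserving the underlying sequence of atoms and the evaluation, so every concrete factor of $\sigma(t_i)$ is either inherited from $t_i$ — hence a member of $\mathcal{W}$, thus distinct from $w$ because $w\notin\mathcal{W}$ and $|w|>N$ — or it meets one of the inserted $\mathrm{Term}(w)$-blocks. An occurrence straddling the boundary between inherited letters and an inserted block would force $w$ to possess a proper border, contradicting its unborderedness; hence the occurrences of $w$ are precisely the inserted blocks and can be located unambiguously. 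This lets me define a left inverse of $\sigma$ that locates each occurrence of $\mathrm{Term}(w)$, replaces the corresponding subterm back by $t$, and renormalizes, recovering $t_i$.

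Finally, since $\sigma$ admits a left inverse on the relevant terms (those of $T$, whose factors therefore all lie in $\mathcal{W}$), it is injective there; hence two distinct such terms $t_1$ and $t_2$ satisfy $\sigma(t_1)\neq\sigma(t_2)$, which is the claim. I expect the main obstacle to be the occurrence property itself: controlling how normalization redistributes catenation nodes and adjacent concrete letters so that no spurious copy of $w$ is created at a junction and the inserted blocks stay identifiable. This is exactly where the length bound $|w|>N$ and the unborderedness of $w$ are used, and I would discharge it by an induction on $\mathrm{ldl}$ following the recursive definitions of $\mathrm{LeftWord}$, $\mathrm{RightWord}$ and $\mathrm{MiddleWords}$, invoking Proposition~\ref{prop t tprim mem eval} to keep evaluations, and thus concrete factors, aligned across the normalization.
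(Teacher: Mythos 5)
Your overall strategy is genuinely different from the paper's. The paper also fixes a marker of the form $ab^{p}a$ with $p$ exceeding every $b$-run occurring as a factor of a term of $T$, but it then proves $({t_1}_{t\leftarrow \mathrm{Term}(w)})'\neq({t_2}_{t\leftarrow \mathrm{Term}(w)})'$ by a direct structural induction: a case analysis on whether $t$ is a subterm of $t_1$, of $t_2$, or both, followed by a comparison of roots and of children, with the freshness of $ab^pa$ invoked only at the few places where the two normalized terms could otherwise coincide. You instead prove injectivity of $\sigma\colon s\mapsto ((s)_{t\leftarrow\mathrm{Term}(w)})'$ on $T$ by building a left inverse, reducing everything to the combinatorial claim that the occurrences of $w$ in the concrete runs of $\sigma(t_i)$ are exactly the inserted blocks. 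Note that both arguments (yours explicitly, the paper's implicitly) restrict the conclusion to terms whose factors are controlled by $T$; taken literally over all of $F(\Gamma)'$ the statement is false (take $t_1=\mathrm{Term}(w)$ and $t_2=t$ not a subterm of $t_1$: both images equal $\mathrm{Term}(w)$), and your restriction to $T$ is exactly what is used later in Proposition~\ref{prop exist inj cas binaire}.

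There is, however, a concrete error in your key step: $ab^{N+1}a$ is \emph{not} unbordered, since the letter $a$ is both a proper prefix and a proper suffix of it, so the sentence ``a straddling occurrence would contradict unborderedness'' fails as written. The argument can be repaired, because the \emph{only} border of $ab^{N+1}a$ is $a$: two distinct overlapping occurrences of $w$ must overlap in a border, hence in exactly one letter, and then the remaining $N+2$ letters of the spurious occurrence would force a factor $ab^{N+1}$ or $b^{N+1}a$ into the material adjacent to a block, which is ruled out since inherited segments have length at most $N$ and neighbouring blocks begin and end with $a$; but this overlap-one analysis is fiddly and you have not done it. The cleaner fix is to take $w=ab^{N+1}$, which genuinely is unbordered, after which your occurrence property follows immediately (an occurrence either coincides with a block or sits inside an inherited segment of length at most $N<|w|$). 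Finally, be aware that the correctness of the left inverse --- that locating the blocks, substituting $t$ back and renormalizing returns $t_i$ --- is precisely where the paper's long case analysis lives; you defer it to ``an induction on $\mathrm{ldl}$,'' which is plausible but constitutes the bulk of the remaining work rather than a routine verification.
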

  \begin{proof} 
  Let $w=ab^pa$ be such that $p$ is the smallest integer such that any factor $b^q$ of a term of $T$ satisfies $q<p$.
  Let $s_1=({t_1}_{t\leftarrow \mathrm{Term}(w)})$ and $s_2=({t_1}_{t\leftarrow \mathrm{Term}(w)})$.     
  
  If $t$ is neither a subterm of $t_1$ nor of $t_2$, then $t_1=s_1=s'_1$, $t_2=s_2=s'_2$ and thus $s'_1\neq s'_2$.
    
  Let $t$ be a subterm of $t_1$ but not of $t_2$. There exists a factor $ab^{p}a$ of $s'_1$ and any factor $b^q$ of $s'_2$ (that is, a factor of $t_2$) satisfies $q<p$. Then $s'_1\neq s'_2$.
    
  Suppose that $t$ is a subterm of $t_1$ and of $t_2$. 
  
  \begin{enumerate}
    \item Suppose that $t_1=t$. Then $\mathrm{root}(s'_1)=\cdot$ since $s'_1=\mathrm{Term}(ab^pa)$.
    \begin{enumerate}
    \item If $t_2=y\neq t$, then $\mathrm{root}(s'_2)=s'_2=y$. Hence $s'_1\neq s'_2$.
    
    \item\label{1b} Suppose that $t_2=f({t_2}_1,\ldots,{t_2}_k)$ with $F_k\setminus\{\cdot\}$. Then $\mathrm{root}(s'_2)=f$. Hence $s'_1\neq s'_2$.
    
    \item\label{1c} Suppose that $t_2=\cdot({t_2}_1,{t_2}_2)$. Since $t_2$ is normalized, then $\mathrm{root}(t_2)~\neq~\cdot$.
    \begin{enumerate}
    \item If ${t_2}_1=t$, then $({t_2}_{t\leftarrow w})'=\cdot(a,\cdot(b,(\ldots,b,\cdot(a,({{t_2}_2}_{x\leftarrow w})')\ldots)))$. Since ${t_2}_2\neq\varepsilon$, then ${{t_2}_2}_{t\leftarrow w}\neq\varepsilon$ and then $s'_1\neq s'_2$.
    
    \item If ${t_2}_1\neq t$, then $s'_1=\cdot(({t_2}_1)',({t_2}_2)')$.
    \begin{enumerate}
    \item If ${t_2}_1\neq a$ then $s'_1\neq s'_2$.
    
    \item Suppose that ${t_2}_1= a$.  Either $({t_2}_2)'\neq\mathrm{Term}(b^pa)$ and then $s'_1\neq s'_2$ or $({t_2}_2)'=\mathrm{Term}(b^pa)$, and $t_2=\mathrm{Term}(ab^pa)$  (Contradiction with the definition of $p$).
    \end{enumerate}
    \end{enumerate}
    \end{enumerate}
    
    \item  If $f=({t_1}_1,\ldots,{t_1}_k)$ with $F_k\setminus\{\cdot\}$. Then $({t_1}_{t\setminus w})'=f(({{t_1}_1}_{t\leftarrow w})',\ldots,({{t_1}_k}_{t\leftarrow w})')$ and $\mathrm{root}(s'_1)=f$.
    \begin{enumerate}
    \item  Suppose that $t_2=t$. See case~\ref{1b}.
    
    \item Suppose that $t_2=y\in\Gamma\neq t$. Then $\mathrm{root}(s'_2)=s'_2=y$. Hence $s'_1\neq s'_2$.
    
    \item\label{2c} Suppose that $t_2=g({t_2}_1,\ldots,{t_2}_l)$ with $g\in F_l$ and $g\neq f$. Then $\mathrm{root}(s'_2)=g$ and then $s'_1\neq s'_2$.
    
    \item Suppose that $t_2=f({t_2}_1,\ldots,{t_2}_k)$. Then $({t_2}_{t\leftarrow w})'=f(({{t_2}_1}_{t\leftarrow w})',\ldots,$ $({{t_2}_k}_{t\leftarrow w})')$. Since $t_1\neq t_2$, there exists $j$ in $\{1,\ldots,k\}$ such that ${t_1}_j\neq{t_2}_j$. According to induction hypothesis, $({{t_1}_j}_{t\leftarrow w})'\neq ({{t_2}_j}_{t\leftarrow w})'$, it holds that $s'_1\neq s'_2$.
    \end{enumerate}
    \item  Suppose that $t_1=\cdot({t_1}_1,{t_1}_2)$. Then $\mathrm{root}({t_1}_1)\neq \cdot$.
    \begin{enumerate}
    \item If $t_2=t$, see case case~\ref{1c}.
    
    \item If $t_2=y\in\Gamma\neq t$, then $\mathrm{root}(s'_2)=s'_2=y$. Hence $s'_1\neq s'_2$. 
    
    \item If $t_2=f({t_2}_1,\ldots,{t_2}_k)$ with $f\in F_k\setminus\{\cdot\}$, see case~\ref{2c}.
    
    \item Suppose that $t_2=\cdot({t_2}_1,{t_2}_2)$. Consequently $\mathrm{root}({t_2}_1)\neq\cdot$.
    
    \begin{enumerate}
    \item If ${t_1}_1=t$ then $s'_1=\cdot(a,\cdot(b,\cdot(\ldots,\cdot(b,\cdot(a,({{t_1}_2}_{t\leftarrow w})',))\ldots)))$.
    \begin{enumerate}
    \item If ${t_2}_1=t$, then ${t_1}_2\neq {t_2}_2$. According to induction hypothesis, $({{t_1}_2}_{t\leftarrow w})'$ $\neq ({{t_2}_2}_{t\leftarrow w})'$. Since $s'_2=\cdot(a,\cdot(b,\cdot(\ldots,\cdot(b\cdot(a,({{t_2}_2}_{t\leftarrow w})')))\ldots)))$, $s'_1\neq s'_2$.
    
    \item\label{3d1b} If ${t_2}_1=a$, then $s'_2=\cdot (a,({{t_2}_2}_{t\leftarrow w})')$. Either $({{t_2}_2}_{t\leftarrow w})'$ does not admit $b^p$ as a prefix of a left word and then $s'_1\neq s'_2$ or it does and then $t_2$ admits $b^p$ as a factor (contradiction with the definition of $p$).
    
    \item\label{3d1c} If ${t_2}_1=f({{t_2}_1}_1,\ldots, {{t_2}_1}_k)$, then the respective roots of the leftmost subterm of  $s'_1$ and $s'_2$ are distinct. Hence $s'_1\neq s'_2$.
    
    \end{enumerate}
    \item Suppose that ${t_1}_1=a$.
    
    \begin{enumerate}
    \item If ${t_2}_1=t$, see case~\ref{3d1b}.
    
    \item If ${t_2}_1=a$, then since $t_1\neq t_2$, it holds that ${t_{1}}_2\neq {t_2}_2$. By induction hypothesis, $({{t_1}_2}_{t\leftarrow w})'\neq ({{t_2}_2}_{t\leftarrow w})'$. Finally, since $s'_2=\cdot(a,({{t_2}_2}_{t\leftarrow w})')$, $s'_1\neq s'_2$.
    
    \item If ${t_2}_1=f({{t_2}_1}_1,\ldots, {{t_2}_1}_k)$, then $\mathrm{root}(s'_2)=f\neq \cdot=\mathrm{root}(s'_1)$. Hence $s'_1\neq s'_2$.
    \end{enumerate}
    \item Suppose that ${t_1}_1=f({{t_1}_1}_1,\ldots,{{t_1}_1}_k)$. Then $s'_1=\cdot(u_1,u_2)$ with $\mathrm{root}(u_1)=f$.
    \begin{enumerate}
    \item If ${t_2}_1=t$, see case~\ref{3d1c}.
    
    \item If ${t_2}_1=y\in\Gamma\neq t$ or if ${t_2}_1=g({{t_2}_1}_1,\ldots, {{t_2}_1}_l)$ with $g\in F_l$, the respective roots of the leftmost subterm of  $s'_1$ and $s'_2$ are distinct. Hence $s'_1\neq s'_2$.
    
    \item If ${t_2}_1=f({{t_2}_1}_1,\ldots, {{t_2}_1}_k)$, then $s'_2=\cdot(v_1,v_2)$ with $\mathrm{root}(v_1)=f$. 
    Two cases can occur: either ${t_1}_2\neq {t_2}_2$ or there exists $j$ in $\{1,\ldots,k\}$ such that ${{t_1}_1}_j\neq {{t_2}_1}_j$. 
    In the first (resp. second) case, it holds by induction that $u_2\neq v_2$ (resp. $u_1\neq v_1$). Consequently, $s'_1\neq s'_2$. 
    \end{enumerate}
    \end{enumerate}
    \end{enumerate}
    \end{enumerate}  
    \qed
  \end{proof}
  
  \begin{definition}[Term Index]
        Let us define the index $\mathrm{Ind}(T)$ as the integer computed as follows:
        \begin{align*}
          \mathrm{Ind}(T)=\bigcup_{t\in T} \mathrm{Ind}(T)
        \end{align*}
        where for any term $t$,
        \begin{align*}
        \mathrm{Ind}(t)=
        \begin{cases}
            0 & \text{ if } t\in \Sigma \cup\{\varepsilon\},\\
            1 & \text{ if } t\in \Gamma\cup F_0\setminus(\Sigma\cup\{\varepsilon\}),\\
            1 + \mathrm{Ind}(t_1)+\cdots +\mathrm{Ind}(t_k) & \text{ if } t=f(t_1,\ldots,t_k)\wedge f\in F_k\setminus\{\cdot\},\\
            \mathrm{Ind}(t_1)+\mathrm{Ind}(t_2) & \text{ if } t=t_1\cdot t_2.\\
          \end{cases}
        \end{align*}
    Let us define for any two terms $t_1$ and $t_2$ the set $T'_{t_1\leftarrow t_2}$ computed as follows:
        \begin{align*}
          T'_{t_1\leftarrow t_2}=\bigcup_{t\in T} \{t'_{t_1\leftarrow t_2}\}
        \end{align*}
        where for any term $t$,
        \begin{align*}
        t_{t_1\leftarrow t_2}=
        \begin{cases}
            t_2 & \text{ if }t=t_1,\\
            f({u_1}_{t_1\leftarrow t_2},\ldots,{u_k}_{t_1\leftarrow t_2}) & \text{ if } t=f(u_1,\ldots,u_k)\wedge f\in F_k,\\
            t & \text{ otherwise.}
          \end{cases}
        \end{align*}
    Let us define for any term $t$ the depth $D(t)$ inductively computed as follows:
        \begin{align*}
        d(t)=
        \begin{cases}
            1 & \text{ if }t \in F_0\cup\Gamma,\\
            1+\mathrm{max}(d(t_1,\ldots,t_k)) & \text{ if }t=f(t_1,\ldots,t_k)\ \wedge\ f\in F_k.\\
          \end{cases}
        \end{align*}
  \end{definition}
  
    \begin{proposition}\label{prop exist inj cas binaire}
    Let $\mathcal{E}=(\Sigma,\Gamma,P,F)$ be an expression environment with $\mathrm{Card}(\Sigma)\geq 2$. Let $T$ be a finite subset of $F(\Gamma)'$.
    There exist $I$ an interpretation in $\mathrm{Int}(\mathcal{E})$ and $\mathrm{r}$ a realization in $\mathrm{Real}_\Gamma(I)$ such that the function $\mathrm{eval}_{I,r}$ is an injection of $T$ in $\Sigma^*$.
  \end{proposition}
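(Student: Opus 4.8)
The plan is to argue by well-founded induction on the measure $\mathrm{Ind}$, using the separating-word Lemma~\ref{lem exist mot separ} as the engine that peels off one non-catenation atom at a time while keeping terms pairwise distinct. To make the recursion go through cleanly I would first strengthen the statement to subterm-closed sets: I claim that for every finite $U\subseteq F(\Gamma)'$ that is closed under subterms (i.e. $U=\bigcup_{t\in U}\mathrm{Subterm}(t)$) there is a pair $(I,\mathrm r)$ such that $\mathrm{eval}_{I,\mathrm r}$ is injective on $U$; the proposition then follows by applying this to the subterm closure of $T$. In the base case $\mathrm{Ind}(U)=0$ every term of $U$ is built from $\Sigma\cup\{\varepsilon\}$ and $\cdot$ alone, so by Proposition~\ref{prop t prim norm} each such normalized term is the canonical representative of the word it denotes; since letters and $\cdot$ are forced by the definition of an expression interpretation, any $(I,\mathrm r)$ sends distinct elements of $U$ to distinct words, and injectivity is immediate.

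For the inductive step, assume $\mathrm{Ind}(U)>0$. Then $U$ contains a subterm whose root lies in $\Gamma\cup(F_0\setminus(\Sigma\cup\{\varepsilon\}))\cup(F\setminus\{\cdot\})$; choosing one, say $s$, of minimal depth guarantees that $s$ is either a variable, a non-letter nullary symbol, or an application $s=f(v_1,\dots,v_k)$ with $f\neq\cdot$ whose arguments $v_1,\dots,v_k$ are already letter–catenation terms. I would then invoke Lemma~\ref{lem exist mot separ} on $U$ to obtain a word $w$ for which substituting $s$ by $\mathrm{Term}(w)$ preserves distinctness of all pairs, let $U_1$ be the subterm closure of $U'_{s\leftarrow\mathrm{Term}(w)}$, and recurse. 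Here $\mathrm{Ind}$ strictly decreases, since the occurrences of $s$, of positive index, are replaced by the index-$0$ word $\mathrm{Term}(w)$ and normalization leaves $\mathrm{Ind}$ unchanged, while the distinctness clause of the lemma shows the substitution is injective on $U$, so no element is lost. The induction hypothesis then yields $(I_1,\mathrm r_1)$ injective on $U_1$.

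It remains to lift $(I_1,\mathrm r_1)$ to a pair that also handles $s$. The idea is to keep $(I_1,\mathrm r_1)$ and merely force $\mathrm{eval}(s)=w$: if $s=x$ set $\mathrm r(x)=w$, if $s=c$ set $\mathfrak F(c)=\{w\}$, and if $s=f(v_1,\dots,v_k)$ adjoin the tuple $(v_1,\dots,v_k,w)$ to $\mathfrak F(f)$, each $v_j$ standing for the fixed word $\mathrm{eval}(v_j)$. With this choice a routine structural induction on terms, combined with the evaluation-invariance of normalization (Proposition~\ref{prop t tprim mem eval} and its formula version Proposition~\ref{prop normalization preserv eval}), shows $\mathrm{eval}_{I,\mathrm r}(t)=\mathrm{eval}_{I_1,\mathrm r_1}\big(t'_{s\leftarrow\mathrm{Term}(w)}\big)$ for every $t\in U$, so injectivity transfers from $U_1$ back to $U$. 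Finally the partial interpretation of each $f$ is completed to a total functional one as required by Definition~\ref{def interpretation}.

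The main obstacle, and the step deserving the most care, is the functional consistency of the entries accumulated for the non-catenation function symbols, together with the requirement that adjoining $(v_1,\dots,v_k,w)$ must not disturb the values already fixed on $U_1$. This is exactly where the subterm-closed strengthening and the distinctness clause of Lemma~\ref{lem exist mot separ} pay off: maintaining as an invariant that $\mathrm{eval}$ is injective on the entire current subterm set forces any two $f$-applications sharing an argument-value tuple to be the same term, because two letter–catenation argument tuples coincide only when the arguments are equal terms, by the canonicity of normal forms (Proposition~\ref{prop t prim norm}); hence each tuple $(v_1,\dots,v_k)$ is claimed at most once and the assignment of $\mathfrak F(f)$ is unambiguous and non-interfering. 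I would also record that the hypothesis $\mathrm{Card}(\Sigma)\geq 2$ is used solely through Lemma~\ref{lem exist mot separ}, whose separating word $ab^{p}a$ needs two distinct letters; the unary case is genuinely excluded, as there catenation is commutative and no such injection need exist.
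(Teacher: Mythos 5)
Your proposal follows the same skeleton as the paper's proof: recurrence on $\mathrm{Ind}$, the identical base case for letter--catenation terms, Lemma~\ref{lem exist mot separ} to produce a separating word $w$, substitution of one ``atom'' $s$, recursion, and the same lift-back that forces $\mathrm{eval}(s)=w$ through the realization (variable case) or through $\mathfrak{F}$ (function case). Your subterm-closure strengthening is even a genuine improvement on the lift-back step: it is exactly what justifies that adjoining $(w_1,\ldots,w_k,w)$ to $\mathfrak{F}(f)$ does not disturb values already used, a non-interference point the paper's proof asserts without argument. However, there is a genuine gap in your termination claim. Reading $\mathrm{Ind}$ of a set as the sum of the indices of its elements (the reading under which the paper's recursion, which takes no closures, is well founded), substitution does lower the index of every existing element, but the subterm closure you then take adds \emph{new} elements, and these can carry large positive index: normalizing $\mathrm{Term}(w)\cdot u$ right-associates the chain, so every tail of $w$ followed by $u$ becomes a subterm. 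Concretely, let $s\in\Gamma$, let $u_i=g_i(a)$ for three distinct symbols $g_1,g_2,g_3\in F_1$, and let $t=s\cdot(u_1\cdot(u_2\cdot(u_3\cdot a)))$ with $U=\mathrm{Subterm}(t)$; then $s$ is the unique atom of minimal depth and $\mathrm{Ind}(U)=4+1+3+1+2+1+1+1+0=14$. Substituting $s\leftarrow\mathrm{Term}(aba)$ turns $t$ into $a\cdot(b\cdot(a\cdot T_1))$ with $T_1=u_1\cdot(u_2\cdot(u_3\cdot a))$, and the closure $U_1$ must contain the two new terms $b\cdot(a\cdot T_1)$ and $a\cdot T_1$, each of index $3$, so $\mathrm{Ind}(U_1)=12+3+3=18>14$. (Reading $\mathrm{Ind}$ of a set as a maximum fails as well, as soon as some maximal-index element avoids $s$.) So the induction, as you set it up, is not well founded; your justification accounts only for the substituted copies of old elements, not for the elements the closure adds.

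The gap is repairable without changing your architecture. Measure $U$ instead by the number of \emph{distinct} subterms occurring in $U$ whose root is a variable, a non-letter constant, or a function symbol other than $\cdot$ (the atoms). Substitution by $\mathrm{Term}(w)$ eliminates $s$; every other atom $f(v_1,\ldots,v_k)$ is mapped to an atom, injectively by Lemma~\ref{lem exist mot separ}; and neither the substitution, nor normalization (which only reassociates catenations and erases $\varepsilon$), nor passing to subterms creates an atom that is not the image of an old one. Hence this count drops by exactly one at each step, and it vanishes precisely in your base case. (Alternatively, the multiset of indices of the elements of $U$ decreases in the Dershowitz--Manna multiset ordering, since every genuinely new element comes from a term that contained $s$ and so has index strictly below that term's old index.) For comparison, the paper sidesteps the issue by recursing on the bare substituted set $T_{x\leftarrow w}$, where the sum-of-indices measure does decrease; the price is that its interpretation-update case leaves unaddressed the interference question you correctly isolated, so with a corrected measure your version is the more complete argument.
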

  \begin{proof}
    By recurrence over $\mathrm{Ind}(T)$.
    
    \begin{enumerate}
      \item  If $\mathrm{Ind}(T)=0$ then $T\subset(\Sigma\cup\{\cdot,\varepsilon\})(\emptyset)$. Let us show that for any two distinct terms $t_1$ and $t_2$ in $T$, for any interpretation $I$ and for any realization $r$, it holds that $\mathrm{eval}_{I,r}(t_1) \neq \mathrm{eval}_{I,r}(t_2)$.     
    By recurrence over $d(t_1)$. Let $I$ be any interpretation and $r$ be any realization.
    
    \begin{enumerate}
      \item\label{it 1a prop exist inj} Suppose that $d(t_1)=1$. Then $t_1\in\Sigma\cup\{\varepsilon\}$. Consequently, $|\mathrm{eval}_{I,r}(t_1)|\leq 1$.
    
      \begin{enumerate}
        \item If $d(t_2)=1$, since $t_1\neq t_2$, then $t_2\in\Sigma\cup\{\varepsilon\}\setminus\{t_1\}$. Hence, $\mathrm{eval}_{I,r}(t_1)\neq \mathrm{eval}_{I,r}(t_2)$.
    
        \item If $d(t_2)\neq 1$, then $t_2=\cdot(a,s_2)$ with $a\in\Sigma$ and $s_2\neq\varepsilon$. Hence $|\mathrm{eval}_{I,r}(t_2)|>1$. As a consequence,  $|\mathrm{eval}_{I,r}(t_1)|\neq |\mathrm{eval}_{I,r}(t_2)|$ and then $\mathrm{eval}_{I,r}(t_1)\neq \mathrm{eval}_{I,r}(t_2)$.
      \end{enumerate}
    
    \item  Suppose that $d(t_1)>1$. Then $t_1=\cdot(a,s_1)$ with $a\in\Sigma$.
    
      \begin{enumerate}
        \item If $d(t_2)=1$, then it is symmetrically equivalent to item~\ref{it 1a prop exist inj}.
    
        \item Suppose that $d(t_2)\neq 1$. Then $t_2=\cdot(b,s_2)$ with $b\in\Sigma$. 
    If $a\neq b$, then $a\cdot \mathrm{eval}_{I,r}(s_1)\neq b\cdot \mathrm{eval}_{I,r}(s_2)$ and then $\mathrm{eval}_{I,r}(t_1)\neq \mathrm{eval}_{I,r}(t_2)$.
    Otherwise, it holds $s_1\neq s_2$. According to recurrence hypothesis, $\mathrm{eval}_{I,r}(s_1)\neq\mathrm{eval}_{I,r}(s_2)$ and consequently $a\cdot \mathrm{eval}_{I,r}(s_1)\neq a\cdot \mathrm{eval}_{I,r}(s_2)$. Consequently, $\mathrm{eval}_{I,r}(t_1)\neq \mathrm{eval}_{I,r}(t_2)$.
      \end{enumerate}
    \end{enumerate}
    
    \item Suppose that $\mathrm{Ind}(T)\neq 0$.
    
    \begin{enumerate}
      \item Let $x$ be a symbol in $\Gamma$ such that $x$ is a subterm of a term in $T$. 
    
      \begin{enumerate}
        \item\label{it 2ai prop exist inj} According to Lemma~\ref{lem exist mot separ}, there exists $w$ in $\Sigma^*$ such that for any two terms $t_1$ and $t_2$ in $F(\Gamma)'$, it holds: $({t_1}_{x\leftarrow w})'\neq({t_2}_{x\leftarrow w})'$.

        \item It holds by recurrence hypothesis that there exists $I$ an interpretation in $\mathrm{Int}(\mathcal{E})$ and $\mathrm{r}$ a realization in $\mathrm{Real}_\Gamma(I)$ such that the function $\mathrm{eval}_{I,r}$ is an injection of $T_{x\leftarrow w}$ in $\Sigma^*$. Let us consider the realization $r'$ defined for any symbol $y$ in $\Gamma$ as follows:
        \begin{align*}
        r'(y)=
        \begin{cases}
            w & \text{ if }x=y,\\
            r(y) & \text{ otherwise}.
          \end{cases}
        \end{align*}
    Let us show that $\mathrm{eval}_{I,r}$ is an injection of $T$ in $\Sigma^*$. 
    
    Let $t_1$ and $t_2$ be two terms in $T$. According to Item~\ref{it 2ai prop exist inj}, $s'_1=({t_1}_{x\leftarrow w})'\neq ({t_2}_{x\leftarrow w})'=s'_2$. By definition of $r'$, $\mathrm{eval}_{I,r'}(s'_1)\neq \mathrm{eval}_{I,r'}(s'_2)$. Since by construction of $r'$, $\mathrm{eval}_{I,r'}(t_1)= \mathrm{eval}_{I,r'}(s'_1)$ and since $\mathrm{eval}_{I,r'}(t_2)= \mathrm{eval}_{I,r'}(s'_2)$, it holds that $\mathrm{eval}_{I,r'}(t_1)\neq \mathrm{eval}_{I,r'}(t_2)$.
    \end{enumerate}
    \item Suppose that there is no subterm of a term in $T$ that belongs to $\Gamma$. Let $t=f(t_1,\ldots,t_k)$ be a subterm in a term in $T$ such that $t_1,\ldots,t_k$ are $k$ terms in $(\Sigma\cup\{\cdot,\varepsilon\})(\emptyset)$.  
    
    \begin{enumerate}
      \item\label{it 2bi prop exist inj} According to Lemma~\ref{lem exist mot separ}, there exists $w$ in $\Sigma^*$ such that for any two terms $t_1$ and $t_2$ in $F(\Gamma)'$, it holds that: $({t_1}_{t\leftarrow w})'\neq({t_2}_{t\leftarrow w})'$. 
        
      \item It holds by recurrence hypothesis that there exists $I=(\Sigma^*,\mathfrak{F})$ an interpretation in $\mathrm{Int}(\mathcal{E})$ and $\mathrm{r}$ a realization in $\mathrm{Real}_\Gamma(I)$ such that the function $\mathrm{eval}_{I,r}$ is an injection of $T_{f(t_1,\ldots,t_k)\leftarrow w}$ in $\Sigma^*$. 
    Let us denote by $w_j$ the word $\mathrm{eval}_{I,r}(t_j)$ for any integer $j$ in $\{1,\ldots,k\}$.
    Let us consider the interpretation $I'=(\Sigma^*,\mathfrak{F}')$ defined as follows:
    \begin{enumerate}
      \item for any predicate symbol $P$ in $\mathcal{P}$, $\mathfrak{F}(P)=\mathfrak{F}'(P)$,
      \item for any function symbol $g$ in $\mathcal{F}\setminus\{f\}$, $\mathfrak{F}(g)=\mathfrak{F}'(g)$,
      \item for any $k$ word $u_1,\ldots,u_k$ in $\Sigma^*$:
        \begin{align*}
        u_1,\ldots,u_k,u_{k+1}\in \mathfrak{F}'(f) \Leftrightarrow u_1,\ldots,u_k,u_{k+1}\in \mathfrak{F}(f)\wedge (u_1,\ldots,u_k)\neq(w_1,\ldots,w_k)
        \end{align*}    
    \item $w_1,\ldots,w_k,w\in \mathfrak{F}'(f)$
  \end{enumerate}
    
    Let us show that $\mathrm{eval}_{I',r}$ is an injection of $T$ in $\Sigma^*$. 
    
        Let $t_1$ and $t_2$ be two terms in $T$. According to Item~\ref{it 2bi prop exist inj}, $s'_1=({t_1}_{t\leftarrow w})'\neq ({t_2}_{t\leftarrow w})'=s'_2$. By definition of $I'$, $\mathrm{eval}_{I',r}(s'_1)\neq \mathrm{eval}_{I',r}(s'_2)$. Since by construction of $I'$, $\mathrm{eval}_{I',r}(t_1)= \mathrm{eval}_{I',r}(s'_1)$ and since $\mathrm{eval}_{I',r}(t_2)= \mathrm{eval}_{I',r}(s'_2)$, it holds that $\mathrm{eval}_{I',r}(t_1)\neq \mathrm{eval}_{I',r}(t_2)$.
        \end{enumerate}
        \end{enumerate}
    \end{enumerate}
    \qed
  \end{proof}

  \begin{proposition}
  There exists an expression environment $\mathcal{E}=(\Sigma,\Gamma,P,F)$ with $\mathrm{Card}(\Sigma)=1$ and a finite subset $T$ of $F(\Gamma)'$ such that for any interpretation $I$ in $\mathrm{Int}(\mathcal{E})$, for any realization $\mathrm{r}$ in $\mathrm{Real}_\Gamma(I)$, $\mathrm{eval}_{I,r}$ is not an injection of $T$ in $\Sigma^*$.
  \end{proposition}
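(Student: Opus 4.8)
The plan is to exploit the fact that over a unary alphabet the catenation operation becomes commutative, which is precisely the feature that Proposition~\ref{prop exist inj cas binaire} required $\mathrm{Card}(\Sigma)\geq 2$ to rule out. First I would fix the environment $\mathcal{E}=(\Sigma,\Gamma,P,F)$ with $\Sigma=\{a\}$, $\Gamma=\{x,y\}$, $F_0=\Sigma\cup\{\varepsilon\}$ and $F_2=\{\cdot\}$ (the family $P$ may be taken arbitrary, even empty), and take as witness the set $T=\{x\cdot y,\ y\cdot x\}$. The first thing to check is that $T$ is indeed a finite subset of $F(\Gamma)'$: both $x\cdot y$ and $y\cdot x$ are normalized, since in each case neither child of the unique concatenation node equals $\varepsilon$ and the left child is a variable, hence not itself a concatenation node. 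Moreover, as syntax trees $x\cdot y$ and $y\cdot x$ are two \emph{distinct} terms because $x\neq y$.

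The core of the argument is then the commutativity of catenation on a one-letter alphabet. Let $I=(\mathfrak{D},\mathfrak{F})$ be any expression interpretation over $\mathcal{E}$ and $\mathrm{r}$ any $\Gamma$-realization over $I$. By definition of an expression interpretation we are \emph{forced} to have $\mathfrak{D}=\Sigma^*=\{a\}^*$ and $\mathfrak{F}(\cdot)=\{(u,v,w)\mid w=uv\}$, so there is no freedom left in the interpretation of $\cdot$. Since $\mathrm{r}(x)$ and $\mathrm{r}(y)$ are words of $\{a\}^*$, say $\mathrm{r}(x)=a^m$ and $\mathrm{r}(y)=a^n$, we obtain $\mathrm{eval}_{I,\mathrm{r}}(x\cdot y)=\mathrm{r}(x)\cdot\mathrm{r}(y)=a^{m+n}=a^{n+m}=\mathrm{r}(y)\cdot\mathrm{r}(x)=\mathrm{eval}_{I,\mathrm{r}}(y\cdot x)$. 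Thus two distinct terms of $T$ receive the same evaluation, so $\mathrm{eval}_{I,\mathrm{r}}$ is not an injection of $T$ in $\Sigma^*$; as $I$ and $\mathrm{r}$ were arbitrary, the proposition follows.

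There is essentially no computational difficulty here; the points that require care are conceptual rather than technical. The main one is to make explicit that the defining constraints on an expression interpretation pin down $\mathfrak{F}(\cdot)$ to be ordinary word catenation, so one cannot escape the conclusion by reinterpreting $\cdot$ as some non-commutative operation on $\{a\}^*$. This is exactly what separates the present statement from Proposition~\ref{prop exist inj cas binaire}, where the extra letter allowed a separating word of the form $ab^pa$ to be built. The second, minor, point is to confirm that $x\cdot y$ and $y\cdot x$ are genuinely normalized, so that the witness set legitimately lies in $F(\Gamma)'$ and the statement is not vacuously dodged by choosing terms outside the normalized ones.
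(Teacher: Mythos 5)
Your proof is correct and follows essentially the same route as the paper: the same witness set $T=\{x\cdot y,\,y\cdot x\}$, the same observation that both terms are normalized, and the same appeal to commutativity of catenation over a unary alphabet. Your version merely spells out what the paper leaves implicit, namely that the definition of an expression interpretation forces $\mathfrak{F}(\cdot)$ to be ordinary catenation on $\{a\}^*$, so no interpretation can escape the collision.
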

  \begin{proof}
    Let $\Gamma=\{x,y\}$ and $T=\{t_1=x\cdot y,t_2=y\cdot x\}$.
    By definition, both $t_1$ and $t_2$ are normalized.
    Since the catenation product is commutative for unary alphabets, for any interpretation $I$ in $\mathrm{Int}(\mathcal{E})$, for any realization $\mathrm{r}$ in $\mathrm{Real}_\Gamma(I)$, $\mathrm{eval}_{I,r}(t_1)=\mathrm{eval}_{I,r}(t_2)$, and therefore $\mathrm{eval}_{I,r}(t_1)$ is not an injection of $T$ in $\Sigma^*$.
  \qed
  \end{proof}
  
  \begin{example}
    Let us consider the terms $t_1=\cdot(a,g(a,baxc))$ and $v_1=f(a,\cdot(a,baxc))$ and their factors (Example~\ref{ex factor term}).
      \begin{itemize}
         \item The factors of $t_1$ are $F_{v_1}=\{a,b,ba,c\}$.
         \item The factors of $v_1$ are $F_{v_2}=\{a,ab,aba,b,ba,c\}$. 
     \end{itemize}
     Consider the word $w=ab^2a$ which is not in $F_{v_1}\cup F_{v_2}$. Consider a realization $\mathrm{r}$ associating $w_1$ with $x$. Let us substitute $x$ with $w_1$ in $t_1$ and $v_1$:
        \begin{align*}
          t_2&={t_1}_{x\leftarrow w_1}'=\cdot(a,g(a,baab^2ac)),\\
           v_2&={v_1}_{x\leftarrow w_1}'=f(a,abaab^2ac).
        \end{align*}
        The word $w_2=ab^3a$ is neither a factor of $t_2$ nor of $v_2$. Consider an interpretation $I=(\Sigma^*,\mathfrak{F})$ where $(a,abaab^2ac,ab^3a)\in \mathfrak{F}(f)$. Let us substitute $f(a,abaab^2ac)$ with $w_2$ in $t_2$ and $v_2$:
        \begin{align*}
          t_3&={t_2}_{f(a,abaab^2ac) \leftarrow w_2}'\\
          &=\cdot(a,g(a,baab^2ac))\\
          v_3&={v_2}_{f(a,abaab^2ac)\leftarrow w_2}'\\
          &=ab^3a
        \end{align*}
        The word $w_3=ab^4a$ is neither a factor of $t_3$ nor of $v_3$. Consider that the interpretation $I=(\Sigma^*,\mathfrak{F})$ satisfies $(a,baab^2ac,ab^4a)\in \mathfrak{F}(g)$. Let us substitute $g(a,baab^2ac)$ with $w_3$ in $t_3$ and $v_3$:
        \begin{align*}
          t_4&={t_3}_{g(a,baab^2ac) \leftarrow w_3}'\\
          &=aab^4a\\
          v_4&={v_3}_{g(a,baab^2ac)\leftarrow w_3}'\\
          &=ab^3a
        \end{align*}
        Hence, since $\mathrm{eval}_{I,\mathrm{r}}(t_1)=aab^4a$ and $\mathrm{eval}_{I,\mathrm{r}}(v_1)=ab^3a$ are distinct, the function $\mathrm{eval}_{I,\mathrm{r}}$ is an injection of  $\{t_1,v_1\}$ in $\Sigma^*$.     
    \qed
  \end{example}
  
  In conclusion, the following corollary holds from Proposition~\ref{prop exist inj cas binaire}, Proposition~\ref{prop normalization preserv eval} and Corollary~\ref{cor tphi sat phi aussi}:
  
  \begin{corollary}
    Let $\mathcal{E}=(\Sigma,\Gamma,P,F)$ be an expression environment such that $\mathrm{Card}(\Sigma)\geq 2$. Let $\phi$ be a boolean formula in $F(\Gamma)$. Then the  following two conditions are equivalent:
    \begin{itemize}
      \item $\phi$ is satisfiable,
      %\item $\phi'$ is satisfiable,
      \item $T(\phi')$ is satisfiable.
    \end{itemize}
  \end{corollary}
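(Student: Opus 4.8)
The plan is to chain the three cited results, with the normalization of $\phi$ serving as the bridge that guarantees the existence of an injection. The whole argument is a short equivalence chain, since the substantive work has already been carried out in Proposition~\ref{prop exist inj cas binaire}, Proposition~\ref{prop normalization preserv eval}, and Corollary~\ref{cor tphi sat phi aussi}.

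First I would reduce the statement about $\phi$ to the corresponding statement about its normalized form $\phi'$. By Proposition~\ref{prop normalization preserv eval}, for every interpretation $I$ in $\mathrm{Int}(\mathcal{E})$ and every realization $\mathrm{r}$ in $\mathrm{Real}_\Gamma(I)$ it holds that $\mathrm{eval}_{(I,\mathrm{r})}(\phi)=\mathrm{eval}_{(I,\mathrm{r})}(\phi')$. Hence $\phi$ and $\phi'$ admit exactly the same satisfying pairs $(I,\mathrm{r})$, so $\phi$ is satisfiable if and only if $\phi'$ is. This reduces the corollary to proving that $\phi'$ is satisfiable if and only if $T(\phi')$ is.

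Next I would produce an injection for the terms of the normalized formula. Since $\phi'$ is normalized, every term occurring in it is normalized, that is, $\mathrm{Term}(\phi')$ is a finite subset of $F(\Gamma)'$. Because $\mathrm{Card}(\Sigma)\geq 2$, Proposition~\ref{prop exist inj cas binaire} applied to $T=\mathrm{Term}(\phi')$ yields an interpretation $I$ and a realization $\mathrm{r}$ such that $\mathrm{eval}_{I,\mathrm{r}}$ is an injection of $\mathrm{Term}(\phi')$ into $\Sigma^*$. This is precisely the hypothesis required by Corollary~\ref{cor tphi sat phi aussi}, which, applied to $\phi'$, gives that $T(\phi')$ is satisfiable if and only if $\phi'$ is. Combining the two equivalences, $\phi$ is satisfiable $\Leftrightarrow$ $\phi'$ is satisfiable $\Leftrightarrow$ $T(\phi')$ is satisfiable, which is the claim.

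The only delicate point — and the reason the normalization step cannot be skipped — is that one must apply Corollary~\ref{cor tphi sat phi aussi} to $\phi'$ rather than to $\phi$. For a non-normalized $\phi$, the set $\mathrm{Term}(\phi)$ need not admit an injection: the associativity of $\cdot$ forces terms such as $(x\cdot y)\cdot z$ and $x\cdot(y\cdot z)$ to coevaluate under every expression interpretation (the $\phi_2$ phenomenon of Example~\ref{ex inj form}), so $T(\phi)$ could be satisfiable while $\phi$ is a genuine expression contradiction. Normalizing first collapses exactly these forced coincidences, after which Proposition~\ref{prop exist inj cas binaire} is free to build an evaluation separating all the remaining distinct normalized terms. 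The hypothesis $\mathrm{Card}(\Sigma)\geq 2$ is what makes this separation possible, since over a unary alphabet catenation becomes commutative and no injection need exist.
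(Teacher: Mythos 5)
Your proposal is correct and follows exactly the route the paper intends: the paper states this corollary as an immediate consequence of Proposition~\ref{prop exist inj cas binaire}, Proposition~\ref{prop normalization preserv eval} and Corollary~\ref{cor tphi sat phi aussi}, and your chain ($\phi$ satisfiable $\Leftrightarrow$ $\phi'$ satisfiable by normalization-preservation, then $\phi'$ satisfiable $\Leftrightarrow$ $T(\phi')$ satisfiable via the injection supplied by Proposition~\ref{prop exist inj cas binaire} applied to the finite set $\mathrm{Term}(\phi')\subset F(\Gamma)'$) is precisely that combination, including the correct observation that the injection hypothesis of Corollary~\ref{cor tphi sat phi aussi} must be met on $\phi'$ rather than $\phi$.
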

  
  \begin{corollary}
    Let $\mathcal{E}=(\Sigma,\Gamma,\mathcal{P},\mathcal{F})$ be an expression environment and let $E$ be a constrained expression over $\mathcal{E}$. Then the boolean $\mathrm{Null}(E)$ can be computed.
  \end{corollary}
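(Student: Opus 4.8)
The plan is to turn the computation of $\mathrm{Null}(E)=(\varepsilon\in L(E))$ into finitely many satisfiability tests, each of which is already known to be decidable.

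First I would observe that $\mathcal{S}^{\varepsilon}(E)$ is a finite set and is effectively computable. This follows by a routine induction on the structure of $E$: the base cases $\mathcal{S}^{\varepsilon}(\alpha)$, $\mathcal{S}^{\varepsilon}(\emptyset)$ and $\mathcal{S}^{\varepsilon}(E_1^*)$ each yield $\emptyset$ or a singleton, while the clauses for $+$, $\cdot$, $\dashv$ and $\mid$ build a new finite set from finite sets (the operation $\otimes$ and the union range over finite index sets, and the $\mid$-clause merely reindexes $\mathcal{S}^{\varepsilon}(E_1)$). Hence $\mathcal{S}^{\varepsilon}(E)$ contains only finitely many pairs $(X,\phi)$, and they can all be listed.

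Next, by Corollary~\ref{cor nullE lien sat}, $\mathrm{Null}(E)=1$ holds exactly when there is a pair $(X,\phi)$ in $\mathcal{S}^{\varepsilon}(E)$, an interpretation $I\in\mathrm{Int}(\mathcal{E})$ and a realization $\mathrm{r}\in\mathrm{Real}_\Gamma(I)$ with $\mathrm{eval}_{(I,\mathrm{r})}(\phi)=1$; that is, exactly when at least one of the finitely many formulae $\phi$ occurring as a second component in $\mathcal{S}^{\varepsilon}(E)$ is satisfiable. The decision procedure is therefore to enumerate the pairs $(X,\phi)\in\mathcal{S}^{\varepsilon}(E)$, test each $\phi$ for satisfiability, return $1$ as soon as a satisfiable formula is found, and return $0$ otherwise.

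It remains to decide, for a single $\phi$, whether some $(I,\mathrm{r})$ satisfies it, which is precisely Theorem~\ref{thm sat int exp avc rien est deci}. When $\mathrm{Card}(\Sigma)\geq 2$ this is carried out by normalising $\phi$ to $\phi'$ (which preserves $\mathrm{eval}_{(I,\mathrm{r})}$ for every $(I,\mathrm{r})$ by Proposition~\ref{prop normalization preserv eval}, hence preserves satisfiability), reducing via the equivalence of the preceding corollary to the satisfiability of the propositional formula $T(\phi')$, and finally deciding the latter by a truth table over the finite propositional alphabet $\mathcal{P}'(\phi')$. The main subtlety to flag is the hypothesis $\mathrm{Card}(\Sigma)\geq 2$: the injection built in Proposition~\ref{prop exist inj cas binaire}, on which the equivalence rests, genuinely requires a non-unary alphabet, so the unary case $\mathrm{Card}(\Sigma)=1$ (where catenation becomes commutative) must be treated separately before the procedure is complete. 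Apart from this point, the argument is a direct assembly of the results already established in this section.
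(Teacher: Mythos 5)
Your proof is correct and follows essentially the same route as the paper, which presents this corollary as a direct assembly of Corollary~\ref{cor nullE lien sat} (reducing $\mathrm{Null}(E)$ to the satisfiability of one of the finitely many formulae occurring in $\mathcal{S}^{\varepsilon}(E)$) and Theorem~\ref{thm sat int exp avc rien est deci} (decidability of that satisfiability via normalization, propositionalisation and the injection argument). Your closing caveat about $\mathrm{Card}(\Sigma)\geq 2$ is well taken: the supporting results (Proposition~\ref{prop exist inj cas binaire} and the corollary preceding this one) are only proved for non-unary alphabets, and the paper itself defers the unary case to future work, so flagging that restriction makes your argument, if anything, more careful than the paper's own presentation.
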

  
  \begin{corollary}
    Let $\mathcal{E}=(\Sigma,\Gamma,\mathcal{P},\mathcal{F})$ be an expression environment, $E$ be a constrained expression over $\mathcal{E}$ and $w$ be a word in $\Sigma^*$. The membership test of $w$ in $L(E)$ is decidable.
  \end{corollary}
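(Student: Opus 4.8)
The plan is to reduce the membership test for an arbitrary word to the computability of the $\mathrm{Null}$ predicate, which the preceding corollary already guarantees. First I would split on whether $w$ is empty. If $w=\varepsilon$, then by definition $w\in L(E)$ is exactly $\mathrm{Null}(E)$, and the previous corollary asserts that this boolean can be computed; so this case is immediate.

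For the case $w\in\Sigma^+$, I would invoke the corollary characterising membership through the word derivative, namely that $w\in L(E)$ holds if and only if there exists a tuple $(E',X)\in\frac{\partial}{\partial_w}(E)$ with $\varepsilon\in L(E')$, that is, with $\mathrm{Null}(E')=1$. Thus the whole test reduces to computing $\frac{\partial}{\partial_w}(E)$ and evaluating $\mathrm{Null}$ on each of its finitely many first components.

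The two remaining points to establish are that $\frac{\partial}{\partial_w}(E)$ is a finite set and that it is effectively computable. For finiteness I would argue by induction: for a single symbol $a$, every inductive clause of Definition~\ref{def deriv expr cont exp} builds a finite set out of finitely many finite sets, the only non-obvious case being the derivative of a word $\alpha=x\alpha'$, where the recursive call in Definition~\ref{def deriv expr cont mot} is taken on the strictly shorter word $(\alpha')_{(x,\varepsilon)}$ and therefore terminates. The extension to $w=aw'$ is then the finite union $\bigcup_{(E',X)\in\frac{\partial}{\partial_a}(E)}\frac{\partial}{\partial_{w'}}(E')$ of finite sets, so $\frac{\partial}{\partial_w}(E)$ stays finite for every fixed $w$; the same induction shows each such set is effectively computable from $E$ and $w$.

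Finally, since $\frac{\partial}{\partial_w}(E)$ is a finite, computable set of tuples, and since $\mathrm{Null}(E')$ is computable for each component $E'$ by the preceding corollary, one can decide whether some $(E',X)$ satisfies $\mathrm{Null}(E')=1$ by exhaustively testing all of them. The genuine difficulty has already been discharged earlier: it is the computability of $\mathrm{Null}$, which rests on Corollary~\ref{cor nullE lien sat} reducing it to a satisfiability problem and on the decidability of that problem (Theorem~\ref{thm sat int exp avc rien est deci}, \emph{via} propositionalisation and normalization). Relative to those results the present corollary is a routine assembly, so I expect no substantial obstacle beyond making the finiteness and effectiveness of the derivative explicit.
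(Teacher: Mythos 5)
Your proposal is correct and matches the paper's intended argument: the paper states this corollary without proof, as an immediate assembly of the word-derivative characterisation ($w\in L(E)\Leftrightarrow\exists(E',X)\in\frac{\partial}{\partial_w}(E),\ \varepsilon\in L(E')$) with the computability of $\mathrm{Null}$, which is exactly the reduction you perform. Your explicit treatment of the $w=\varepsilon$ case and of the finiteness and effectiveness of $\frac{\partial}{\partial_w}(E)$ only spells out details the paper leaves implicit.
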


\section{Conclusion and Future Work}

  In this paper, we have extended the expressive power of regular expressions by the addition of two new operators involving the zeroth order boolean formulae leading to the notion of constrained expressions. We have presented a method in order to solve the membership problem in the general case where the interpretation is not fixed and when the alphabet is not unary.
  
  An interesting continuation would be to consider the case of unary alphabets by extending the normalization defined in Subsection~\ref{ssec normalization alph non unaire} with the commutativity of the catenation; indeed, as far as a unary alphabet is considered, two words commute. Hence, any term has to be sorted according to an order (\emph{e.g.} the lexicographic order). We conjecture that Proposition~\ref{prop exist inj cas binaire} still holds for unary case, considering the word $a^p$ with $p>p'$ for any $a^{p'}$ in a term in $T$ instead of $ab^pa$ (see proof of Lemma~\ref{lem exist mot separ}).
  
   We have also shown that the membership problem can be undecidable when the interpretation is fixed. However, we can express a sufficient condition for the membership test to be decidable: whenever the interpretation $I$ is fixed, if it can be decided if a formula $\phi$ is satisfiable (\emph{e.g.} if there exists a realization $r$ such that $\mathrm{eval}_{I,r}(\phi)=1$), then (and trivially) the membership problem can be solved.
   Let us consider the following definition.
   
  \begin{definition}
    Let $\mathcal{E}=(\Sigma,\Gamma,\mathcal{P},\mathcal{F})$ be an expression environment, and $I$ be an expression interpretation over $\mathcal{E}$.
    The interpretation $I$ is \emph{decidable} if for any boolean formula $\phi$ in $\mathcal{P}(\mathcal{F}(\Gamma))$, the existence of a realization $\mathrm{r}$ such that $\mathrm{eval}_{I,\mathrm{r}}(\phi)=1$ is decidable.
  \end{definition}

  According to the previous definition, another perspective is to restrain the $\mid$ operator and the denoted language 
  in order to embed a decidable interpretation in the predicate, yielding the notion of a decidable constrained expression.   
   As an example, the predicate of length equality is decidable, and the membership of an expression using it is decidable (\emph{e.g.} $((x\dashv a^*)(y\dashv b^*)(z\dashv c^*))\mid |x|=|y|=|z|$ is such an expression).
  It is an open question to determine if the decidability of an interpretation is decidable, and how it can be characterized.
  
%\bibliography{biblio} 
\bibliography{D:/DocsSyncro/Recherche/Bibliographie/biblio}

\begin{thebibliography}{10}

\bibitem{ABM12}
Almeida, R., Broda, S., Moreira, N.:
\newblock Deciding {KAT} and {H}oare logic with derivatives.
\newblock In Faella, M., Murano, A., eds.: GandALF. Volume~96 of EPTCS (2012)
  127--140

\bibitem{Ant96}
Antimirov, V.:
\newblock Partial derivatives of regular expressions and finite automaton
  constructions.
\newblock Theoret. Comput. Sci. \textbf{155} (1996)  291--319

\bibitem{Brz64}
Brzozowski, J.A.:
\newblock Derivatives of regular expressions.
\newblock J. Assoc. Comput. Mach. \textbf{11}(4) (1964)  481--494

\bibitem{Brz68}
Brzozowski, J.A.:
\newblock Regular-like expressions for some irregular languages.
\newblock In: SWAT (FOCS), IEEE Computer Society (1968)  278--286

\bibitem{CSY03}
C{\^{a}}mpeanu, C., Salomaa, K., Yu, S.:
\newblock A formal study of practical regular expressions.
\newblock Int. J. Found. Comput. Sci. \textbf{14}(6) (2003)  1007--1018

\bibitem{CCM11b}
Caron, P., Champarnaud, J.M., Mignot, L.:
\newblock Partial derivatives of an extended regular expression.
\newblock In Dediu, A.H., Inenaga, S., Mart\'{\i}n-Vide, C., eds.: LATA. Volume
  6638 of Lecture Notes in Computer Science, Springer (2011)  179--191

\bibitem{CCM12c}
Caron, P., Champarnaud, J.M., Mignot, L.:
\newblock Multi-tilde-bar derivatives.
\newblock In Moreira, N., Reis, R., eds.: CIAA. Volume 7381 of Lecture Notes in
  Computer Science, Springer (2012)  321--328

\bibitem{CCM14}
Caron, P., Champarnaud, J., Mignot, L.:
\newblock A general framework for the derivation of regular expressions.
\newblock {RAIRO} - Theor. Inf. and Applic. \textbf{48}(3) (2014)  281--305

\bibitem{CDJM13}
Champarnaud, J.M., Dubernard, J.P., Jeanne, H., Mignot, L.:
\newblock Two-sided derivatives for regular expressions and for hairpin
  expressions.
\newblock In Dediu, A.H., Mart\'{\i}n-Vide, C., Truthe, B., eds.: LATA. Volume
  7810 of Lecture Notes in Computer Science, Springer (2013)  202--213

\bibitem{CJM13}
Champarnaud, J.M., Jeanne, H., Mignot, L.:
\newblock Derivatives of approximate regular expressions.
\newblock Discrete Mathematics {\&} Theoretical Computer Science \textbf{15}(2)
  (2013)  95--120

\bibitem{Cho56}
Chomsky, N.:
\newblock Three models for the description of language.
\newblock IRE Trans. on Information Theory \textbf{2}(3) (1956)  113--124

\bibitem{Glu60}
Glushkov, V.M.:
\newblock On a synthesis algorithm for abstract automata.
\newblock Ukr. Matem. Zhurnal \textbf{12}(2) (1960)  147--156 In Russian.

\bibitem{GH14}
Gruber, H., Holzer, M.:
\newblock From finite automata to regular expressions and back--a summary on
  descriptional complexity.
\newblock EPTCS: AFL14 \textbf{51} (2014)  25--48

\bibitem{IY03}
Ilie, L., Yu, S.:
\newblock Follow automata.
\newblock Inf. Comput. \textbf{186}(1) (2003)  140--162

\bibitem{My93}
Matiyasevich, Y.V.:
\newblock Hilbert's Tenth Problem.
\newblock MIT Press Series in the Foundations of Computing. MIT Press,
  Cambridge, Massachusetts (1993) With a foreword by Martin Davis.

\bibitem{MY60}
McNaughton, R.F., Yamada, H.:
\newblock Regular expressions and state graphs for automata.
\newblock IEEE Transactions on Electronic Computers \textbf{9} (March 1960)
  39--57

\bibitem{MDS11}
Might, M., Darais, D., Spiewak, D.:
\newblock Parsing with derivatives: a functional pearl.
\newblock In Chakravarty, M.M.T., Hu, Z., Danvy, O., eds.: ICFP, ACM (2011)
  189--195

\bibitem{Semp00}
Sempere, J.M.:
\newblock On a class of regular-like expressions for linear languages.
\newblock Journal of Automata, Languages and Combinatorics \textbf{5}(3) (2000)
   343--354

\bibitem{Sko67}
Skolem, T.:
\newblock The foundations of elementary arithmetic established by means of the
  recursive mode of thought, without the use of apparent variables ranging over
  infinite domains.
\newblock In Heijenoort, J., ed.: From Frege to G{\"o}del.
\newblock Harvard, Iuniverse (1967)  302--333

\bibitem{SL12}
Sulzmann, M., Lu, K.Z.M.:
\newblock Regular expression sub-matching using partial derivatives.
\newblock In Schreye, D.D., Janssens, G., King, A., eds.: Principles and
  Practice of Declarative Programming, PPDP'12, Leuven, Belgium - September 19
  - 21, 2012, {ACM} (2012)  79--90

\bibitem{Tho68}
Thompson, K.:
\newblock Regular expression search algorithm.
\newblock Comm. ACM \textbf{11}(6) (1968)  419--422

\end{thebibliography}

\end{document}